\documentclass{article}

\usepackage[utf8]{inputenc}

\usepackage{amsmath}
\usepackage[noblocks]{authblk}
\usepackage{natbib}
\usepackage{hyperref} 
\usepackage{amssymb}
\usepackage{verbatim}
\usepackage{amsthm}

\usepackage{algorithm}
\usepackage[noend]{algorithmic}
\usepackage{accents}
\usepackage{subcaption}

\newtheorem*{theorem*}{Theorem}
\newtheorem{proposition}{Proposition}[section]
\newtheorem{corollary}{Corollary}[section]

\newtheorem{lemma}{Lemma}

\newcommand{\ubar}[1]{\underaccent{\bar}{#1}}

\DeclareMathOperator{\argmin}{argmin}
\DeclareMathOperator{\argmax}{argmax}
\DeclareMathOperator{\SEP}{SEP}
\DeclareMathOperator{\DISP}{DISP}
\DeclareMathOperator{\DBCVI}{DBCVI}

\DeclareMathOperator{\cut}{cut}

\DeclareMathOperator{\evaluateCut}{evaluateCut}

\usepackage[left=2.5cm,top=3cm,right=2.5cm,bottom=3cm,bindingoffset=0.5cm]{geometry}

\usepackage{graphicx} 
\usepackage{caption}
\usepackage{subcaption}
\usepackage{booktabs}

\usepackage{algorithm}
\usepackage[noend]{algorithmic} 

\newtheorem{definition}{Definition}[section]

\newtheorem{theorem}{Theorem}

\begin{document}

\title{Graph-based Clustering under Differential Privacy}

\author[1,2]{Rafael Pinot \footnote{rafael.pinot@cea.fr}}
\author[1,2]{Anne Morvan \footnote{anne.morvan@cea.fr. Partly supported by the \textit{Direction G\'en\'erale de l'Armement} (French Ministry of Defense).}}
\affil[1]{CEA, LIST, 91191 Gif-sur-Yvette, France}
\affil[2]{Universit\'e Paris-Dauphine, PSL Research University, CNRS, LAMSADE, 75016 Paris, France}
\author[2]{Florian Yger}
\author[1]{C\'edric Gouy-Pailler}
\author[2]{Jamal Atif}

\date{March 10, 2018}

\maketitle

\begin{abstract}
In this paper, we present the first differentially private clustering method for arbitrary-shaped node clusters in a graph. This algorithm takes as input only an approximate Minimum Spanning Tree (MST) $\mathcal{T}$ released under weight differential privacy constraints from the graph. Then, the underlying nonconvex clustering partition is successfully recovered from cutting optimal cuts on $\mathcal{T}$. As opposed to existing methods, our algorithm is theoretically well-motivated. Experiments support our theoretical findings.
\end{abstract}

\section{Introduction}
Weighted graph data is known to be a useful representation data type in many fields, such as bioinformatics or analysis of social, computer and information networks. More generally, a graph can always be built based on the data dissimilarity where points of the dataset are the vertices and weighted edges express ``distances" between those objects. For both cases, graph clustering is one of the key tools for understanding the underlying structure in the graph~\citep{SCHAEFFER200727}. These clusters can be seen as groups of nodes close in terms of some specific similarity.

Nevertheless, it is critical that the data representation used in machine learning applications protects the private characteristics contained into it. Let us consider an application where one wants to identify groups of similar web pages in the sense of traffic volume {\em i.e.} web pages with similar audience. In that case, the nodes stand for the websites. The link between two vertices represents the fact that some people consult them both. Edge weights are the number of common users and thus, carry sensitive information about individuals. During any graph data analysis, no private user surfing behavior should be breached {\em i.e.} browsing from one page to another should remain private. 
As a standard for data privacy preservation, differential privacy~\citep{Dwork_2006} has been designed: an algorithm is differentially private if, given two close databases, it produces statistically indistinguishable outputs. Since then, its definition has been extended to weighted graphs. Though, machine learning applications ensuring data privacy remain rare, in particular for clustering which encounters severe theoretical and practical limitations. Indeed, some clustering methods lack of theoretical support and most of them restrict the data distribution to convex-shaped clusters~\citep{Nissim_2007,Blum:2008,McSherry_2009,Dwork:2011} or unstructured data~\citep{Ho:2013,Chen:2015}.
Hence, the aim of this paper is to offer a theoretically motivated private graph clustering. Moreover, to the best of our knowledge, this is the first weight differentially-private clustering algorithm able to detect clusters with an arbitrary shape for weighted graph data. 

Our method belongs to the family of Minimum Spanning Tree (MST)-based approaches. An MST represents a useful summary of the graph, and appears to be a natural object to describe it at a lower cost. For clustering purposes, it has the appealing property to help retrieving non-convex shapes~\citep{Zahn1971,Asano1988,MSTBasedClusteringAlgosICTAI2006, MorvanCGA17}. Moreover, they appear to be well-suited for incorporating privacy constraints as will be formally proved in this work.

\textbf{Contributions:} Our contributions are threefold: 1)  we provide the first theoretical justifications of MST-based clustering algorithms. 2) We endow \textsc{DBMSTClu} algorithm~\citep{MorvanCGA17}, an MST-based clustering algorithm  from the literature, with theoretical guarantees. 3)  We introduce a differentially-private version of \textsc{DBMSTClu}  and give several results on its privacy/utility tradeoff.

\section{Preliminaries}

\subsection{Notations} \label{sec:notations}
Let $\mathcal{G}=(V,E,w)$ be a simple undirected weighted graph with a vertex set $V$, an edge set $E$, and a weight function $w:= E \rightarrow \mathbb{R}$. 
One will respectively call the edge set and the node set of a graph $\mathcal{G}$ using the applications $E(\mathcal{G})$ and $V(\mathcal{G})$.
Given a node set $S\subset V$, one denotes by $\mathcal{G}_{|S}$ the subgraph induced by $S$.
We call $G=(V,E)$ the topology of the graph, and $\mathcal{W}_{E}$ denotes the set of all possible weight functions mapping $E$ to weights in $\mathbb{R}$. For the remaining of this work, cursive letter are use to represent weighted graphs and straight letters refer to topological arguments.
Since graphs are simple, the path $\mathcal{P}_{u-v}$ between two vertices $u$ and $v$ is characterized as the ordered sequence of vertices $\{u, \ldots, v  \}$.
We also denote $V_{\mathcal{P}_{u-v}}$ the unordered set of such vertices.
Besides, edges $e_{ij}$ denote an edge between nodes $i$ and $j$. 
Finally, for all positive integer $K$, $[K] := \{1, \ldots, K \}$. 

\subsection{Differential privacy in graphs}
As opposed to node-differential privacy~\citep{Kasiviswanathan2013} and edge-differential privacy~\citep{Hay2009}, both based on the graph topology, the privacy framework considered here is weight-differential privacy where the graph topology $G= (V,E)$ is assumed to be public and the private information to protect is the weight function $w:= E \rightarrow \mathbb{R}$.
Under this model introduced by~\cite{Sealfon_2016}, two graphs are said to be neighbors if they have the same topology, and \textit{close} weight functions. this framework allows one to release an almost minimum spanning tree with weight-approximation error of $O\left( |V|\log|E| \right)$ for fixed privacy parameters. Differential privacy is ensured in that case by using the Laplace mechanism on every edges weight to release a spanning tree based on a perturbed version of the weight function. The privacy of the spanning tree construction is thus provided by post-processing (cf. Th.~\ref{thm:postprocessing}). However, under a similar privacy setting,~\cite{Pinot2017} recently manages to produce the topology of a tree under differential privacy without relying on the post-processing of a more general mechanism such as the ``Laplace mechanism".
Their algorithm, called PAMST, privately releases the topology of an almost minimum spanning tree thanks to an iterative use of the ``Exponential mechanism" instead. For fixed privacy parameters, the weight approximation error is $O\left( \frac{|V|^{2}}{|E|}\log|V| \right)$, which outperforms the former method from~\cite{Sealfon_2016} on arbitrary weighted graphs under weak assumptions on the graph sparseness.
Thus, we keep here privacy setting from~\cite{Pinot2017}.

\begin{definition}[\cite{Pinot2017}] For any edge set $E$, two weight functions $w,w'\in \mathcal{W}_{E} $ are neighboring, denoted $ w \sim w'$, if 
$ || w - w' ||_{\infty} := \max\limits_{e \in E}| w(e) - w'(e) | \leq \mu .$
\end{definition}
$\mu$ represents the sensitivity of the weight function and should be chosen according to the application and the range of this function. 
The neighborhood between such graphs is clarified in the following definition.

\begin{definition} Let $\mathcal{G}=(V,E,w)$ and $\mathcal{G'}=(V',E',w')$, two weighted graphs, $\mathcal{G}$ and $\mathcal{G'}$ are said to be neighbors if $V=V'$, $E=E'$ and  $ w \sim w'$.
\end{definition}

The so-called weight-differential privacy for graph algorithms is now formally defined.

\begin{definition}[\cite{Sealfon_2016}]
For any graph topology $G= (V,E)$, let $\mathcal{A}$ be a randomized algorithm that takes as input a weight function  $w\in \mathcal{W}_{E}$. $\mathcal{A}$ is called $(\epsilon,\delta)$-differentially private on $G= (V,E)$ if for all pairs of neighboring weight functions $ w,w'\in \mathcal{W}_{E} $, and for all set of possible outputs $S$, one has 
$$\mathbb{P}\left[\mathcal{A}(w) \in S\right] \leq e^{\epsilon}\mathbb{P}\left[\mathcal{A}(w') \in S\right] + \delta.$$
If $\mathcal{A}$ is $(\epsilon,\delta)$-differentially private on every graph topology in a class $\mathcal{C}$, it is said to be $(\epsilon,\delta)$-differentially private on $\mathcal{C}$.
\end{definition}

One of the first, and most used differentially private mechanisms is the Laplace mechanism. It is based on the process of releasing a numerical query perturbed by a noise drawn from a centered Laplace distribution scaled to the sensitivity of the query.
We present here its graph-based reformulation.

\begin{definition}[reformulation \cite{Dwork_2006}]
Given some graph topology $G=(V,E)$, for any $f_{G}:\mathcal{W}_{E} \rightarrow \mathbb{R}^{k}$, the sensitivity of the function is defined as
$\Delta f_{G} =\max\limits_{w \sim w' \in \mathcal{W}_{E}}||f_{G}(w) -f_{G}(w')||_{1}.$
\end{definition}

\begin{definition}[reformulation \cite{Dwork_2006}]
Given some graph topology $G=(V,E)$, any function $f_{G}: \mathcal{W}_{E} \rightarrow \mathbb{R}^{k}$, any $ \epsilon > 0$, and $w \in \mathcal{W}_{E}$, the  graph-based Laplace mechanism is
$\mathcal{M}_{L}(w,f_{G},\epsilon)=f_{G}(w) + (Y_{1}, \ldots,Y_{k})$
where $Y_{i}$ are i.i.d. random variables drawn from Lap$(\Delta f_{G}/\epsilon)$, and Lap$(b)$ denotes the Laplace distribution with scale b $\left(\textit{{\em i.e} probability density }\frac{1}{2b}\exp\left(-\frac{|x|}{b}\right)\right)$.
\end{definition}

\begin{theorem}[\cite{Dwork_2006}]
\label{thm:Laplace}
The Laplace mechanism is $\epsilon$-differentially private.
\end{theorem}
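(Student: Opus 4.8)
The plan is to bound, at every point of the output space, the ratio of the densities of $\mathcal{M}_L$ under two neighboring weight functions, and then integrate. Since the claim concerns pure $\epsilon$-differential privacy, I would take $\delta = 0$, so it suffices to show that for every pair $w \sim w'$ in $\mathcal{W}_E$ and every measurable set $S \subseteq \mathbb{R}^k$ one has $\mathbb{P}[\mathcal{M}_L(w,f_G,\epsilon) \in S] \leq e^{\epsilon}\,\mathbb{P}[\mathcal{M}_L(w',f_G,\epsilon) \in S]$.

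First I would write the density of the output. Because $\mathcal{M}_L(w,f_G,\epsilon) = f_G(w) + (Y_1,\dots,Y_k)$ with the $Y_i$ drawn i.i.d. from $\mathrm{Lap}(\Delta f_G/\epsilon)$, the joint density evaluated at a point $z \in \mathbb{R}^k$ factorizes over coordinates as
$$p_w(z) = \prod_{i=1}^k \frac{\epsilon}{2\Delta f_G}\exp\!\left(-\frac{\epsilon\,|f_G(w)_i - z_i|}{\Delta f_G}\right),$$
where $f_G(w)_i$ denotes the $i$-th component of $f_G(w)$. Forming the pointwise ratio $p_w(z)/p_{w'}(z)$ cancels the normalizing constants and leaves $\exp\!\big(\tfrac{\epsilon}{\Delta f_G}\sum_{i=1}^k(|f_G(w')_i - z_i| - |f_G(w)_i - z_i|)\big)$.

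The key step is the reverse triangle inequality $|f_G(w')_i - z_i| - |f_G(w)_i - z_i| \leq |f_G(w)_i - f_G(w')_i|$, applied coordinatewise; summing over $i$ turns the exponent into $\tfrac{\epsilon}{\Delta f_G}\|f_G(w) - f_G(w')\|_1$. Invoking the definition of sensitivity, which guarantees $\|f_G(w) - f_G(w')\|_1 \leq \Delta f_G$ whenever $w \sim w'$, bounds the ratio by $e^{\epsilon}$ uniformly in $z$. Finally I would integrate this pointwise bound over $S$ to obtain $\mathbb{P}[\mathcal{M}_L(w)\in S] = \int_S p_w \leq e^{\epsilon}\int_S p_{w'} = e^{\epsilon}\,\mathbb{P}[\mathcal{M}_L(w')\in S]$, which is exactly $(\epsilon,0)$-differential privacy.

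I do not expect a genuine obstacle here, as the argument reduces to a short density computation. The only point demanding care is that the $\ell_1$ formulation of sensitivity is precisely what lets the coordinatewise triangle-inequality bounds aggregate into a single $e^{\epsilon}$ factor: because the Laplace noise scale is matched to the $\ell_1$ sensitivity, the sum of per-coordinate discrepancies collapses exactly onto $\|f_G(w) - f_G(w')\|_1$, and the cancellation closes. Matching the noise to a mismatched norm would break this alignment, so I would emphasize that the scale $\Delta f_G/\epsilon$ is the essential ingredient rather than an arbitrary choice.
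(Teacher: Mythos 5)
Your proof is correct and is precisely the classical density-ratio argument for which the paper cites \citep{Dwork_2006}; the paper itself states this theorem without proof, importing it from that reference. The coordinatewise factorization of the joint density, the reverse triangle inequality, and the matching of the noise scale $\Delta f_G/\epsilon$ to the $\ell_1$ sensitivity are exactly the canonical steps, so there is nothing to add.
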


We define hereafter the graph-based Exponential mechanism. In the sequel we refer to it simply as Exponential mechanism.
The Exponential mechanism represents a way of privately answering arbitrary range queries. Given some range of possible responses to the query $\mathcal{R}$, it is defined according to a utility function $u_{G}:= \mathcal{W}_{E} \times \mathcal{R} \rightarrow \mathbb{R}$, which aims at providing some total preorder on the range $\mathcal{R}$ according to the total order in $\mathbb{R}$. 
The sensitivity of this function is denoted $\Delta u_{G} := \max\limits_{r \in \mathcal{R}} \max\limits_{ w \sim w' \in \mathcal{W}_{E}} | u_{G}(w,r) - u_{G}(w',r)|$ .

\begin{definition}
Given some graph topology $G=(V,E)$, some output range $\mathcal{R} \subset E$, some privacy parameter $\epsilon > 0$, some utility function $u_{G}:= \mathcal{W}_{E} \times \mathcal{R} \rightarrow \mathbb{R}$, and some $w\in \mathcal{W}_{E}$ the graph-based Exponential mechanism $\mathcal{M}_{Exp}\left(G, w, u_{G}, \mathcal{R}, \epsilon \right)$ selects and outputs an element $r \in \mathcal{R}$ with probability proportional to $\exp\left(\frac{\epsilon u_{G}(w,r)}{2 \Delta u_{G}}\right)$.
\end{definition}

The Exponential mechanism defines a distribution on a potentially complex and large range $\mathcal{R}$. As the following theorem states,  sampling from such a distribution preserves $\epsilon$-differential privacy. 

\begin{theorem}[reformulation \cite{mechanism-design-via-differential-privacy}]
\label{th:privacyexpmechanism}
For any non-empty range $\mathcal{R}$, given some graph topology $G=(V,E)$, the graph-based Exponential mechanism preserves $\epsilon$-differential privacy, \emph{i.e} if $w \sim w' \in \mathcal{W}_{E}$,
\begin{align*}
\mathbb{P} &\left[\mathcal{M}_{Exp} \left(G,w, u_{G}, \mathcal{R}, \epsilon\right) = r\right] \\ 
&\leq e^{\epsilon}\mathbb{P}\left[\mathcal{M}_{Exp}\left(G,w', u_{G}, \mathcal{R}, \epsilon \right) =r\right].
\end{align*}
\end{theorem}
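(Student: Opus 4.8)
The plan is to follow the classical normalization argument for the Exponential mechanism. First I would write the output probability explicitly by normalizing the unnormalized weights: since $\mathcal{M}_{Exp}$ selects $r$ with probability proportional to $\exp\left(\frac{\epsilon u_{G}(w,r)}{2\Delta u_{G}}\right)$, the actual probability is this quantity divided by the partition function $Z_{w} := \sum_{r' \in \mathcal{R}} \exp\left(\frac{\epsilon u_{G}(w,r')}{2\Delta u_{G}}\right)$, and analogously for the neighbor one defines $Z_{w'}$. The goal is then to bound the ratio $\mathbb{P}\left[\mathcal{M}_{Exp}(G,w,u_{G},\mathcal{R},\epsilon)=r\right] / \mathbb{P}\left[\mathcal{M}_{Exp}(G,w',u_{G},\mathcal{R},\epsilon)=r\right]$ by $e^{\epsilon}$, where the non-emptiness of $\mathcal{R}$ ensures $Z_{w}, Z_{w'} > 0$ so the ratio is well defined.

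Next I would split this ratio into two independent factors: the ratio of the two unnormalized numerators, and the ratio $Z_{w'}/Z_{w}$ of the partition functions. For the first factor, the exponents combine to give $\exp\left(\frac{\epsilon\,(u_{G}(w,r) - u_{G}(w',r))}{2\Delta u_{G}}\right)$, and invoking the definition of the sensitivity $\Delta u_{G}$, which guarantees $|u_{G}(w,r) - u_{G}(w',r)| \leq \Delta u_{G}$ for every $r$, bounds this factor by $e^{\epsilon/2}$.

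For the partition-function factor, the key step is to bound $Z_{w'}$ termwise by $Z_{w}$. Using the sensitivity bound $u_{G}(w',r') \leq u_{G}(w,r') + \Delta u_{G}$ inside each summand, every term of $Z_{w'}$ is at most $e^{\epsilon/2}$ times the corresponding term of $Z_{w}$; factoring $e^{\epsilon/2}$ out of the sum yields $Z_{w'} \leq e^{\epsilon/2} Z_{w}$, hence $Z_{w'}/Z_{w} \leq e^{\epsilon/2}$. Multiplying the two factors gives the desired bound $e^{\epsilon/2}\cdot e^{\epsilon/2} = e^{\epsilon}$, which is exactly the claimed inequality.

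The step I expect to require the most care is not difficult conceptually but is where the factor of $2$ in the exponent earns its keep: the sensitivity is absorbed twice, once through the shift in the numerator and once through the shift in the normalizer, so the $2$ is precisely what makes the product collapse to $e^{\epsilon}$ rather than $e^{2\epsilon}$. I would also remark that the argument nowhere uses positivity of $u_{G}$ and relies only on $\mathcal{R}$ being non-empty, matching the hypotheses of the statement, so the reformulation to the graph-based setting does not alter the original proof of \cite{mechanism-design-via-differential-privacy}.
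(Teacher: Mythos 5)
Your proof is correct and coincides with the standard argument: the paper itself states this result only as a cited reformulation of \cite{mechanism-design-via-differential-privacy} without reproving it, and your two-factor decomposition --- bounding the numerator ratio by $e^{\epsilon/2}$ via the sensitivity $\Delta u_{G}$ and the partition-function ratio $Z_{w'}/Z_{w}$ by $e^{\epsilon/2}$ through the termwise bound --- is precisely the proof in that source, with the finiteness of $\mathcal{R}\subset E$ and its non-emptiness justifying the well-definedness of the normalizers exactly as you note.
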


Further, Th~\ref{tradeoffexponential} highlights the trade-off between privacy and accuracy for the Exponential mechanism when $0 < |\mathcal{R}| <  +\infty$. Th~\ref{thm:basic-composition} presents the ability of differential privacy to comply with composition while Th~\ref{thm:postprocessing} introduces its post-processing property.

\begin{theorem}[reformulation \cite{Dwork_2013}]
\label{tradeoffexponential}
Given some graph topology $G=(V,E)$, some $w\in \mathcal{W}_{E}$, some output range $\mathcal{R}$, some privacy parameter $\epsilon > 0$, some utility function $u_{G}:= \mathcal{W}_{E} \times \mathcal{R} \rightarrow \mathbb{R}$, and denoting $OPT_{u_{G}}(w)=\max\limits_{r \in \mathcal{R}}u_{G}(w,r)$, one has $\forall$ $ t \in \mathbb{R}$, 
\begin{align*}
u_{G}&\left(G,w,\mathcal{M}_{Exp}\left(w,u_{G},\mathcal{R},\epsilon \right)\right) \\
&\leq OPT_{u_{G}}(w) - \frac{2\Delta u_{G}}{\epsilon}\left(t + \ln|\mathcal{R}|\right)
\end{align*}
with probability at most $\exp(-t)$.
\end{theorem}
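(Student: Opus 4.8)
The plan is to exploit the fact that the Exponential mechanism has an \emph{explicit} output distribution and to bound directly the probability of the ``bad'' event in which the selected element has low utility. To lighten the notation, I would write $u(r) := u_G(w,r)$ and $\Delta := \Delta u_G$, interpret the left-hand side as the utility $u(\mathcal{M}_{Exp})$ of the returned element, and introduce the threshold
\[
c := OPT_{u_{G}}(w) - \frac{2\Delta}{\epsilon}\bigl(t + \ln|\mathcal{R}|\bigr),
\]
so that the claim is exactly $\mathbb{P}\bigl[u(\mathcal{M}_{Exp}) \leq c\bigr] \leq \exp(-t)$. First I would recall from the definition of $\mathcal{M}_{Exp}$ that it returns $r$ with probability proportional to $\exp\bigl(\epsilon u(r)/(2\Delta)\bigr)$, so that after normalizing,
\[
\mathbb{P}\bigl[u(\mathcal{M}_{Exp}) \leq c\bigr] = \frac{\sum_{r:\,u(r)\leq c}\exp\bigl(\tfrac{\epsilon u(r)}{2\Delta}\bigr)}{\sum_{r'\in\mathcal{R}}\exp\bigl(\tfrac{\epsilon u(r')}{2\Delta}\bigr)}.
\]

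The second step, which is the heart of the argument, is to estimate the two sums separately. For the numerator, every surviving term satisfies $u(r)\leq c$, hence $\exp\bigl(\epsilon u(r)/(2\Delta)\bigr) \leq \exp\bigl(\epsilon c/(2\Delta)\bigr)$, and since there are at most $|\mathcal{R}|$ such terms the numerator is at most $|\mathcal{R}|\exp\bigl(\epsilon c/(2\Delta)\bigr)$. For the denominator I would simply retain the single term attaining the optimum $OPT_{u_{G}}(w)$ and discard the remaining (nonnegative) summands, yielding the lower bound $\exp\bigl(\epsilon\, OPT_{u_{G}}(w)/(2\Delta)\bigr)$. Combining these two estimates gives
\[
\mathbb{P}\bigl[u(\mathcal{M}_{Exp}) \leq c\bigr] \leq |\mathcal{R}|\exp\left(\frac{\epsilon\,\bigl(c - OPT_{u_{G}}(w)\bigr)}{2\Delta}\right),
\]
and substituting the definition of $c$ collapses the exponent to $-\bigl(t + \ln|\mathcal{R}|\bigr)$, so that the prefactor $|\mathcal{R}|$ cancels against $e^{-\ln|\mathcal{R}|}$ and leaves precisely $e^{-t}$.

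The calculation is routine; the only genuinely delicate point is the asymmetry in how the two sums are bounded — crudely inflating the numerator by the cardinality $|\mathcal{R}|$ while keeping just one term in the denominator. This looks wasteful, but it is exactly what makes the $|\mathcal{R}|$ factors cancel and produces the clean exponential tail, so I would resist any attempt at a ``tighter'' accounting, which would only break the cancellation. I would also note that the statement tacitly presumes $0 < |\mathcal{R}| < \infty$ (as flagged just before the theorem), ensuring both that the normalizing sum is finite and that the maximum defining $OPT_{u_{G}}(w)$ is attained so that the denominator lower bound is legitimate.
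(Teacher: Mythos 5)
Your proof is correct and is precisely the standard argument for the utility of the Exponential mechanism from Dwork and Roth (the reference the paper cites for this reformulated theorem, which it states without reproving): bound the probability of the low-utility event by inflating the numerator to $|\mathcal{R}|\exp\bigl(\epsilon c/(2\Delta u_G)\bigr)$ and lower-bounding the normalizing sum by the single optimal term, so the $|\mathcal{R}|$ factor cancels. Your side remarks --- that the asymmetric accounting is what produces the cancellation, and that finiteness of $\mathcal{R}$ (flagged just before the theorem) guarantees the optimum is attained --- are exactly the right points to make.
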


\begin{theorem}[\cite{dwork2006our}] 
\label{thm:basic-composition}
For any $\epsilon >0$, $\delta \geq 0$ the adaptive composition of $k$ $(\epsilon,\delta)$-differentially private mechanisms is $(k\epsilon,k\delta)$-differentially private.
\end{theorem}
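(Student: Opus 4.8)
The plan is to prove the statement by induction on the number $k$ of composed mechanisms, isolating the ``pure'' case $\delta=0$ first because it is transparent, and then handling the additive failure probability $\delta$ separately. For $\delta=0$, write the adaptive composition as $\mathcal{M}=(\mathcal{M}_1,\dots,\mathcal{M}_k)$, where each $\mathcal{M}_i$ receives the private weight function $w$ together with the realized outputs $y_1,\dots,y_{i-1}$ of the earlier mechanisms. For a fixed output tuple $(y_1,\dots,y_k)$ and neighbors $w\sim w'$, the joint likelihood factorizes into conditional likelihoods $\mathbb{P}[\mathcal{M}_i=y_i\mid y_1,\dots,y_{i-1}]$. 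Since each $\mathcal{M}_i(\cdot,y_1,\dots,y_{i-1})$ is $\epsilon$-differentially private for \emph{every} fixed prefix, each conditional ratio is at most $e^{\epsilon}$, hence the full ratio is at most $e^{k\epsilon}$ pointwise; integrating over any output set $S$ yields $(k\epsilon,0)$-differential privacy.

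For $\delta>0$ I would not multiply the per-step $(\epsilon,\delta)$-bounds directly, because doing so leaks an extra $e^{\epsilon}$ factor onto the failure terms and produces $\bigl((k-1)e^{\epsilon}+1\bigr)\delta$ rather than the claimed $k\delta$. Instead I would pass through the coupling/deletion characterization of approximate privacy: a mechanism is $(\epsilon,\delta)$-DP if and only if, for each pair of neighbors, its output distribution is within total-variation distance $\delta$ of a distribution that is purely $\epsilon$-indistinguishable from the corresponding output on the neighbor. Applying this to each $\mathcal{M}_i$ replaces its true run by a ``good'' run that differs only on a bad event of probability at most $\delta$. A union bound over the $k$ steps shows that all runs are simultaneously good except with probability at most $k\delta$, and on the good event the pure-$\epsilon$ factorization above applies and contributes the multiplicative factor $e^{k\epsilon}$. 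Combining the two contributions gives exactly $\mathbb{P}[\mathcal{M}(w)\in S]\le e^{k\epsilon}\,\mathbb{P}[\mathcal{M}(w')\in S]+k\delta$.

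The main obstacle is precisely this bookkeeping of the $\delta$ terms under adaptivity: a literal induction that invokes the $(\epsilon,\delta)$-definition at each layer and multiplies inequalities does not close with the constant $k\delta$. The clean route is the divergence viewpoint — either the deletion lemma above or, equivalently, sub-additivity of the hockey-stick divergence $E_{e^{\epsilon}}(P\Vert Q)=\sum_{y}\max\{0,\,P(y)-e^{\epsilon}Q(y)\}$ under composition — so that the failure masses add while the likelihood ratios multiply. The one point requiring care is that adaptivity forbids treating the mechanisms as independent: every bound on $\mathcal{M}_i$ must hold uniformly over all prefixes $y_1,\dots,y_{i-1}$, which is exactly what $(\epsilon,\delta)$-differential privacy on each fixed prefix supplies.
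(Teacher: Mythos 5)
The paper does not prove this theorem: it is imported verbatim, with citation, as background (it is the basic composition result of Dwork et al.\ 2006), and no argument for it appears in the body or the supplementary material. So the only meaningful assessment is of your proof on its own terms, and it is essentially correct and matches the standard careful treatment in the literature (e.g.\ the proofs in Vadhan's survey or Dwork--Rothblum--Vadhan) rather than the sloppy ``multiply the $k$ inequalities'' argument. You are right, and it is a genuine strength of the writeup, that naive induction leaks an $e^{\epsilon}$ factor onto the failure terms: conditioning gives $e^{\epsilon}\bigl(e^{(k-1)\epsilon}Q(S)+(k-1)\delta\bigr)+\delta$, i.e.\ $\bigl((k-1)e^{\epsilon}+1\bigr)\delta$, and the deletion/hockey-stick route is the standard fix. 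Two small points of care. First, your ``if and only if'' is slightly overstated as phrased: with total-variation slack exactly $\delta$ the tight characterization is the \emph{one-sided} one --- $\max_{S}\bigl(P(S)-e^{\epsilon}Q(S)\bigr)\le\delta$ iff there exists $P'$ with $\mathrm{TV}(P,P')\le\delta$ and $P'\le e^{\epsilon}Q$ pointwise --- whereas the symmetric ``purely $\epsilon$-indistinguishable'' version requires slack $\delta/(1+e^{\epsilon})$ (Kasiviswanathan--Smith). Only the one-sided direction is needed here, since the composition claim itself is a one-sided inequality applied to each ordered pair $(w,w')$. Second, because the bad events at step $i$ depend on the adaptively chosen prefix, the ``union bound over the $k$ steps'' is most cleanly made rigorous either by a per-step maximal coupling of the true and good kernels conditioned on agreement so far, or by the telescoping hybrid argument: letting $H_j$ denote the composition using good kernels in the first $j$ steps, $\mathrm{TV}(H_{j-1},H_j)\le\delta$, hence $\mathrm{TV}(H_0,H_k)\le k\delta$, and $H_k(S)\le e^{k\epsilon}\,\mathbb{P}[\mathcal{M}(w')\in S]$ pointwise by your pure-DP factorization. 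With those two adjustments your argument is a complete and correct proof of the statement the paper cites.
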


\begin{theorem}[Post-Processing~\cite{Dwork_2013}] 
\label{thm:postprocessing}
Let $\mathcal{A}:\mathcal{W}_{E} \rightarrow B$ be a randomized algorithm that is $(\epsilon,\delta)$-differentially private,  and $h : B \rightarrow B'$ a deterministic mapping.
Then $h \circ \mathcal{A}$ is $(\epsilon,\delta)$-differentially private.
\end{theorem}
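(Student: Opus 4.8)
The plan is to verify the definition of $(\epsilon,\delta)$-differential privacy for the composed map $h \circ \mathcal{A}$ directly, by transferring every event about its output in $B'$ back to an event about the output of $\mathcal{A}$ in $B$. First I would fix an arbitrary pair of neighboring weight functions $w \sim w' \in \mathcal{W}_E$ and an arbitrary set of outputs $S' \subseteq B'$; the goal is then to bound $\mathbb{P}[(h \circ \mathcal{A})(w) \in S']$ by $e^\epsilon \mathbb{P}[(h \circ \mathcal{A})(w') \in S'] + \delta$.

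The key step is to introduce the preimage $S := h^{-1}(S') = \{ b \in B : h(b) \in S' \}$. Because $h$ is deterministic, the event that the post-processed output falls in $S'$ coincides exactly with the event that $\mathcal{A}$ falls in $S$, giving the identity
\begin{align*}
\mathbb{P}\left[(h \circ \mathcal{A})(w) \in S'\right] = \mathbb{P}\left[\mathcal{A}(w) \in S\right].
\end{align*}
I would then apply the $(\epsilon,\delta)$-differential privacy of $\mathcal{A}$ to the admissible output set $S$, obtaining $\mathbb{P}[\mathcal{A}(w) \in S] \leq e^\epsilon \mathbb{P}[\mathcal{A}(w') \in S] + \delta$, and use the same preimage identity (now at $w'$) to rewrite the right-hand side back in terms of $h \circ \mathcal{A}$. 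Chaining these three relations yields
\begin{align*}
\mathbb{P}\left[(h \circ \mathcal{A})(w) \in S'\right] \leq e^\epsilon \, \mathbb{P}\left[(h \circ \mathcal{A})(w') \in S'\right] + \delta,
\end{align*}
and since $w$, $w'$ and $S'$ were arbitrary this is exactly the privacy inequality for $h \circ \mathcal{A}$.

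The only genuine subtlety --- and the step I expect to need the most care --- is measurability: for the preimage identity to be legitimate, $S = h^{-1}(S')$ must itself be an admissible output set for $\mathcal{A}$, that is, a measurable subset of $B$. This is guaranteed as soon as $h$ is measurable, which is the natural reading of ``deterministic mapping'', and when $B$ is discrete (as in the tree-topology setting of this paper) there is nothing to check. It is worth stressing that the determinism of $h$ is precisely what collapses the argument to a single preimage identity: were $h$ randomized, one would instead have to average its conditional output distribution against that of $\mathcal{A}$ and invoke a convexity argument, whereas here no randomness of $h$ enters and the privacy loss of $\mathcal{A}$ passes through unchanged.
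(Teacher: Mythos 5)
Your proof is correct: the preimage identity $\mathbb{P}[(h \circ \mathcal{A})(w) \in S'] = \mathbb{P}[\mathcal{A}(w) \in h^{-1}(S')]$, combined with the privacy guarantee of $\mathcal{A}$ applied to $h^{-1}(S')$, is exactly the standard argument, and your remark on measurability of $h^{-1}(S')$ is the right caveat (vacuous in the discrete setting used here). The paper itself states this theorem as a cited result from \cite{Dwork_2013} without proof, and your argument matches the canonical one given there for deterministic post-processing.
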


\subsection{Differentially-private clustering}
Differentially private clustering for unstructured datasets has been first discussed in~\cite{Nissim_2007}. This work introduced the first method for differentially private clustering based on the k-means algorithm. Since then most of the works of the field focused on adaptation of this  method~\citep{Blum:2008,McSherry_2009,Dwork:2011}. The main drawback of those works is that they are not able to deal with arbitrary shaped clusters. This issue has been  recently investigated in~\cite{Ho:2013} and~\cite{Chen:2015}. They proposed two new methods to find arbitrary shaped clusters in unstructured datasets respectively based on density clustering and wavelet decomposition. Even though both of these works allow one to produce non-convex clusters, they only deal with unstructured datasets and thus are not applicable to node clustering in a graph. Our work focuses on node clustering in a graph under weight-differential privacy. Graph clustering has already been investigated in a topology-based privacy framework~\citep{Mlle2015,Nguyen:2016}, however, these works do not consider weight-differential privacy. Our work is, to the best of our knowledge, the first attempt to define node clustering in a graph under weight differential privacy.   

\section{Differentially-private tree-based clustering}

We aim at producing a private clustering method while providing bounds on the accuracy loss. Our method is an adaptation of an existing clustering algorithm \textsc{DBMSTClu}. However, to provide theoretical guarantees under differential privacy, one needs to rely on the same kind of guarantees in the non-private setting. \cite{MorvanCGA17} did not bring them in their initial work. Hence, our second contribution is to demonstrate the accuracy of this method, first in the non-private context.

In the following we present 1) the theoretical framework motivating MST-based clustering methods, 2) accuracy guarantees of \textsc{DBMSTClu} in the non-private setting, 3) \textsc{PTClust} our private clustering algorithm, 4) its accuracy under differential privacy constraints.  

\subsection{Theoretical framework for MST-based clustering methods} \label{sec:MST_clustering_optimality}

MST-based clustering methods, however efficient, lack of proper motivation. This Section closes this gap by providing a theoretical framework for MST-based clustering. 
In the sequel, notations from Section~\ref{sec:notations} are kept.
The minimum path distance between two nodes in the graph is defined which enables to explicit our notion of Cluster.

\begin{definition}[Minimum path distance]
Let be $\mathcal{G} = (V, E, w)$ and $u$, $v$ $\in V$. The minimum path distance between $u$ and $v$ is $$d(u,v) = \min_{\mathcal{P}_{u-v}} \underset{e \in V_{\mathcal{P}_{u-v}}} \sum w(e)$$
with $\mathcal{P}_{u-v}$ a path from $u$ to $v$ in $\mathcal{G}$, and $V_{\mathcal{P}_{u-v}}$ the set of vertices contained in $\mathcal{P}_{u-v}$.
\end{definition}

\begin{definition}[Cluster] \label{def:cluster}
Let be $\mathcal{G} = (V, E, w)$, $0 < w(e) \leq 1 \ \forall e \in E$ a graph, $(V, d)$ a metric space based on the minimum path distance $d$ defined on $\mathcal{G}$ and $D \subset V$ a node set.
$C \subset D$ is a cluster iff. $|C| > 2$ and $\forall C_1, C_2$ s.t. $C = C_1 \cup C_2$ and $C_1 \cap C_2 = \emptyset$, one has:
$$
\underset{z \in D \backslash C_1}{\argmin} \{ \ \underset{v \in C_1}{\min} \ d(z,v) \ \} \ \subset C_2
$$  
\end{definition}

Assuming that a cluster is built of at least $3$ points makes sense since singletons or groups of $2$ nodes can be legitimately considered as noise.  For simplicity of the proofs, the following theorems hold in the case where noise is neglected. However, they are still valid in the setting where noise is considered as singletons (with each singleton representing a generalized notion of cluster).

\begin{theorem}
\label{thm:MSTuseful}
Let be $\mathcal{G} = (V, E, w)$ a graph and $\mathcal{T}$ a minimum spanning tree of $\mathcal{G}$. 
Let also be $C$ a cluster in the sense of Def.~\ref{def:cluster} and
two vertices $v_1, v_2 \in C$.
Then, $V_{\mathcal{P}_{v_1-v_2}} \subset C$ 
with $\mathcal{P}_{v_1-v_2}$ a path from $v_1$ to $v_2$ in $\mathcal{G}$, and $V_{\mathcal{P}_{v_1-v_2}}$ the set of vertices contained in $\mathcal{P}_{v_1-v_2}$.
\end{theorem}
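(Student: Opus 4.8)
The plan is to read $\mathcal{P}_{v_1-v_2}$ as the unique path between $v_1$ and $v_2$ inside the tree $\mathcal{T}$ (which, since $\mathcal{T}\subseteq\mathcal{G}$ is a tree, is the only canonical choice of a path in $\mathcal{G}$), and to argue by contradiction. Suppose this path leaves $C$. I would then manufacture a bipartition of $C$ that violates Definition~\ref{def:cluster}. Concretely, traverse the path from $v_1$; let $a\in C$ be the last cluster vertex before the path first exits $C$ and let $u\notin C$ be the next vertex, so the tree edge $e=(a,u)$ lies on the path. Deleting $e$ splits $\mathcal{T}$ into two subtrees, inducing a vertex cut $(A,B)$ with $a\in A$ and $u\in B$. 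Since $e$ sits on the tree path between $v_1$ and $v_2$, its endpoints separate them, giving $v_1\in A$ and $v_2\in B$; hence $C_1:=C\cap A$ and $C_2:=C\cap B$ form a genuine bipartition of $C$ (with $a\in C_1$ and $v_2\in C_2$), to which the cluster condition applies.

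The engine of the argument is the MST cut property applied to $e$: because $e$ is the unique tree edge crossing $(A,B)$, it has minimum weight among all edges of $\mathcal{G}$ crossing that cut, so $w(e)\le w(f)$ for every cross-cut edge $f$. As all weights are positive, any $\mathcal{G}$-path from a vertex $z\in B$ to a vertex $v\in A$ must use at least one cross-cut edge and therefore has length at least $w(e)$; taking minima yields $\min_{v\in C_1} d(z,v)\ge w(e)$ for every $z\in C_2\subseteq B$. In the opposite direction, the single edge $e$ gives the cheap bound $\min_{v\in C_1} d(u,v)\le d(u,a)\le w(e)$.

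Now I would feed the bipartition $(C_1,C_2)$ into Definition~\ref{def:cluster}. Writing $m=\min_{z\in D\setminus C_1}\min_{v\in C_1} d(z,v)$ for the value attained by the $\argmin$, the vertex $u$ (which lies in $D\setminus C_1$) forces $m\le w(e)$, while the requirement that every minimizer lie in $C_2$ forces $m\ge w(e)$ through the lower bound above. Thus $m=w(e)$ and $u$ itself attains the minimum, so $u\in\argmin_{z\in D\setminus C_1}\min_{v\in C_1} d(z,v)$; but $u\notin C\supseteq C_2$, contradicting the cluster condition that this $\argmin$ lie in $C_2$. This contradiction shows the tree path never leaves $C$, i.e. $V_{\mathcal{P}_{v_1-v_2}}\subseteq C$.

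I expect two points to need the most care. The first is the metric-to-combinatorics bridge: the cluster axiom is phrased in terms of $\min$-distances in $(V,d)$, whereas the natural structural fact is about tree edges, and the cut property is exactly what converts one into the other — getting the equality case $m=w(e)$ right, so that $u$ is genuinely a minimizer rather than merely a near-tie, is the delicate step. The second is a hypothesis I would make explicit: the competitor $u$ must be an admissible element of $D\setminus C_1$, which holds precisely when the clustered node set is the whole vertex set, i.e. $D=V$; under that (intended) convention every vertex on the tree path is a legitimate $\argmin$ candidate and the argument closes.
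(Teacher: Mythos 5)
Your proof is correct, and it reaches the same final contradiction as the paper's -- a bipartition $(C_1,C_2)$ of $C$ whose $\argmin$ contains a vertex outside $C$ -- but the route there is genuinely different in its mechanics. The paper grows $C_1$ as the \emph{largest connected subtree} of $\mathcal{T}$ containing $v_1$ whose nodes all lie in $C$, takes a global minimizer $z^*$ of $\min_{v\in C_1} d(z,v)$ over $z\in V\setminus C_1$, and runs a two-case exchange argument: if the crossing edge at $z^*$ were strictly lighter than the tree edge $e_h$ joining the exit vertex $h$ to $C_1$, swapping the two would beat the MST; if the weights tie, then $h$ itself lies in the $\argmin$, violating the cluster axiom since $h\notin C$. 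You instead delete the single exit edge $e=(a,u)$ on the tree path, invoke the MST cut property once ($w(e)\le w(f)$ for every cross-cut edge $f$), and convert it into the two distance bounds $\min_{v\in C_1}d(z,v)\ge w(e)$ for $z\in C_2$ and $\min_{v\in C_1}d(u,v)\le w(e)$, which pin the minimum at $m=w(e)$ and place $u\in\argmin$ with $u\notin C$. Your decomposition also differs: your $C_1=C\cap A$ need not be connected within $C$, which is harmless since Definition~\ref{def:cluster} quantifies over \emph{all} bipartitions, whereas the paper's connectivity-based $C_1$ is what its case analysis requires. What your version buys is a single clean application of the cut property in place of the paper's case split, and it is more careful on one point: the paper silently identifies the minimum path distance from $z^*$ to $C_1$ with the weight of a single incident edge $e^*$, a step that needs positivity of the weights to justify, while your lower bound handles arbitrary connecting paths explicitly. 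Finally, the convention $D=V$ that you flag as a needed hypothesis is indeed how the paper operates (its own proof takes the $\argmin$ over $z\in V\setminus C_1$), consistent with its stated assumption that noise is neglected.
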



\begin{proof}
Let be $v_1, v_2 \in C$. 
If $v_1$ and $v_2$ are neighbors, the result is trivial.
Otherwise, as $\mathcal{T}$ is a tree, there exist a unique path within $\mathcal{T}$ between $v_1$ and $v_2$ denoted by $\mathcal{P}_{v_1-v_2} = \{ v_1, \ldots, v_2 \}$.
Let now prove by \textit{reductio ad absurdum} that $V_{\mathcal{P}_{v_1-v_2}} \subset C$.
Suppose there is $h \in V_{\mathcal{P}_{v_1-v_2}}$ s.t. $h \notin C$. We will see that it leads to a contradiction.
We set $C_1$ to be the largest connected component (regarding the number of vertices) of $\mathcal{T}$ s.t. $v_1 \in C_1$, and every nodes from $C_1$ are in $C$. Because of $h$'s definition, $v_2 \notin C_1$. 
Let be $C_2 = C \backslash C_1$. $C_2 \neq \emptyset$ since $v_2 \in C_2$.
Let be $z^* \ \in \ \underset{z \in V \backslash C_1}{\argmin} \{ \ \underset{v \in C_1}{\min} \ d(z, v) \ \}$ and $e^*= (z^*,v^*)$ an edge that reaches this minimum. Let us show that $z^* \notin C$.
If $z^* \in C$, then two possibilities hold:
\begin{enumerate}
\item There is an edge $e_{z^*} \in \mathcal{T}$, s.t. $e_{z^*} = (z^*, z')$ with $z' \in C_1$. This is impossible, otherwise by definition of a connected component, $z^* \in C_1$. \underline{Contradiction}.
\item For all $e_{z^*} = (z^*, z')$ s.t $z' \in C_1$, one has $e_{z^*} \notin \mathcal{T}$. In particular $e^* \notin \mathcal{T}$.
Since $h$ is the neighbor of $C_1$ in $\mathcal{G}$ there is also $e_h \in \mathcal{T}$, s.t. $e_h = (h, h')$ with $h' \in C_1$.
Once again two possibilities hold:
\begin{enumerate}
\item $w(e_{z^*}) = \underset{z \in V \backslash C_1}{\min} \{ \ \underset{v \in C_1}{\min} \ d(z, v) \ \} < w(e_h)$. Then, if we replace $e_h$ by $e_{z^*}$ in $\mathcal{T}$, its total weight decreases. So $\mathcal{T}$ is not a minimum spanning tree. \underline{Contradiction}.
\item $w(e_{z^*}) = w(e_h)$, therefore $h \in  \underset{z \in V \backslash C_1}{\argmin} \{ \ \underset{v \in C_1}{\min} \ d(z, v) \ \}$. Since $h \notin C$, one gets that \\ $\underset{z \in V \backslash C_1}{\argmin} \{ \ \underset{v \in C_1}{\min} \ d(z,v) \ \} \ \not \subset C_2$. Thus, $C$ is not a cluster. \underline{Contradiction}.
\end{enumerate}
\end{enumerate}
We proved that $z^* \notin C$. In particular, $z^* \notin C_2$. Then, $\underset{z \in V \backslash C_1}{\argmin} \{ \ \underset{\ v \in C_1}{\min} \ d(z, v) \ \} \ \not \subset C_2$. Thus, $C$ is not a cluster. \underline{Contradiction}.
Finally $h \in C$ and $V_{\mathcal{P}_{v_1-v_2}} \subset C$.
\end{proof}

This theorem states that, given a graph $\mathcal{G}$,  an MST $\mathcal{T}$, and any two nodes of $C$, every node in the path between them is in $C$. This means that a cluster can be characterized by a subtree of $\mathcal{T}$.
It justifies the use of all MST-based methods for data clustering or node clustering in a graph. All the clustering algorithms based on successively cutting edges in an MST to obtain a subtree forest are meaningful in the sense of Th.\ref{thm:MSTuseful}.
In particular, this theorem holds for the use of \textsc{DBMSTClu}~\citep{MorvanCGA17} presented in Section~\ref{sec:DBMSTClu}. 

\subsection{Deterministic MST-based clustering}

This Section introduces \textsc{DBMSTClu}~\citep{MorvanCGA17} that will be adapted to be differentially-private, and provide accuracy results on the recovery of the ground-truth clustering partition. 

\subsubsection{\textsc{DBMSTClu} algorithm} \label{sec:DBMSTClu}
Let us consider $\mathcal{T}$ an MST of $\mathcal{G}$, as the unique input of the clustering algorithm \textsc{DBMSTClu}. The clustering partition results then from successive cuts on $\mathcal{T}$ so that a new cut in $\mathcal{T}$ splits a connected component into two new ones. Each final connected component, a subtree of $\mathcal{T}$, represents a cluster. Initially, $\mathcal{T}$ is one cluster containing all nodes. Then, at each iteration, an edge is cut if some criterion, called \emph{Validity Index of a Clustering Partition} (DBCVI) is improved. This edge is greedily chosen to locally maximize the DBCVI at each step. When no improvement on DBCVI can be further made, the algorithm stops.
The DBCVI is defined as the weighted average of all \emph{cluster validity indices} which are based on two positive quantities, the \textit{Dispersion} and the \textit{Separation} of a cluster:
\begin{definition}[Cluster Dispersion]
The Dispersion of a cluster $C_i$ ($\DISP$) is defined as the maximum edge weight of $C_i$. If the cluster is a singleton (i.e. contains only one node), the associated Dispersion is set to $0$. More formally:
\begin{equation*}
\forall i \in [K], \ \DISP(C_i) = \left\{
    \begin{array}{ll}
        \max\limits_{j, \ e_j \in C_i} w_j & \mbox{if } \ |E(C_i)| \neq 0 \\
        0 & \mbox{otherwise.}
    \end{array}
\right.
\end{equation*}
\end{definition}
\begin{definition}[Cluster Separation]
The Separation of a cluster $C_i$ ($\SEP$) is defined as the minimum distance between the nodes of $C_i$ and the ones of all other clusters $C_j, j \neq i, 1 \leq i,j \leq K, K \neq 1$ where $K$ is the total number of clusters. In practice, it corresponds to the minimum weight among all already cut edges from $\mathcal{T}$ comprising a node from $C_i$. If $K = 1$, the Separation is set to $1$. More formally, 
with $incCuts(C_i)$ denoting cut edges incident to $C_i$,
\begin{equation*}
\forall i \in [K], \ \SEP(C_i) = \left\{
    \begin{array}{ll}
        \min\limits_{j, \ e_j \in incCuts(C_i)} w_j & \mbox{if } \ K \neq 1 \\
         1 & \mbox{otherwise.}
    \end{array}
\right.
\end{equation*}
\end{definition}

\begin{definition}[Validity Index of a Cluster]
The Validity Index of a cluster $C_i$ is defined as:
\begin{equation*}
V_C(C_i) = \frac{ \SEP(C_i) - \DISP(C_i)}{ \max(\SEP(C_i), \DISP(C_i))} \in [-1; 1]
\end{equation*}
\end{definition}

\begin{definition}[Validity Index of a Clustering Partition]
The Density-Based Validity Index of a Clustering partition $\Pi = \{C_i\}, 1 \leq i \leq K$, $\DBCVI(\Pi)$ is defined as the weighted average of the Validity Indices of all clusters in the partition where $N$ is the number of vertices.
\vspace*{-0.1in}
\begin{equation*}
\DBCVI(\Pi) = \sum_{i = 1}^{K} \frac{ | C_i| }{N} V_C(C_i) \in [-1,1]
\end{equation*}
\end{definition}

\textsc{DBMSTClu} is summarized in Algorithm~\ref{alg:DBMSTClu}: $\evaluateCut(.)$ computes the DBCVI when the cut in parameter is applied to $\mathcal{T}$. Initial DBCVI is set $-1$. Interested reader could refer to~\citep{MorvanCGA17} Section 4. for a complete insight on this notions. 
\begin{algorithm}[!htbp]
\caption{\textsc{DBMSTClu}($\mathcal{T}$)\label{alg:DBMSTClu}}
\begin{algorithmic}[1]
\STATE {\bfseries Input:} $\mathcal{T}$, the MST
\STATE $dbcvi \leftarrow -1.0$
\STATE $clusters \leftarrow \emptyset$
\STATE $cut\_list \leftarrow \{E(\mathcal{T}) \}$
\WHILE{ $dbcvi < 1.0$ }
\STATE $cut\_tp \leftarrow \emptyset$
\STATE $dbcvi\_tp \leftarrow dbcvi$
\FOR{each $cut$ in $cut\_list$}
\STATE $newDbcvi \leftarrow \evaluateCut(\mathcal{T}, cut)$
\IF{$newDbcvi \geq dbcvi\_tp$} 
\STATE $cut\_tp \leftarrow cut$
\STATE $dbcvi\_tp \leftarrow  newDbcvi$
\ENDIF
\ENDFOR
\IF{$cut\_tp \neq \emptyset$}
\STATE $clusters \small{\leftarrow} \cut(clusters, cut\_tp)$
\STATE $dbcvi \leftarrow dbcvi\_tp$
\STATE $cut\_list  \leftarrow cut\_list \backslash \{cut\_tp \}$
\ELSE 
\STATE {\bfseries break}
\ENDIF
\ENDWHILE
\RETURN $clusters$, $dbcvi$
\end{algorithmic}
\end{algorithm}

\subsubsection{DBMSTClu exact clustering recovery proof} \label{sec:DBMSTClu_proofs}
In this section, we provide theoretical guarantees for the cluster recovery accuracy of DBMSTClu. Let us first begin by introducing some definitions.

\begin{definition}[Cut] \label{def:cut}
Let us consider a graph $\mathcal{G} = (V, E, w)$ with $K$ clusters, $\mathcal{T}$ an MST of $\mathcal{G}$.
Let denote $(C^*_i)_{i \in [K]}$ the set of the clusters.
Then, $Cut_{\mathcal{G}}(\mathcal{T}) := \{ e_{kl} \in \mathcal{T} \ | \ k \in C^*_i, \ l \in C^*_j, \ i, j \in [K]^2, \ i \neq j \}$. In the sequel, for simplicity, we denote $e^{(ij)} \in Cut_{\mathcal{G}}(\mathcal{T})$ the edge between cluster $C^*_i$ and $C^*_j$.
\end{definition}

$Cut_{\mathcal{G}}(\mathcal{T})$ is basically the set of effective cuts to perform on $\mathcal{T}$ in order to ensure the exact recovery of the clustering partition. More generally, trees on which $Cut_{\mathcal{G}}(.)$ enables to find the right partition are said to be a partitioning topology.

\begin{definition}[Partitionning topology] \label{def:partition_topo}
Let us consider a graph $\mathcal{G} = (V, E, w)$ with $K$ clusters $C^{*}_1, \ldots, C^{*}_K$.
A spanning tree $\mathcal{T}$ of $\mathcal{G}$ is said to have a partitioning topology if $\forall i, j \in [K]$, $i \neq j$, $| \{ e=(u,v) \in Cut_{\mathcal{G}}(\mathcal{T}) \ | \ u \in C^*_i,  v \in C^*_j \} | = 1$.
\end{definition}

Def.~\ref{def:cut} and~\ref{def:partition_topo} introduce a topological condition on the tree as input of the algorithm. Nevertheless, conditions on weights are necessary too. Hence, we define homogeneous separability which expresses the fact that within a cluster the edge weights are spread in a controlled manner.

\begin{definition}[Homogeneous separability condition] \label{def:homo_separ_cond}
Let us consider a graph $\mathcal{G} = (V, E, w)$, $s \in E$ and $\mathcal{T}$ a tree of $\mathcal{G}$. $\mathcal{T}$ is said to be homogeneously separable by $s$, if $$\alpha_\mathcal{T} \ \underset{e \in E(\mathcal{T})}{\max} \ w(e) < w(s)
\textnormal{ with } \alpha_{\mathcal{T}} = \frac{ \underset{e \in E(\mathcal{T})}{\max} \ w(e) }{ \underset{e \in E(\mathcal{T})}{\min} \ w(e) } \geq 1.$$ One will write for simplicity that $H_{\mathcal{T}}(s)$ is verified.
\end{definition}



\begin{definition}[Weak homogeneity condition of a Cluster] \label{def:cond}
Let us consider a graph $\mathcal{G} = (V, E, w)$ with $K$ clusters $C^*_1, \ldots, C^*_K$.
A given cluster $C^*_i$, $i \in [K]$, $C^*_i$ is weakly homogeneous if:
for all $ \mathcal{T}$ an MST of  $\mathcal{G}$, and 
  $\forall j \in [K]$, $j \neq i$, s.t. $e^{(ij)} \in Cut_{\mathcal{G}}(\mathcal{T}), \ H_{\mathcal{T}_{|C_i^*}}(e^{(ij)})$ is verified.
For simplicity, one denote $\ubar{\alpha}_i = \underset{\mathcal{T} \textnormal{ MST of } \mathcal{G}}{\max}\alpha_{\mathcal{T}_{|C_i^*}}$
\end{definition}

\begin{definition}[Strong homogeneity condition of a Cluster] \label{def:cond}
Let us consider a graph $\mathcal{G} = (V, E, w)$ with $K$ clusters $C^*_1, \ldots, C^*_K$.
A given cluster $C^*_i$, $i \in [K]$, $C^*_i$ is strongly homogeneous if:
for all $ \mathcal{T}$ a spanning tree (ST) of  $\mathcal{G}$, and 
  $\forall j \in [K]$, $j \neq i$, s.t. $e^{(ij)} \in Cut_{\mathcal{G}}(\mathcal{T}), \ H_{\mathcal{T}_{|C_i^*}}(e^{(ij)})$ is verified.
For simplicity, one denote $\bar{\alpha}_i = \underset{\mathcal{T} \textnormal{ ST of } \mathcal{G}}{\max}\alpha_{\mathcal{T}_{|C_i^*}}$
\end{definition}
We show that the weak homogeneity condition is implied by the strong homogeneity condition.
\begin{proposition} \label{prop:link_strong_weak}
Let us consider a graph $\mathcal{G} = (V, E, w)$ with $K$ clusters $C^{*}_1, \ldots, C^{*}_K$.
If a given cluster $C^{*}_i$, $i \in [K]$ is strongly homogeneous, then, it is weakly homogeneous.
\end{proposition}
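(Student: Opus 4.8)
The plan is to observe that the sole difference between the weak and strong homogeneity conditions lies in the class of trees over which they quantify: weak homogeneity requires $H_{\mathcal{T}_{|C_i^*}}(e^{(ij)})$ to hold for every \emph{MST} $\mathcal{T}$ of $\mathcal{G}$, whereas strong homogeneity imposes the identical separability requirement for every \emph{spanning tree} $\mathcal{T}$ of $\mathcal{G}$. Since every minimum spanning tree is in particular a spanning tree, the set of MSTs of $\mathcal{G}$ is contained in the set of spanning trees of $\mathcal{G}$, and the implication will follow purely from this inclusion of the quantification domains.

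Concretely, I would assume $C_i^*$ strongly homogeneous and fix an arbitrary MST $\mathcal{T}$ of $\mathcal{G}$ together with any index $j \neq i$ such that $e^{(ij)} \in Cut_{\mathcal{G}}(\mathcal{T})$. Because such a $\mathcal{T}$ is a spanning tree of $\mathcal{G}$, the strong homogeneity hypothesis applies to it directly and guarantees that $H_{\mathcal{T}_{|C_i^*}}(e^{(ij)})$ is verified. As the MST $\mathcal{T}$ and the admissible index $j$ were arbitrary, this is precisely the defining statement of weak homogeneity for $C_i^*$, which closes the argument.

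There is essentially no analytical obstacle here: the proof reduces to a universal quantifier restricted to a subclass, with the very same separability predicate $H$ and the same cut set $Cut_{\mathcal{G}}(\mathcal{T})$ appearing in both definitions, so no property of the predicate itself needs to be re-examined. The only point I would record explicitly is the accompanying inequality $\ubar{\alpha}_i \leq \bar{\alpha}_i$ between the two homogeneity constants; it follows from the same inclusion, since the maximum defining $\ubar{\alpha}_i$ is taken over a subset of the trees defining $\bar{\alpha}_i$.
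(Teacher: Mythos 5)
Your proof is correct and follows essentially the same route as the paper's: strong homogeneity quantifies the predicate $H_{\mathcal{T}_{|C_i^*}}(e^{(ij)})$ over all spanning trees, every MST is a spanning tree, so the condition holds in particular for MSTs, which is weak homogeneity. Your added observation that $\ubar{\alpha}_i \leq \bar{\alpha}_i$ follows from the same inclusion is a harmless (and correct) bonus not stated in the paper's one-line proof.
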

\begin{proof}
If $ \mathcal{T}$ a spanning tree of  $\mathcal{G}$, and 
  $\forall j \in [K]$, $j \neq i$, s.t. $e^{(ij)} \in Cut_{\mathcal{G}}(\mathcal{T}), \ H_{\mathcal{T}_{|C_i^*}}(e^{(ij)})$ is verified, then in particular, it is true for any MST.
\end{proof}

Strong homogeneity condition appears to be naturally more constraining on the edge weights than the weak one. The  accuracy of \textsc{DBMSTClu} is proved under the weak homogeneity condition, while the accuracy of its differentially-private version is only given under the   the strong homogeneity condition.


\begin{theorem} \label{thm:thmA} 
Let us consider a graph $\mathcal{G} = (V, E, w)$ with $K$ homogeneous clusters $C^*_1, \ldots, C^*_K$ and $\mathcal{T}$ an MST of $\mathcal{G}$.
Let now assume that at step $k < K-1$, DBMSTClu built $k+1$ subtrees $\mathcal{C}_1, \ldots, \mathcal{C}_{k+1}$ by cutting $e_1, \ e_2, \ \ldots, \ e_k \in E$.

Then, $Cut_k := Cut_{\mathcal{G}}(\mathcal{T}) \ \backslash \ \{ e_1, \ e_2, \ \ldots, \ e_k \} \neq \emptyset \implies \DBCVI_{k+1} \geq DBCVI_k$, i.e. if there are still edges in $Cut_k$, the algorithm will continue to perform some cut.
\end{theorem}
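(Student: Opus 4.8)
The statement to prove is that the algorithm keeps cutting. Since \textsc{DBMSTClu} performs a cut at a given iteration exactly when some candidate edge yields a $\DBCVI$ value at least equal to the current one (the test $newDbcvi \geq dbcvi\_tp$, with $dbcvi\_tp$ initialised to the current $\DBCVI$), the inequality $\DBCVI_{k+1} \geq \DBCVI_k$ holds automatically as soon as a single non-decreasing cut exists. Hence the whole task reduces to exhibiting, under the assumption $Cut_k \neq \emptyset$, one edge whose removal does not decrease the $\DBCVI$. The plan is to produce such an edge among the remaining true cuts $Cut_k$.

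First I would locate where to cut. Since we are at step $k < K-1$, the current forest has only $k+1 < K$ subtrees while there are $K$ ground-truth clusters; by a pigeonhole argument at least one subtree $\mathcal{C}_m$ is a union of two or more clusters, hence (invoking Th.~\ref{thm:MSTuseful}, which guarantees that clusters stay connected inside $\mathcal{T}$) it still contains at least one uncut edge of $Cut_k$. I would then let $e^{*}$ be a heaviest edge of $\mathcal{C}_m$. Using the weak homogeneous separability condition of Def.~\ref{def:homo_separ_cond}, every inter-cluster edge bordering a cluster is strictly heavier than all intra-cluster edges of that cluster; applied to the cluster incident to the heaviest edge of $\mathcal{C}_m$, this forces $e^{*}$ to be an inter-cluster edge, so $e^{*} \in Cut_k$ and $\DISP(\mathcal{C}_m) = w(e^{*})$.

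Next I would quantify the effect of removing $e^{*}$. Cutting $e^{*}$ replaces $\mathcal{C}_m$ by two subtrees $\mathcal{C}_m^1, \mathcal{C}_m^2$ and leaves every other cluster's incident-cut set unchanged, so in $\DBCVI$ only the term of $\mathcal{C}_m$ is modified and it suffices to prove
\begin{equation*}
|\mathcal{C}_m^1|\, V_C(\mathcal{C}_m^1) + |\mathcal{C}_m^2|\, V_C(\mathcal{C}_m^2) \geq |\mathcal{C}_m|\, V_C(\mathcal{C}_m),
\end{equation*}
for which it is enough to show $V_C(\mathcal{C}_m^1) \geq V_C(\mathcal{C}_m)$ and $V_C(\mathcal{C}_m^2) \geq V_C(\mathcal{C}_m)$. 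On the dispersion side this is immediate: removing edges can only shrink the maximal weight, so $\DISP(\mathcal{C}_m^i) \leq \DISP(\mathcal{C}_m)$. On the separation side, the cuts incident to $\mathcal{C}_m^i$ are the old incident cuts of $\mathcal{C}_m$ restricted to that side (each of weight $\geq \SEP(\mathcal{C}_m)$) together with the new cut $e^{*}$ of weight $\DISP(\mathcal{C}_m)$; I would split into the cases $\SEP(\mathcal{C}_m) \leq \DISP(\mathcal{C}_m)$ and $\SEP(\mathcal{C}_m) > \DISP(\mathcal{C}_m)$ and, recalling that $V_C$ is non-decreasing in $\SEP$ and non-increasing in $\DISP$, track how $\SEP$ and $\DISP$ move jointly for each piece.

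The main obstacle is exactly this last comparison of validity indices: controlling the ratio $\SEP/\DISP$ before and after the cut when separation and dispersion change simultaneously, because the new boundary edge $e^{*}$ can make a piece's separation drop. I expect the homogeneity condition to be used precisely here: the factor $\alpha$ in Def.~\ref{def:homo_separ_cond}, guaranteeing $w(e^{*}) > \alpha \max_{e} w(e)$ over the incident cluster, should ensure that removing the heavy interior edge $e^{*}$ lowers each piece's dispersion far enough to dominate any loss of separation, which is what finally yields $V_C(\mathcal{C}_m^i) \geq V_C(\mathcal{C}_m)$ and hence $\DBCVI_{k+1} \geq \DBCVI_k$.
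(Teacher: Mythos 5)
Your overall strategy coincides with the paper's proof of Th.~\ref{thm:thmA}: the paper likewise takes $e^*$ to be a heaviest edge of the affected subtree $\mathcal{C}_i$, uses weak homogeneity (packaged as Lemma~\ref{lem:heaviest_edge_inCut}) to conclude that this heaviest edge lies in $Cut_{\mathcal{G}}(\mathcal{T})$ and hence in $Cut_k$, so that $\DISP(\mathcal{C}_i)=w(e^*)$, reduces $\Delta=\DBCVI_{k+1}(e^*)-\DBCVI_k$ to the single modified term of the partition, and establishes the per-piece inequality $V_C(\mathcal{C}_i^l)\geq V_C(\mathcal{C}_i)$ for $l\in\{1,2\}$, which suffices because $|\mathcal{C}_i^1|+|\mathcal{C}_i^2|=|\mathcal{C}_i|$. (Your pigeonhole detour is unnecessary --- the hypothesis $Cut_k\neq\emptyset$ directly provides a current subtree containing an uncut edge of $Cut_{\mathcal{G}}(\mathcal{T})$ --- but it is harmless.)

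The genuine gap is that you stop exactly at the decisive step: you never prove $V_C(\mathcal{C}_m^l)\geq V_C(\mathcal{C}_m)$, you only announce that you ``expect'' the factor $\alpha_{\mathcal{T}}$ of Def.~\ref{def:homo_separ_cond} to make the dispersion decrease dominate the separation loss. That is not how the argument closes, and no such quantitative balancing is performed in the paper: the $\alpha$-margin is used once, inside Lemma~\ref{lem:heaviest_edge_inCut}, i.e., precisely in the step you did carry out. The per-piece comparison is instead resolved structurally. Since $e^*$ is heaviest in $\mathcal{C}_i$, each piece satisfies $\DISP(\mathcal{C}_i^l)\leq w(e^*)=\DISP(\mathcal{C}_i)$, while its incident cuts are the old ones (each of weight $\geq\SEP(\mathcal{C}_i)$) together with $e^*$. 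The paper then splits on the sign of $V_C(\mathcal{C}_i)$. If $V_C(\mathcal{C}_i)\leq 0$, i.e.\ $\SEP(\mathcal{C}_i)\leq\DISP(\mathcal{C}_i)=w(e^*)$, then $\SEP(\mathcal{C}_i^l)\geq\SEP(\mathcal{C}_i)$ and $\DISP(\mathcal{C}_i^l)\leq\DISP(\mathcal{C}_i)$, and the monotonicity of $V_C$ (nondecreasing in $\SEP$, nonincreasing in $\DISP$) concludes with no homogeneity at all. If $V_C(\mathcal{C}_i)\geq 0$, then $\SEP(\mathcal{C}_i^l)=w(e^*)\geq\DISP(\mathcal{C}_i^l)$, so both indices take the form $1-\DISP/\SEP$ and the paper concludes $V_C(\mathcal{C}_i^l)\geq V_C(\mathcal{C}_i)$, whence $\Delta\geq 0$. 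In particular, the ``loss of separation'' you worry about is handled by the identity $\SEP(\mathcal{C}_i^l)=w(e^*)=\DISP(\mathcal{C}_i)$ --- the new boundary edge weighs at least as much as every remaining edge of each piece --- not by an $\alpha$-scaled estimate; without this exact computation and the sign case split, your sketch does not yet constitute a proof of the theorem.
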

\begin{proof}
See supplementary material.
\end{proof}

\begin{theorem} \label{thm:thmB}
Let us consider a graph $\mathcal{G} = (V, E, w)$ with $K$ homogeneous clusters $C^*_1, \ldots, C^*_K$ and $\mathcal{T}$ an MST of $\mathcal{G}$.

Assume now that at step $k < K - 1$, DBMSTClu built $k+1$ subtrees $\mathcal{C}_1, \ldots, \mathcal{C}_{k+1}$ by cutting \\
$e_1, \ e_2, \ \ldots, \ e_k \in E$. We still denote $Cut_k := Cut_{\mathcal{G}}(\mathcal{T}) \backslash \{ e_1, \ e_2, \ \ldots, \ e_k \}$.

If $Cut_k \neq \emptyset$ then $\underset{e \in \mathcal{T} \backslash \{ e_1, \ e_2, \ \ldots, \ e_k \}}{\argmax} DBCVI_{k+1}(e) \subset Cut_k$ i.e. the cut edge at step $k+1$ is in $Cut_k$. 
\end{theorem}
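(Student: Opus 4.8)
The plan is to show that the best achievable value of $\DBCVI_{k+1}$ over all uncut edges is attained strictly more by a \emph{correct} cut (an edge of $Cut_k$) than by any \emph{internal} cut (an edge lying inside a single true cluster). First I would record the structural setup: in the regime of interest --- which is exactly what this statement, applied inductively at the earlier steps together with the partitioning-topology assumption (Def.~\ref{def:partition_topo}), guarantees --- the already-cut edges $e_1,\dots,e_k$ all belong to $Cut_{\mathcal{G}}(\mathcal{T})$, so each current subtree $\mathcal{C}_m$ is a union of true clusters joined by the yet-uncut correct edges of $\mathcal{T}$. Consequently every remaining candidate edge is either a correct edge (in $Cut_k$) or an edge internal to one true cluster (not in $Cut_k$). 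The key local observation is that cutting an edge $e$ inside a subtree $\mathcal{C}$ leaves the dispersion and separation of every \emph{other} current subtree unchanged --- the new cut is incident only to the two pieces of $\mathcal{C}$ --- so
\[
\DBCVI_{k+1}(e)-\DBCVI_k=\tfrac{1}{N}\bigl(|\mathcal{A}|\,V_C(\mathcal{A})+|\mathcal{B}|\,V_C(\mathcal{B})-|\mathcal{C}|\,V_C(\mathcal{C})\bigr),
\]
where $\mathcal{A},\mathcal{B}$ are the two pieces produced by $e$. Hence it suffices to compare these local marginal gains and to prove $\max_{e'\in Cut_k}\DBCVI_{k+1}(e')>\max_{e\notin Cut_k}\DBCVI_{k+1}(e)$.

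Next I would bound the two types of cuts separately. For a correct cut $e^{(ij)}\in Cut_k$, the created boundary has large weight, so by the homogeneous separability condition ($\ubar{\alpha}_i\max_{e\in E(\mathcal{T}_{|C_i^*})}w(e)<w(e^{(ij)})$ with $\ubar{\alpha}_i\ge 1$, cf.\ Def.~\ref{def:homo_separ_cond}) the separations of the two pieces dominate their dispersions, keeping their validity indices positive. Invoking Theorem~\ref{thm:thmA} for the non-negativity of the optimal gain and sharpening it through the \emph{strict} inequality in Def.~\ref{def:homo_separ_cond}, I would conclude that the best correct cut yields a strictly positive gain, i.e.\ $\max_{e'\in Cut_k}\DBCVI_{k+1}(e')>\DBCVI_k$.

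For an internal cut $e$ of a true cluster $C_i^*$, its weight satisfies $w(e)\le\max_{e\in E(\mathcal{T}_{|C_i^*})}w(e)$, and since $e$ becomes an incident cut of both resulting pieces, each piece has separation at most $w(e)$. I would then split into cases: if a piece contains another whole cluster joined by a remaining correct edge, its dispersion exceeds that correct-edge weight, which by homogeneous separability is strictly larger than $w(e)$, forcing a negative validity index for that piece; if a piece lies entirely inside $C_i^*$, I compare $w(e)$ to its internal dispersion and again use Def.~\ref{def:homo_separ_cond} to bound its validity index. Weighting these bounds by the component sizes $|\mathcal{A}|,|\mathcal{B}|$ and comparing with $|\mathcal{C}|\,V_C(\mathcal{C})$, I would obtain $\DBCVI_{k+1}(e)\le\DBCVI_k$ for every internal cut. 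Combining the two bounds gives $\max_{e'\in Cut_k}\DBCVI_{k+1}(e')>\DBCVI_k\ge\max_{e\notin Cut_k}\DBCVI_{k+1}(e)$, so every maximizer lies in $Cut_k$, which is the claim.

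The main obstacle is the case analysis for internal cuts: when the split subtree $\mathcal{C}$ already has a negative validity index (a genuine union of several true clusters), one must control precisely how the separation --- and to a lesser extent the dispersion --- redistributes to the two pieces $\mathcal{A},\mathcal{B}$, and then show that $|\mathcal{A}|\,V_C(\mathcal{A})+|\mathcal{B}|\,V_C(\mathcal{B})$ does not exceed $|\mathcal{C}|\,V_C(\mathcal{C})$. Making the separation bound $\SEP(\cdot)\le w(e)$ interact correctly with pieces that may or may not inherit an original boundary cut --- together with recovering \emph{strictness} for the best correct cut, so that ties with internal cuts are excluded --- is the delicate technical heart of the argument.
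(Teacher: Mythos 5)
Your architecture differs from the paper's in one decisive respect, and it is exactly where the argument breaks. You route both comparisons through the current value $\DBCVI_k$, claiming (i) the best edge of $Cut_k$ yields a strictly positive gain, and (ii) every internal edge yields a non-positive gain, i.e.\ $\DBCVI_{k+1}(e)\le \DBCVI_k$ for all $e\notin Cut_k$. Claim (ii) is false at intermediate steps under the stated hypotheses. Take three weakly homogeneous clusters in a chain, all intra-cluster weights equal to $0.2$ (so $\alpha_{\mathcal{T}_{|C^*_i}}=1$ and homogeneity only requires boundary weights above $0.2$), with $w(e^{(12)})=0.21$, $w(e^{(23)})=0.95$ and $|C^*_2|=|C^*_3|=5$. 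In the state where $e_1=e^{(12)}\in Cut_{\mathcal{G}}(\mathcal{T})$ has been cut (membership of past cuts in $Cut_{\mathcal{G}}(\mathcal{T})$ is the only property of the trajectory that your proof, like the paper's, actually invokes), the component $\mathcal{C}=\mathcal{T}_{|C^*_2\cup C^*_3}$ has $\SEP(\mathcal{C})=0.21$, $\DISP(\mathcal{C})=0.95$, hence $V_C(\mathcal{C})\approx -0.779$. Cutting the internal edge of $C^*_2$ that isolates the endpoint of $e^{(12)}$ creates a singleton $\mathcal{A}$ with $V_C(\mathcal{A})=1$ and a piece $\mathcal{B}$ with $\SEP(\mathcal{B})=0.2$, $\DISP(\mathcal{B})=0.95$, $V_C(\mathcal{B})\approx-0.789$, so the net change is
\begin{equation*}
\frac{1}{N}\Bigl(1\cdot 1+9\cdot(-0.789)-10\cdot(-0.779)\Bigr)\approx \frac{1.69}{N}>0 .
\end{equation*}
An internal cut can therefore strictly improve $\DBCVI_k$; the theorem's conclusion survives in this example only because the correct cut $e^{(23)}$ improves it \emph{even more}. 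Your claim (ii) is in effect an extrapolation of Th.~\ref{thm:thmC} to intermediate steps, but that argument relies on every component being a single cluster, which fails before step $K-1$: a component that is still a union of clusters can have a strongly negative validity index, and shaving pieces off it can raise the global index.

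The paper's proof is structured precisely to avoid this pivot: it never compares candidate cuts to $\DBCVI_k$. It localizes to two clusters $C^*_1, C^*_2$ joined in $\mathcal{T}$ by an edge $e^*\in Cut_k$ and proves the head-to-head inequality $\DBCVI_{k+1}(e^*)>\DBCVI_{k+1}(\tilde{e})$ for every internal $\tilde{e}$ in the union, writing the difference as $A-B$, where $B_2\le 0$ follows from Lem.~\ref{lem:heaviest_edge_inCut} (the heaviest edge of the union is $e^*$, so the piece containing $e^*$ has $\DISP=w(e^*)\ge\SEP$) and $B_1<A$ follows from weak homogeneity; a combination over couples of clusters then yields the global claim. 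Note also that your claim (i) is not delivered by Th.~\ref{thm:thmA} either: its proof gives only $\DBCVI_{k+1}\ge\DBCVI_k$, non-strictly and only for a heaviest edge of $Cut_k$, so even where (ii) happens to hold your chain would not exclude ties with internal edges, whereas the paper's pairwise inequality is strict. To repair your route you would have to either carry greedy-optimality of all past cuts through the induction---a substantially stronger invariant than the paper uses, restricting to actually reachable states---or abandon the $\DBCVI_k$ pivot and compare correct against internal cuts directly, as the paper does.
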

\begin{proof}
See supplementary material.
\end{proof}

\begin{theorem} \label{thm:thmC} 
Let us consider a graph $\mathcal{G} = (V, E, w)$ with $K$ weakly homogeneous clusters $C^*_1, \ldots, C^*_K$ and $\mathcal{T}$ an MST of $\mathcal{G}$.
Let now assume that at step $K-1$, DBMSTClu built $K$ subtrees $\mathcal{C}_1, \ldots, \mathcal{C}_{K}$ by cutting $e_1, \ e_2, \ \ldots, \ e_{K-1} \in E$. We still denote $Cut_{K-1} := Cut_{\mathcal{G}}(\mathcal{T}) \backslash \{ e_1, \ e_2, \ \ldots, \ e_{K-1} \}$.
 
Then, for all $e \in \mathcal{T} \backslash \{ e_1, \ e_2, \ \ldots, \ e_{K-1} \}$, $DBCVI_{K}(e) < DBCVI_{K-1}$ i.e. the algorithm stops: no edge gets cut during step $K$.
\end{theorem}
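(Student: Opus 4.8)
The plan is to show that at step $K-1$ the algorithm has already recovered the ground-truth partition, and that every remaining admissible cut strictly lowers the index. First I would argue, by induction via Theorems~\ref{thm:thmA} and~\ref{thm:thmB}, that the $K-1$ edges $e_1,\dots,e_{K-1}$ removed during the first $K-1$ steps are exactly the elements of $Cut_{\mathcal{G}}(\mathcal{T})$: Theorem~\ref{thm:thmB} forces each successive cut to lie in the remaining $Cut_k$, and since a partitioning topology satisfies $|Cut_{\mathcal{G}}(\mathcal{T})| = K-1$, after $K-1$ steps one has $Cut_{K-1} = \emptyset$. Hence the current subtrees $\mathcal{C}_1,\dots,\mathcal{C}_K$ coincide with $C^*_1,\dots,C^*_K$, and any candidate edge $e \in \mathcal{T} \backslash \{e_1,\dots,e_{K-1}\}$ is necessarily an intra-cluster edge of a single cluster $C^*_m$.

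Next I would isolate the effect of cutting such an $e$. Removing an intra-cluster edge of $C^*_m$ splits it into two pieces $C_{m,1}$ and $C_{m,2}$ and creates no cut edge incident to any $C^*_i$ with $i\neq m$; therefore $\SEP$, $\DISP$, and hence $V_C$, are left unchanged for every other cluster. Writing $n=|C^*_m|$, $n_1=|C_{m,1}|$, $n_2=|C_{m,2}|$, the claim $\DBCVI_K(e) < \DBCVI_{K-1}$ reduces to the purely local inequality
\begin{equation*}
n_1\,V_C(C_{m,1}) + n_2\,V_C(C_{m,2}) < n\,V_C(C^*_m).
\end{equation*}
I would then exploit weak homogeneity of $C^*_m$: since $H_{\mathcal{T}_{|C_m^*}}(e^{(mj)})$ holds for the cut edge realising $\SEP(C^*_m)$, one gets $\ubar{\alpha}_m\,\DISP(C^*_m) < \SEP(C^*_m)$, so in particular $\DISP(C^*_m) < \SEP(C^*_m)$ and $V_C(C^*_m) = 1 - \DISP(C^*_m)/\SEP(C^*_m) \in (0,1)$. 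Because $e$ is internal, $w(e) \leq \DISP(C^*_m) < \SEP(C^*_m)$, and $w(e)$ is strictly below every original cut edge incident to $C^*_m$; consequently $\SEP(C_{m,1}) = \SEP(C_{m,2}) = w(e)$.

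The core of the argument is the displayed local inequality, which I would settle by a case analysis on the dispersions $d_i = \DISP(C_{m,i})$ relative to $w(e)$, using $w(e)\le \DISP(C^*_m)$, the identity $\DISP(C^*_m)=\max(d_1,d_2,w(e))$, and the bound $d_i \geq \min_{e'\in E(\mathcal{T}_{|C^*_m})} w(e')$ valid whenever $C_{m,i}$ is not a singleton. The latter yields the per-piece estimate $V_C(C_{m,i}) \leq 1 - 1/\ubar{\alpha}_m$, which combined with the homogeneity inequality $\SEP(C^*_m) > \ubar{\alpha}_m\,\DISP(C^*_m)$ delivers the strict decrease whenever both pieces carry at least one edge.

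I expect the main obstacle to be the degenerate case in which one of the two pieces is a singleton (or a very small subtree): a singleton has $\DISP = 0$ and thus $V_C = 1$, the largest possible value, so the naive per-piece bound no longer suffices and the decrease must instead be extracted from the size weights $n_1,n_2$ together with the full strength of the factor $\ubar{\alpha}_m$. Concretely, one must show that the gain contributed by the saturated singleton term is always outweighed by the loss on the complementary piece, whose separation collapses from $\SEP(C^*_m)$ down to the small internal weight $w(e)$. This quantitative balancing — rather than the reduction and case setup — is the delicate step on which the whole statement hinges, and I would devote the bulk of the written proof to it.
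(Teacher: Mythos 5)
Up to the point where you stop, your argument is the paper's own, in a cleaner form: Theorems~\ref{thm:thmA} and~\ref{thm:thmB} give that $e_1,\dots,e_{K-1}$ exhaust $Cut_{\mathcal{G}}(\mathcal{T})$, the DBCVI variation localizes to the cluster $C^*_m$ containing the candidate edge $e$ (other clusters' $\SEP$ and $\DISP$ are untouched), both pieces acquire separation $w(e)$, and your comparison of the per-piece bound $V_C(C_{m,i})\le 1-\min w/\max w$ (minima and maxima over $E(\mathcal{T}_{|C^*_m})$) with $V_C(C^*_m)=1-\DISP/\SEP$, where weak homogeneity forces $\DISP/\SEP<\min w/\max w$, is exactly the inequality on which the paper's proof ends, namely $N\Delta \ge |\mathcal{C}^*_i|\left(\frac{\max w}{\min w}-\frac{\DISP}{\SEP}\right)>0$. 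If anything, your orientation is the more rigorous one: the paper's stated ``worst case'' (pieces with $\SEP=\min w$ and $\DISP=\max w$) actually minimizes the pieces' validity indices, i.e.\ maximizes $\Delta$, whereas a valid proof needs an upper bound on the pieces' indices, which is what you supply.

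The genuine gap is the one you name yourself and then defer: the singleton case is never proved, only announced, and the ``quantitative balancing'' you plan to carry out there cannot succeed, because the claimed strict inequality is false for singleton-producing cuts. Take $K=2$ clusters, each a $3$-node path with both internal weights $0.3$, joined by a single edge of weight $0.4$: weak homogeneity holds, since $\alpha_{\mathcal{T}_{|C^*_i}}=1$ and $0.3<0.4$. After the one correct cut, each cluster has $V_C=(0.4-0.3)/0.4=1/4$, so $\DBCVI_{K-1}=1/4$. Cutting a leaf edge of one cluster produces a singleton with $\DISP=0$, $\SEP=0.3$, hence $V_C=1$, and a $2$-node piece with $\SEP=\DISP=0.3$, hence $V_C=0$; the new index is $\frac{1}{6}\cdot 1+\frac{2}{6}\cdot 0+\frac{3}{6}\cdot\frac{1}{4}=\frac{7}{24}>\frac{1}{4}$, so the algorithm would cut. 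Note that the paper's proof never confronts this either: its extremal configuration sets both pieces' dispersions equal to $\max_{e'\in E(\mathcal{C}^*_i)}w(e')$, which tacitly presupposes each piece contains an edge; singleton splits are only covered by the informal remark after Definition~\ref{def:cluster} that singletons are treated as noise, i.e.\ as degenerate ``generalized clusters'' (and indeed the paper's own experiments exhibit isolated singletons). So your instinct that the whole statement hinges on this case is exactly right, but the correct resolution is not a finer balancing of $n_1,n_2$ against $\ubar{\alpha}_m$ --- with $\ubar{\alpha}_m$ near $1$, $\SEP$ barely above $\ubar{\alpha}_m\max w$, and a small cluster, no such bound can hold --- it is to exclude or reinterpret singleton-producing cuts, or to strengthen the homogeneity hypothesis so that $\SEP(C^*_m)$ is large relative to $|C^*_m|$.
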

\begin{proof}
See supplementary material.
\end{proof}

\begin{corollary}
Let us consider a graph $\mathcal{G} = (V, E, w)$ with $K$ weakly homogeneous clusters $C^*_1, \ldots, C^*_K$ and $\mathcal{T}$ an MST of $\mathcal{G}$.
$DBMSTClu(\mathcal{T})$ stops after $K-1$ iterations and the $K$ subtrees produced match exactly the clusters i.e. under homogeneity condition, the algorithm finds automatically the underlying clustering partition.
\end{corollary}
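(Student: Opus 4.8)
The plan is to treat this corollary as a combinatorial assembly of the three preceding theorems by an induction on the iteration counter, so that the only genuinely new work is an edge-counting argument together with careful index bookkeeping. First I would establish that the ground-truth cut set has exactly $K-1$ elements, i.e. $|Cut_{\mathcal{G}}(\mathcal{T})| = K-1$. By Theorem~\ref{thm:MSTuseful}, for any two nodes of a cluster $C^*_i$ the $\mathcal{T}$-path between them stays inside $C^*_i$, so each cluster induces a connected subtree of $\mathcal{T}$. The $K$ induced subtrees are vertex-disjoint and span all $N$ nodes, hence contain $N-K$ edges in total; the remaining $(N-1)-(N-K) = K-1$ edges of $\mathcal{T}$ are precisely the inter-cluster edges forming $Cut_{\mathcal{G}}(\mathcal{T})$.

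Next I would prove by induction the invariant: for every $k \in \{0, \ldots, K-1\}$, after $k$ iterations the algorithm has cut exactly $k$ edges, all lying in $Cut_{\mathcal{G}}(\mathcal{T})$, so that $|Cut_k| = K-1-k$. The base case $k=0$ is immediate from Step~1. For the inductive step with $k < K-1$, the hypothesis gives $|Cut_k| = K-1-k \geq 1$, hence $Cut_k \neq \emptyset$. Theorem~\ref{thm:thmA} then yields $\DBCVI_{k+1} \geq \DBCVI_k$, which guarantees the greedy rule finds a cut that does not decrease the index, so that \textsc{DBMSTClu} performs a cut at step $k+1$ rather than breaking. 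Theorem~\ref{thm:thmB} ensures this cut edge belongs to $Cut_k \subset Cut_{\mathcal{G}}(\mathcal{T})$, removing exactly one element of the true cut set and giving $|Cut_{k+1}| = K-2-k$, which closes the induction.

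Finally I would invoke termination and exactness. After $K-1$ iterations the invariant gives $|Cut_{K-1}| = 0$, so all inter-cluster edges have been removed and the $K$ subtrees produced coincide exactly with the induced subtrees of $C^*_1, \ldots, C^*_K$ from Step~1. Theorem~\ref{thm:thmC} shows that at step $K$ every remaining candidate edge $e$ satisfies $\DBCVI_K(e) < \DBCVI_{K-1}$, so the temporary cut stays empty and the algorithm executes its break, halting after precisely $K-1$ cuts with the ground-truth partition. The main obstacle is not any hard estimate, since those live entirely inside Theorems~\ref{thm:thmA}--\ref{thm:thmC}, but rather ruling out premature termination: one must check that the while-guard $dbcvi < 1.0$ never fires early. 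This follows because each genuine cluster has more than two points and hence contains an edge, so its Dispersion is strictly positive, forcing its validity index, and therefore $\DBCVI$, to stay strictly below $1$ throughout the run. With that observation in place, termination is governed solely by the greedy break, which Theorems~\ref{thm:thmA} and~\ref{thm:thmC} pin down exactly.
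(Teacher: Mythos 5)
Your proof is correct and follows essentially the same route as the paper, which simply assembles Theorems~\ref{thm:thmA}, \ref{thm:thmB} and~\ref{thm:thmC} (the paper's two-line proof cites only \ref{thm:thmA} and~\ref{thm:thmC}, with the fact $|Cut_{\mathcal{G}}(\mathcal{T})| = K-1$ invoked inside the proof of Theorem~\ref{thm:thmC}). Your additions --- the explicit edge count via Theorem~\ref{thm:MSTuseful}, the induction invariant, and the check that the guard $dbcvi < 1.0$ cannot fire early since some component always contains an edge of positive weight --- are welcome details the paper leaves implicit, not a different argument.
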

\begin{proof}
Th.~\ref{thm:thmA} and~\ref{thm:thmC} ensure that under homogeneity condition on all clusters, the algorithm performs the $K-1$ distinct cuts within $Cut_{\mathcal{G}}(\mathcal{T})$ and stops afterwards.  
By definition of $Cut_{\mathcal{G}}(\mathcal{T})$, it means the DBMSTClu correctly builds the $K$ clusters. 
\end{proof}

\subsection{Private MST-based clustering} \label{sec:private_mst_clustering}

This section presents our new node clustering algorithm \textsc{PTClust} for weight differential privacy. It relies on a mixed adaptation of \textsc{PAMST} algorithm~\citep{Pinot2017} for recovering a differentially-private MST of a graph and \textsc{DBMSTClu}.

\subsubsection{PAMST algorithm}
Given a simple-undirected-weighted graph $\mathcal{G}=(V,E,w)$, \textsc{PAMST} outputs an almost minimal weight spanning tree topology under differential privacy constraints. 
It relies on a Prim-like MST algorithm, and an iterative use of the graph based Exponential mechanism. \textsc{PAMST} takes as an input a weighted graph, and a utility function. It outputs the topology of a spanning tree which weight is almost minimal. Algorithm~\ref{alg:Private-MST} presents this new method, using the following utility function: 
$$ \begin{array}{ccccc}
u_{G} & : & \mathcal{W}_{E} \times \mathcal{R} & \to & \mathbb{R} \\
& & (w,r) & \mapsto & - |w(r) - \min\limits_{r' \in \mathcal{R}} w(r')|. \end{array}$$
\textsc{PAMST} starts by choosing an arbitrary node to construct iteratively the tree topology. At every iteration, it uses the Exponential mechanism to find the next edge to be added to the current tree topology while keeping the weights private. This algorithm is the state of the art to find a spanning tree topology under differential privacy.
For readability, let us introduce some additional notations. Let $S$ be a set of nodes from $G$, and $\mathcal{R}_{S}$ the set of edges that are incident to one and only one node in $S$ (also denoted xor-incident). For any edge $r$ in such a set, the incident node to $r$ that is not in $S$ is denoted $r_{\rightarrow}$.
Finally, the restriction of the weight function to an edge set $\mathcal{R}$ is denoted $w_{|\mathcal{R}}$.
\begin{algorithm}[!htbp]
\caption{\textsc{PAMST}$(G,u_{G},w,\epsilon)$}
\label{alg:Private-MST}
\begin{algorithmic}[1] 
\STATE {\bfseries Input:} $\mathcal{G}=(V,E,w)$ a weighted graph (separately the topology $G$ and the weight function $w$), $\epsilon$ a degree of privacy and $u_{G}$ utility function. 
\STATE {\bfseries Pick} $v \in V$ at random 
\STATE $S_{V} \leftarrow \{v\}$
\STATE $S_{E} \leftarrow \emptyset$
\WHILE{$S_{V}\neq V$ }
\STATE $r=\mathcal{M}_{Exp}(\mathcal{G},w,u_{G},\mathcal{R}_{S_{V}},\frac{\epsilon}{|V|-1})$
\STATE $S_{V} \leftarrow S_{V} \cup \{r_{\rightarrow}\}$
\STATE $S_{E} \leftarrow S_{E} \cup \{r\}$
\ENDWHILE
\RETURN $S_{E}$
\end{algorithmic}
\end{algorithm}

Theorem~\ref{thm:Privacy-PAMST} states that using PAMST to get an almost minimal spanning tree topology preserves weight-differential privacy.
\begin{theorem}
\label{thm:Privacy-PAMST}
Let $G=(V,E)$ be the topology of a simple-undirected graph, then $\forall \epsilon >0$, \textsc{PAMST}$\left(G,u_{G},\bullet,\epsilon\right)$  is $\epsilon$- differentially private on $G$. 
\end{theorem}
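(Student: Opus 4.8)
The plan is to prove that \textsc{PAMST} is $\epsilon$-differentially private by decomposing it into its $|V|-1$ iterations, showing each iteration preserves privacy, and then combining them via composition and post-processing. The key structural observation is that \textsc{PAMST} is essentially an adaptive sequence of calls to the graph-based Exponential mechanism, where at iteration $i$ the algorithm invokes $\mathcal{M}_{Exp}$ with privacy budget $\frac{\epsilon}{|V|-1}$ on the range $\mathcal{R}_{S_V}$ of xor-incident edges. Since the topology $G$ is public and fixed, the node set $S_V$ grown so far depends only on the previously selected edges (the outputs of earlier mechanism calls), not directly on the private weights in a way that breaks neighboring relations; the weight function enters only through the utility $u_G$ inside each Exponential mechanism call.

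First I would fix two neighboring weight functions $w \sim w'$ and analyze a single iteration. The crucial quantity is the sensitivity $\Delta u_G$ of the utility function $u_G(w,r) = -|w(r) - \min_{r' \in \mathcal{R}} w(r')|$. I would bound this by observing that changing from $w$ to $w'$ alters each edge weight by at most $\mu$ in $\ell_\infty$ norm, so both $w(r)$ and the minimum $\min_{r'} w(r')$ shift by at most $\mu$; a triangle-inequality argument then gives $\Delta u_G \leq \mu$ (or a small constant multiple of $\mu$). By Theorem~\ref{th:privacyexpmechanism}, each call $\mathcal{M}_{Exp}(\mathcal{G},w,u_G,\mathcal{R}_{S_V},\frac{\epsilon}{|V|-1})$ is therefore $\frac{\epsilon}{|V|-1}$-differentially private with respect to the weight function, for the fixed range determined by the current $S_V$.

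Next I would handle the adaptivity: the range $\mathcal{R}_{S_V}$ at each step is not fixed in advance but determined by earlier outputs. This is exactly the setting of adaptive composition, so by Theorem~\ref{thm:basic-composition} the composition of the $|V|-1$ iterations, each $\frac{\epsilon}{|V|-1}$-differentially private, is $\left((|V|-1)\cdot\frac{\epsilon}{|V|-1}, 0\right) = (\epsilon,0)$-differentially private. The algorithm's internal bookkeeping of $S_V$ and $S_E$, and the final return of $S_E$, is a deterministic post-processing of the sequence of selected edges, so by Theorem~\ref{thm:postprocessing} the released spanning tree topology remains $\epsilon$-differentially private.

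The main obstacle will be making the adaptive-composition argument fully rigorous, because the output range $\mathcal{R}_{S_V}$ itself depends on the private data through the history of mechanism outputs; I would need to argue carefully that the neighboring relation is preserved across iterations (the topology $G$ is identical for $w$ and $w'$, so the candidate edge sets coincide given the same history) and that the standard composition theorem applies to mechanisms whose range is chosen adaptively based on prior outputs. A secondary technical point is the precise constant in the sensitivity bound on $\Delta u_G$, which must be pinned down to ensure the per-step budget $\frac{\epsilon}{|V|-1}$ indeed yields the claimed privacy; I would verify that the minimum term and the selected-weight term cannot both move adversarially to inflate the sensitivity beyond $\mu$.
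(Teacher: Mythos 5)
Your proposal is correct and follows essentially the canonical argument: note that the paper itself states Theorem~\ref{thm:Privacy-PAMST} without proof, importing it from \cite{Pinot2017}, and the original proof there is exactly your decomposition into $|V|-1$ adaptive calls to the Exponential mechanism at budget $\frac{\epsilon}{|V|-1}$ each, combined via Theorem~\ref{thm:basic-composition} and Theorem~\ref{thm:postprocessing}. Your two flagged obstacles are in fact minor: the range $\mathcal{R}_{S_V}$ is a deterministic function of the public topology and the prior outputs alone (so, conditioning on any output prefix, both neighboring runs face the identical range and Theorem~\ref{thm:basic-composition} applies directly), and the exact sensitivity constant (a triangle inequality on both the $w(r)$ term and the $\min$ term gives $\Delta u_{G}\leq 2\mu$) affects only the utility guarantee, not the privacy claim, since by Theorem~\ref{th:privacyexpmechanism} each call of $\mathcal{M}_{Exp}$ is $\frac{\epsilon}{|V|-1}$-differentially private whatever $\Delta u_{G}$ is, the mechanism being internally normalized by it.
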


\subsubsection{Differentially private clustering}
The overall goal of this Section is to show that one can obtain a differentially private clustering algorithm by combining \textsc{PAMST} and \textsc{DBMSTClu} algorithms.
However, \textsc{PAMST} does not output a weighted tree which is inappropriate for clustering purposes. To overcome this, one cloud rely on a sanitizing mechanism such as the Laplace mechanism. Moreover, since \textsc{DBMSTClu} only takes weights from (0,1], two normalizing parameters $\tau$ and $p$ are introduced, respectively to ensure lower and upper bounds to the weights that fit within \textsc{DBMSTClu} needs. This sanitizing mechanism is called the Weight-Release mechanism.
Coupled with \textsc{PAMST}, it will allows us to produce a weighted spanning tree with differential privacy, that will be exploited in our private graph clustering.

\begin{definition}[Weight-Release mechanism]
Let $\mathcal{G}=(G,w)$ be a weighted graph, $\epsilon > 0$ a privacy parameter, $s$ a scaling parameter, $\tau \geq 0$, and $p \geq 1$ two normalization parameters. The Weight-Release mechanism is defined as
$$ \mathcal{M}_{w.r}(G,w,s,\tau,p)=\left(G,w'=\frac{w+ (Y_{1},...,Y_{|E|}) + \tau }{p}\right)$$ where $Y_{i}$ are i.i.d. random variables drawn from $Lap\left(0,s\right)$. 
With $w+ (Y_{1},...,Y_{|E|})$ meaning that if one gives an arbitrary order to the edges $E=(e_{i})_{i \in [|E|]}$, one has $\forall  i \in [|E|]$, $w'(e_{i})=w(e_{i}) + Y_{i}$.
\end{definition}
The following theorem presents the privacy guarantees of the Weight-Release mechanism. 

\begin{theorem}
\label{thm:relese-mechanism}
Let $G=(V,E)$ be the topology of a simple-undirected graph, $\tau\geq 0$, $p\geq 1$, then $\forall \epsilon >0$, $\mathcal{M}_{w.r}\left(G,\bullet,\frac{\mu}{\epsilon},\tau,p\right)$ is $\epsilon$- differentially private on $G$.  
\end{theorem}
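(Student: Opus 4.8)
The plan is to read the Weight-Release mechanism as the graph-based Laplace mechanism of Theorem~\ref{thm:Laplace} applied to the ``report all edge weights'' query, post-composed with a fixed deterministic reshaping, and then to invoke the post-processing Theorem~\ref{thm:postprocessing}. First I would fix the edge ordering $E = (e_i)_{i \in [|E|]}$ used in the definition of $\mathcal{M}_{w.r}$ and introduce the vector query $f_G : \mathcal{W}_E \to \mathbb{R}^{|E|}$, $f_G(w) = (w(e_1), \ldots, w(e_{|E|}))$. With $Y_i$ drawn i.i.d.\ from $Lap(0, \mu/\epsilon)$, the intermediate random vector $w + (Y_1, \ldots, Y_{|E|})$ is exactly the output $\mathcal{M}_L(w, f_G, \epsilon)$ of the Laplace mechanism, provided the noise scale $\mu/\epsilon$ equals the prescribed scale $\Delta f_G / \epsilon$; the public topology $G$ is simply carried along unchanged.

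The first substantive step is therefore the sensitivity computation: I would show that the scale $\mu/\epsilon$ is the correct Laplace scale by establishing $\Delta f_G = \mu$. This uses the neighboring relation directly, since $\|w - w'\|_\infty \le \mu$ bounds each coordinate discrepancy $|w(e_i) - w'(e_i)|$ by $\mu$, which is the quantity that controls $\|f_G(w) - f_G(w')\|_1$. Granting the matching of scale and sensitivity, Theorem~\ref{thm:Laplace} yields that $w \mapsto (G,\, w + (Y_1, \ldots, Y_{|E|}))$ is $\epsilon$-differentially private on $G$.

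It then remains to absorb the normalization into a post-processing step. I would observe that $h : (G, v) \mapsto (G, (v + \tau)/p)$ is a deterministic mapping, since $\tau \ge 0$ and $p \ge 1$ are fixed parameters that do not depend on the private weight function $w$. Because $\mathcal{M}_{w.r}(G, \bullet, \mu/\epsilon, \tau, p) = h \circ \mathcal{M}_L(\bullet, f_G, \epsilon)$, Theorem~\ref{thm:postprocessing} transfers the $\epsilon$-privacy of the noisy-weight release to $\mathcal{M}_{w.r}$ for every $\epsilon > 0$, which is the claim.

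I expect the sensitivity step to be the only delicate point, and in two respects. First, one must keep the shift $\tau$ and the rescaling $1/p$ inside the deterministic map $h$ rather than folding them into $f_G$: otherwise $f_G$ would acquire a spurious factor $1/p$ in its sensitivity and the Laplace scale would no longer match. Second, one must verify the coordinatewise alignment between the i.i.d.\ $Lap(0, \mu/\epsilon)$ noise and the sensitivity under the $\|\cdot\|_\infty$ neighboring relation. A fully explicit alternative that bypasses the reduction is a direct density-ratio calculation: writing the output density of $\mathcal{M}_{w.r}$ as a product over edges, the change of variables contributes a Jacobian factor $p$ per coordinate that cancels between numerator and denominator, and the per-coordinate inequality $|p z_i - w'(e_i) - \tau| - |p z_i - w(e_i) - \tau| \le |w(e_i) - w'(e_i)|$ recovers exactly the Laplace bound; this is the computation I would supply if the post-processing reduction needs to be made self-contained.
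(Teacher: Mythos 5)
Your overall strategy coincides with the paper's own proof: there, too, $\mathcal{M}_{w.r}$ scaled to $\frac{\mu}{\epsilon}$ is broken into a graph-based Laplace mechanism followed by the deterministic post-processing $v \mapsto (v+\tau)/p$, and the conclusion is drawn from Theorems~\ref{thm:Laplace} and~\ref{thm:postprocessing}. Your care in keeping $\tau$ and $p$ inside the post-processing map $h$ rather than inside the query, and in fixing the edge ordering so that the noisy vector is literally $\mathcal{M}_L(w,f_G,\epsilon)$, is exactly the intended reading.

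However, the step you yourself flag as delicate --- the sensitivity computation --- is not established by the justification you give, and this is a genuine gap. The paper defines $\Delta f_G = \max_{w \sim w'}\|f_G(w)-f_G(w')\|_1$, and its neighboring relation is the sup-norm condition $\|w-w'\|_\infty \leq \mu$, under which neighboring weight functions may differ by up to $\mu$ on \emph{every} edge simultaneously. For your identity query $f_G(w)=(w(e_1),\ldots,w(e_{|E|}))$ this gives $\Delta f_G = |E|\mu$, not $\mu$: the per-coordinate bound you invoke controls $\|f_G(w)-f_G(w')\|_\infty$, whereas the sensitivity is measured in $\ell_1$. Consequently noise of scale $\mu/\epsilon$ instantiates the Laplace mechanism at privacy level $|E|\epsilon$ rather than $\epsilon$, and your fallback density-ratio computation exhibits the same issue once the product over edges is carried out: each coordinate contributes a factor at most $e^{\epsilon}$ via your inequality $|pz_i - w'(e_i)-\tau| - |pz_i - w(e_i)-\tau| \leq |w(e_i)-w'(e_i)| \leq \mu$, so the full ratio is bounded only by $e^{|E|\epsilon}$. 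The identity $\Delta f_G = \mu$ would hold if neighbors were allowed to differ on a single edge, or under an $\ell_1$-type neighborhood as in Sealfon's model, but not under the neighboring relation this paper adopts. In fairness, the paper's own one-line proof silently makes the same identification, so you have faithfully reproduced its route; but as written, neither your sensitivity step nor your direct computation delivers $\epsilon$-differential privacy at scale $\mu/\epsilon$ --- to get $\epsilon$ one needs scale $|E|\mu/\epsilon$ (or a per-edge accounting via Theorem~\ref{thm:basic-composition}, which again yields the factor $|E|$).
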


\begin{proof}
Given $\tau \geq 0$, $p \geq 1$, and $\epsilon >0$, the Weight release mechanism scaled to $\frac{\mu}{\epsilon}$ can be break down into a Laplace mechanism and a post-processing consisting in adding $\tau$ to every edge and dividing them by $p$. Using Theorems~\ref{thm:Laplace} and~\ref{thm:postprocessing}, one gets the expected result.
\end{proof}

So far we have presented \textsc{DBMSTClu} and \textsc{PAMST} algorithms, and the Weight-Release mechanism. Let us now introduce how to compose those blocks to obtain a Private node clustering in a graph, called \textsc{PTClust}.
\begin{algorithm}[!htbp]
\caption{\textsc{PTClust}$(G,w,u_{G},\epsilon,\tau,p)$}
\label{alg:Private-MST}
\begin{algorithmic}[1] 
\STATE {\bfseries Input:} $\mathcal{G}=(V,E,w)$ a weighted graph (separately the topology $G$ and the weight function $w$), $\epsilon$ a degree of privacy and $u_{G}$ utility function. 
\STATE $T=\textsc{PAMST}(G,w,u_{G},\epsilon/2)$
\STATE $\mathcal{T'}=\mathcal{M}_{w.r}(T,w_{|E(T)},\frac{2\mu}{\epsilon},\tau,p)$ 

\RETURN \textsc{DBMSTClu}$(\mathcal{T'})$
\end{algorithmic}
\end{algorithm}
The algorithm takes as an input a weighted graph (dissociated topology and weight function), a utility function, a privacy degree and two normalization parameters. It outputs a clustering partition. To do so, a spanning tree topology is produced using \textsc{PAMST}. Afterward a randomized and normalized version of the associated weight function is released using the Weight-release mechanism. Finally the obtained weighted tree is given as an input to \textsc{DBMSTClu} that performs a clustering partition. The following theorem ensures that our method preserves $\epsilon$-differential privacy.

\begin{theorem}
Let $G=(V,E)$ be the topology of a simple-undirected graph, $\tau \geq 0$, and $p\geq 1$, then $\forall \epsilon >0$,
\textsc{PTClust}$(G,\bullet,u_{G},\epsilon,\tau,p)$ is $\epsilon$-differentially private on $G$.
\end{theorem}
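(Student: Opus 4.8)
The plan is to decompose \textsc{PTClust} into its three constituent stages and argue privacy stage by stage, combining the stages with the composition and post-processing theorems. Writing the output as $\textsc{DBMSTClu}(\mathcal{T}')$, where $\mathcal{T}' = \mathcal{M}_{w.r}(T, w_{|E(T)}, \tfrac{2\mu}{\epsilon}, \tau, p)$ and $T = \textsc{PAMST}(G, w, u_{G}, \epsilon/2)$, note that $\mathcal{T}'$ is a weighted graph already carrying the topology $T$. Hence it suffices to show that the randomized map $w \mapsto \mathcal{T}'$ is $\epsilon$-differentially private on $G$, and then to treat \textsc{DBMSTClu} as deterministic post-processing of $\mathcal{T}'$.

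First I would bound the privacy of the two randomized stages separately, each at budget $\epsilon/2$. By Theorem~\ref{thm:Privacy-PAMST}, running \textsc{PAMST} with privacy parameter $\epsilon/2$ makes the map $w \mapsto T$ exactly $(\epsilon/2)$-differentially private on $G$. For the second stage, observe that the scale used is $\tfrac{2\mu}{\epsilon} = \tfrac{\mu}{\epsilon/2}$, so Theorem~\ref{thm:relese-mechanism}, applied with privacy parameter $\epsilon/2$, yields that for any fixed topology the Weight-Release mechanism is $(\epsilon/2)$-differentially private.

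The delicate point is that the Weight-Release mechanism is not applied to the public topology $G$ but to the random topology $T$ output by \textsc{PAMST}, and only to the restricted weight function $w_{|E(T)}$; hence this is an adaptive composition rather than two independent releases. Two observations make this rigorous. First, if $w \sim w'$ on $E$, then $\|w_{|E(T)} - w'_{|E(T)}\|_{\infty} \le \|w - w'\|_{\infty} \le \mu$, so the restricted weight functions remain neighbours on the topology $T$; thus Theorem~\ref{thm:relese-mechanism} genuinely applies to the restricted input. Second, the $(\epsilon/2)$-privacy guarantee of Theorem~\ref{thm:relese-mechanism} holds for \emph{every} topology, in particular for every possible realization of $T$, which is exactly the uniformity over the conditioning needed to feed the output of the first mechanism into the second.

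With both stages shown to be $(\epsilon/2)$-differentially private, the adaptive composition theorem, Theorem~\ref{thm:basic-composition} with $k = 2$ and $\delta = 0$, gives that the combined map $w \mapsto \mathcal{T}'$ is $(\epsilon/2 + \epsilon/2) = \epsilon$-differentially private on $G$. Finally, \textsc{DBMSTClu} is a deterministic function of its input weighted tree, so Theorem~\ref{thm:postprocessing} with $h = \textsc{DBMSTClu}$ implies that $\textsc{DBMSTClu}(\mathcal{T}')$ is $\epsilon$-differentially private on $G$, as claimed. I expect the main obstacle to be precisely the adaptivity argument of the preceding paragraph: one must verify that the neighbour relation is preserved under restriction to $E(T)$ and that the budget of the second mechanism does not depend on which tree $T$ was selected, so that Theorem~\ref{thm:basic-composition} is legitimately invoked rather than a mere post-processing of a single release.
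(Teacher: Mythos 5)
Your proof is correct and follows essentially the same route as the paper's: \textsc{PAMST} at budget $\epsilon/2$ via Theorem~\ref{thm:Privacy-PAMST}, the Weight-Release mechanism at budget $\epsilon/2$ via Theorem~\ref{thm:relese-mechanism} (noting $\tfrac{2\mu}{\epsilon} = \tfrac{\mu}{\epsilon/2}$), adaptive composition via Theorem~\ref{thm:basic-composition}, and post-processing by the deterministic \textsc{DBMSTClu} via Theorem~\ref{thm:postprocessing}. Your explicit verification that the neighbour relation survives restriction to $E(T)$ and that the second stage's guarantee holds uniformly over all realizations of $T$ is a careful elaboration of the adaptivity step that the paper's four-line proof leaves implicit, but it is not a different argument.
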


\begin{proof}
Using Theorem~\ref{thm:Privacy-PAMST} one has that $T$ is produced with $\epsilon/2$-differential privacy, and using Theorem~\ref{thm:relese-mechanism} one has that $w'$ is obtained with $\epsilon/2$-differential privacy as well. Therefore using Theorem~\ref{thm:basic-composition}, $\mathcal{T'}$ is released with $\epsilon$-differential privacy. 
Using the post-processing property (Theorem~\ref{thm:postprocessing}) one gets the expected result.
\end{proof}

\subsection{Differential privacy trade-off of clustering} \label{sec:accuracy_under_privacy}
The results stated in this section present the security/accuracy trade-off of our new method in the differentially-private framework. \textsc{PTClust} relies on two differentially private mechanisms, namely \textsc{PASMT} and the Weight-Release mechanism. Evaluating the accuracy of this method amounts to check whether using these methods for ensuring privacy does not deteriorate the final clustering partition. The accuracy is preserved if \textsc{PAMST} outputs the same topology as the MST-based clustering, and if the Weight-Release mechanism preserves enough the weight function.
According to Def.~\ref{def:partition_topo}, if a tree has a partitioning topology, then it fits the tree-based clustering. The following theorem states that with high probability \textsc{PAMST} outputs a tree with a partitioning topology.  

\begin{theorem} \label{thm:thmD}
Let us consider a graph $\mathcal{G} = (V, E, w)$ with $K$ strongly homogeneous clusters $C^{*}_1, \ldots, C^{*}_K$ and $T = \textsc{PAMST}( \mathcal{G}, u_{\mathcal{G}}, w, \epsilon)$, $\epsilon > 0$. $T$ has a partitioning topology with probability at least
\begin{align*}
1 -
\sum^{K}_{i = 1} \left( |C^*_i| - 1 \right) \exp\left( -\frac{A}{2 \Delta u_{\mathcal{G}} (|V| - 1)} \right) 
\end{align*}
with $A = \epsilon \left( \bar{\alpha}_i \underset{e \in  E\left(\mathcal{G}_{|C^*_i}\right) }{\max( w(e))} - \underset{e \in  E\left(\mathcal{G}_{|C^*_i}\right) }{\min{(w(e))}} \right) + \ln|E|$.
\end{theorem}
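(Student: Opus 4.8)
The plan is to show that PAMST fails to return a partitioning topology only if, at one of a small number of identifiable iterations, the Exponential mechanism selects a ``wrong'' edge, and to control each such failure through the accuracy guarantee of the mechanism (Th.~\ref{tradeoffexponential}) combined with strong homogeneity. First I would reduce the event ``$T$ has a partitioning topology'' to a per-step condition on the Prim-like growth of PAMST. Since at every iteration PAMST adds exactly one new node through an xor-incident edge $r \in \mathcal{R}_{S_V}$, the returned tree makes each cluster $C^*_i$ a \emph{connected} subtree of $T$ --- which is exactly the condition that $Cut_{\mathcal{G}}(T)$ splits $T$ into the $K$ clusters without redundant crossing edges, i.e.\ Def.~\ref{def:partition_topo} --- if and only if, for every cluster, each of its nodes except the first one entered is attached through an \emph{intra}-cluster edge. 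This isolates, for each $C^*_i$, exactly $|C^*_i|-1$ ``critical'' steps (the additions of its non-first nodes); the bad event is that, at such a step, the mechanism returns an inter-cluster edge in place of an available intra-cluster one. A union bound over these $\sum_{i}(|C^*_i|-1)$ events will produce the claimed probability once each term is bounded.

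Second, I would bound a single critical step. When PAMST is inside an incomplete cluster $C^*_i$, the set $\mathcal{R}_{S_V}$ contains at least one intra-cluster edge (as $\mathcal{G}_{|C^*_i}$ is connected), so the minimum available weight $m$ is an intra-cluster weight, the optimal utility $OPT_{u_{\mathcal{G}}}=0$ is attained, and $m$ is controlled by the global cluster extrema $M_i=\max_{e\in E(\mathcal{G}_{|C^*_i})}w(e)$ and $m_i=\min_{e\in E(\mathcal{G}_{|C^*_i})}w(e)$. The strong homogeneity condition (Def.~\ref{def:cond}) guarantees $H_{\mathcal{T}_{|C^*_i}}(e^{(ij)})$ for \emph{every} spanning tree $\mathcal{T}$, not merely for MSTs; this is essential, because the partial object grown by PAMST is an arbitrary spanning tree rather than an MST, so the weaker MST-only (weak) condition would not apply to the configurations actually visited. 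Consequently every inter-cluster edge sits above the intra-cluster weights by a separation gap that, after bounding the induced-subtree maxima/minima and $\alpha_{\mathcal{T}_{|C^*_i}}$ by $\bar\alpha_i$, $M_i$ and $m_i$, is at least $\bar\alpha_i M_i - m_i$. Feeding this gap into Th.~\ref{tradeoffexponential} for the Exponential mechanism run with parameter $\epsilon/(|V|-1)$, and using $|\mathcal{R}_{S_V}|\le |E|$, bounds the per-step failure probability by $\exp\!\bigl(-A/(2\Delta u_{\mathcal{G}}(|V|-1))\bigr)$ with $A=\epsilon(\bar\alpha_i M_i-m_i)+\ln|E|$.

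Finally, summing these bounds over the $|C^*_i|-1$ critical steps of each cluster and over $i\in[K]$, then passing to the complement, yields the stated lower bound. I expect the main obstacle to be the per-step gap estimate rather than the union bound: one must verify that at an \emph{arbitrary} intermediate PAMST state --- a non-MST partial tree --- the lightest xor-incident edge is genuinely intra-cluster and is separated from every inter-cluster edge by the claimed margin, which requires applying strong homogeneity to the specific induced subtree $\mathcal{T}_{|C^*_i}$ appearing at that state and carefully relating its edge-weight extrema and $\alpha_{\mathcal{T}_{|C^*_i}}$ to the global quantities $\bar\alpha_i$, $M_i$, $m_i$. A secondary care point is the topological reduction itself, namely arguing rigorously that a single inter-cluster attachment of a non-first node forces a redundant crossing edge and hence destroys the partitioning topology.
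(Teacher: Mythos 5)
Your proposal is correct and follows essentially the same route as the paper's supplementary proof: the same union bound over the $\sum_{i}\left(|C^*_i|-1\right)$ critical selections, the same per-step control via the accuracy guarantee of the Exponential mechanism run at budget $\epsilon/(|V|-1)$ with $|\mathcal{R}_k|\leq|E|$, and the same use of \emph{strong} homogeneity (correctly motivated, as the partial trees PAMST visits are not MSTs) to obtain the utility gap $\bar{\alpha}_i M_i - m_i$; your node-attachment reduction (``each non-first node of a cluster enters through an intra-cluster edge, hence each cluster induces a connected subtree'') is in fact a cleaner formalization than the paper's consecutive-step event $B$. One caveat: plugging the gap into Th.~\ref{tradeoffexponential} actually yields a per-step failure probability of $\exp\left(-\frac{\epsilon\left(\bar{\alpha}_i M_i - m_i\right)}{2\Delta u_{\mathcal{G}}(|V|-1)} + \ln|E|\right)$, with $\ln|E|$ \emph{outside} the division --- which is exactly what the paper's own supplementary derivation ends with --- so your assertion of the main-text form of $A$ (with $\ln|E|$ inside the numerator) reproduces what appears to be a parenthesization typo in the theorem statement rather than a flaw in your argument.
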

\begin{proof}
See supplementary material.
\end{proof}

The following theorem states that given a tree $\mathcal{T}$ under the strong homogeneity condition, if the subtree associated to a cluster respects Def.~\ref{def:homo_separ_cond}, then it still holds after applying the Weight-Release mechanism to this tree. 


\begin{theorem}\label{thm:tradeoff}
Let us consider a graph $\mathcal{G} = (V, E, w)$ with $K$ strongly homogeneous clusters $C^{*}_1, \ldots, C^{*}_K$ and $T = PAMST( \mathcal{G}, u_{\mathcal{G}}, w, \epsilon)$, $\mathcal{T}=(T,w_{|T})$ and $\mathcal{T'}=\mathcal{M}_{w.r}(T,w_{|T},s,\tau,p)$ with $s<<p,\tau$. Given some cluster $C^{*}_i$, and $j \neq i$ s.t $e^{(ij)} \in Cut_{\mathcal{G}}(\mathcal{T})$, if $H_{\mathcal{T}_{|C_i^*}}(e^{(ij)})$ is verified, then $H_{\mathcal{T'}_{|C_i^*}}(e^{(ij)})$ is verified with probability at least $$ 1 - \frac{\mathbb{V}(\varphi)}{ \mathbb{V}(\varphi) + \mathbb{E}(\varphi)^2} $$ 
with the following notations :
\begin{itemize}
\item $ \varphi =\underset{j \in [|C_i^*|-1] }{(\max Y_j)^2} - \underset{j \in [|C_i^*|-1] }{\min Z_j} \times X^{out} $
\item $Y_j\underset{iid}{\sim} Lap\left(\frac{\underset{e\in E(\mathcal{T})}{\max} w(e) + \tau}{p},\frac{s}{p}\right)$
\item $Z_j \underset{iid}{\sim} Lap\left( \frac{\underset{e\in E(\mathcal{T})}{\min} w(e) + \tau}{p}, \frac{s}{p}\right)$
\item $X^{out}\sim Lap\left(\frac{w(e^{(ij)} + \tau}{p}, \frac{s}{p}\right),$ 
\end{itemize}
\end{theorem}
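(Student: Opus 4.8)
The plan is to first rewrite the separability condition $H_{\mathcal{T}'_{|C_i^*}}(e^{(ij)})$ as a single polynomial inequality in the released weights, then to dominate the relevant extrema by the auxiliary Laplace variables $Y_j, Z_j, X^{out}$ through an almost-sure coupling, and finally to control the remaining tail probability via the one-sided Chebyshev (Cantelli) inequality.

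First I would unfold the definition of Homogeneous separability (Def.~\ref{def:homo_separ_cond}) applied to the \emph{released} tree. Writing $M' = \max_{e \in E(\mathcal{T}'_{|C_i^*})} w'(e)$ and $m' = \min_{e \in E(\mathcal{T}'_{|C_i^*})} w'(e)$, the condition $\alpha_{\mathcal{T}'_{|C_i^*}} M' < w'(e^{(ij)})$ with $\alpha_{\mathcal{T}'_{|C_i^*}} = M'/m'$ is equivalent to $(M')^2 < w'(e^{(ij)})\, m'$, since with $s \ll \tau, p$ the released weights are positive with overwhelming probability, so clearing the denominator preserves the inequality direction. Thus it suffices to lower-bound $\mathbb{P}\big[(M')^2 < w'(e^{(ij)})\, m'\big]$.

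Next comes the coupling. Each released weight is $w'(e) = (w(e) + N_e + \tau)/p$ with $N_e \sim Lap(0,s)$ i.i.d. Enumerating the $|C_i^*|-1$ edges $e_j$ of $\mathcal{T}_{|C_i^*}$ and reusing the \emph{same} noises, I would set $Y_j = (\max_{e \in E(\mathcal{T})} w(e) + N_{e_j} + \tau)/p$ and $Z_j = (\min_{e \in E(\mathcal{T})} w(e) + N_{e_j} + \tau)/p$; these are exactly the i.i.d. Laplace variables of the statement. Because $\min_e w(e) \le w(e_j) \le \max_e w(e)$, one gets the almost-sure pointwise bounds $Z_j \le w'(e_j) \le Y_j$, hence $m' \ge \min_j Z_j$ and $M' \le \max_j Y_j$ a.s., while $w'(e^{(ij)})$ has precisely the law of $X^{out}$. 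Consequently, on the event $\{(\max_j Y_j)^2 < (\min_j Z_j)\, X^{out}\} = \{\varphi < 0\}$ we obtain $(M')^2 \le (\max_j Y_j)^2 < (\min_j Z_j)\, X^{out} \le m'\, w'(e^{(ij)})$, i.e. $H_{\mathcal{T}'_{|C_i^*}}(e^{(ij)})$ holds. This yields $\mathbb{P}[H_{\mathcal{T}'_{|C_i^*}}(e^{(ij)})] \ge \mathbb{P}[\varphi < 0]$. To finish, since the unperturbed hypothesis $H_{\mathcal{T}_{|C_i^*}}(e^{(ij)})$ together with $s \ll \tau, p$ forces $\mathbb{E}(\varphi) < 0$, writing $\mathbb{P}[\varphi \ge 0] = \mathbb{P}[\varphi - \mathbb{E}(\varphi) \ge -\mathbb{E}(\varphi)]$ with $-\mathbb{E}(\varphi) > 0$ and applying Cantelli's inequality gives $\mathbb{P}[\varphi \ge 0] \le \mathbb{V}(\varphi)/(\mathbb{V}(\varphi) + \mathbb{E}(\varphi)^2)$, whence the claimed bound $\mathbb{P}[H_{\mathcal{T}'_{|C_i^*}}(e^{(ij)})] \ge 1 - \mathbb{V}(\varphi)/(\mathbb{V}(\varphi) + \mathbb{E}(\varphi)^2)$.

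The main obstacle I anticipate is twofold, and both parts hinge on the regime $s \ll \tau, p$: establishing $\mathbb{E}(\varphi) < 0$ (which is what makes the Cantelli estimate both legitimate in this tail direction and non-vacuous) from the unperturbed separability condition and the smallness of $s$, and justifying positivity of the released weights so that squaring and multiplying preserve the inequality directions in the reduction of $H$. The coupling itself is routine once the auxiliary variables are identified with $Y_j, Z_j, X^{out}$; the delicate point is controlling these sign conditions — either deterministically under the stated scaling or up to a negligible failure probability — so that the clean reduction to the single event $\{\varphi < 0\}$ is valid.
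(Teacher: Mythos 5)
Your proposal is correct and follows essentially the same route as the paper's proof: reduce $H_{\mathcal{T}'_{|C_i^*}}(e^{(ij)})$ to the event $\{\varphi<0\}$ by replacing intra-cluster released weights with the dominating variables $Y_j$, $Z_j$ (the paper asserts this lower bound directly, while your same-noise coupling makes it explicit), handle the sign issues via $s \ll \tau, p$ exactly as the paper does by making $\min_j Z_j < 0$ negligible, and conclude with Cantelli's one-sided inequality using $\mathbb{E}(\varphi) \leq 0$. The only cosmetic difference is that your comonotone coupling of $Y_j$ and $Z_j$ fixes a joint law differing from the independent families implicitly used in the paper's analytic moment computations, but this does not affect the validity of the stated bound.
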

\begin{proof}
See supplementary material.
\end{proof}
Note that Theorem~\ref{thm:tradeoff} is stated in a simplified version. A more complete version (specifying an analytic version of $\mathbb{V}(\varphi)$ and $\mathbb{E}(\varphi)$) is given in the supplementary material.

\section{Experiments}
\begin{figure*}[!t]
\vspace{1in}
\begin{subfigure}{0.21\textwidth} 
\includegraphics[width=0.9\columnwidth]{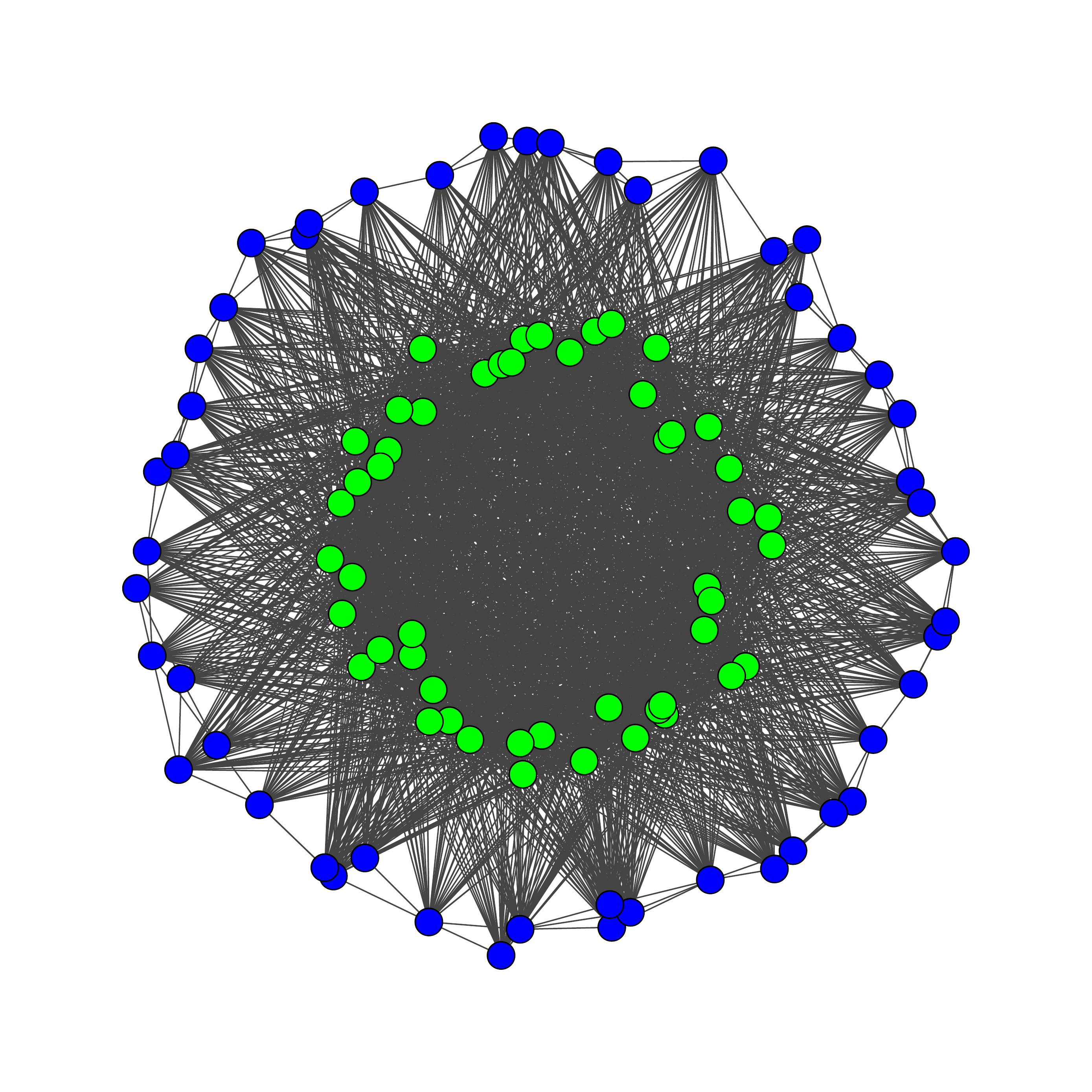}
\caption{Homogeneous graph}
\label{subfig:circles_homo}
\end{subfigure} \hspace*{-0.15in}
\begin{subfigure}{0.21\textwidth} 
\includegraphics[width=0.9\columnwidth]{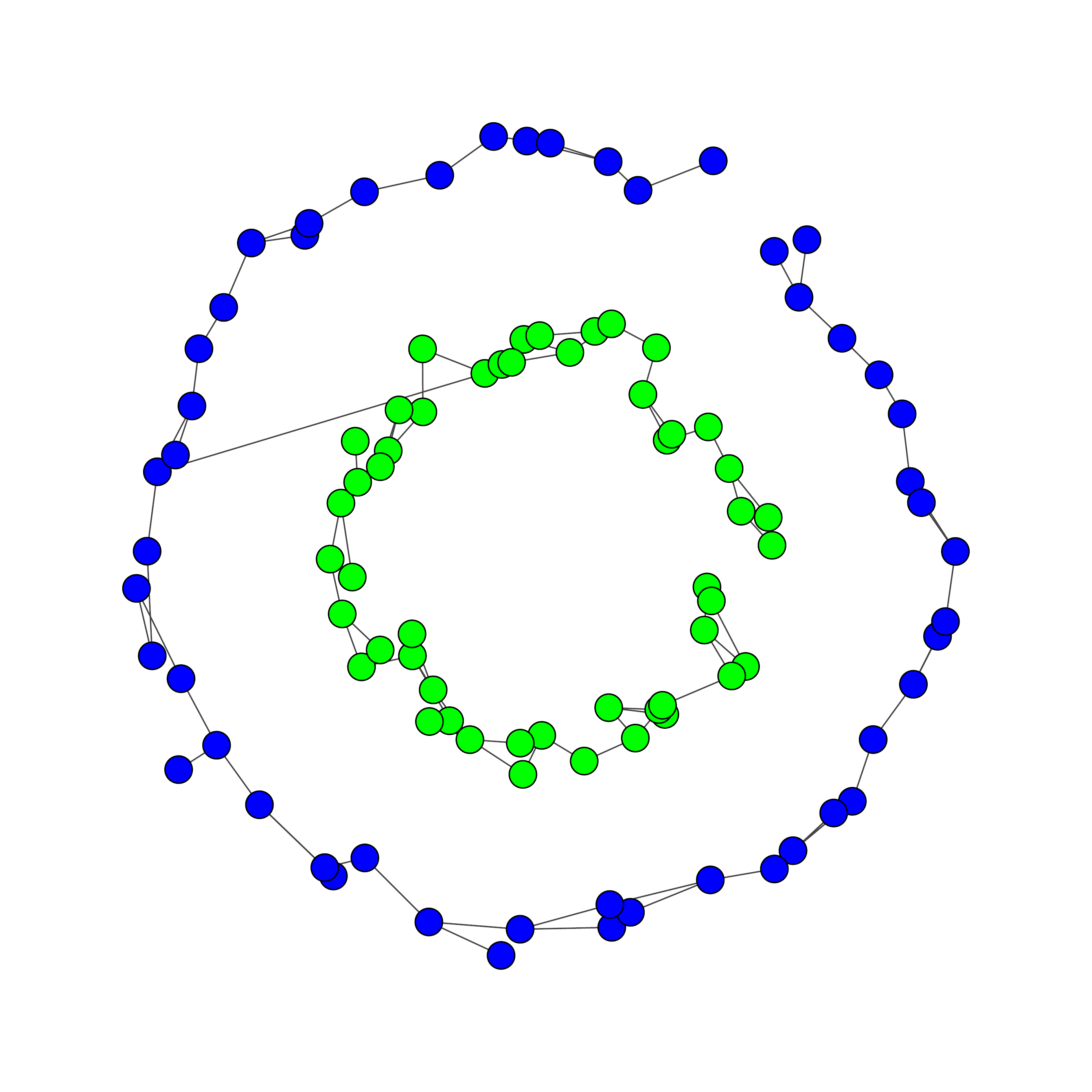}
\caption{\textsc{DBMSTClu} \label{subfig:circles_DBMSTClu}}
\end{subfigure} \hspace*{-0.15in}
\begin{subfigure}{0.21\textwidth} 
\includegraphics[width=0.9\columnwidth]{./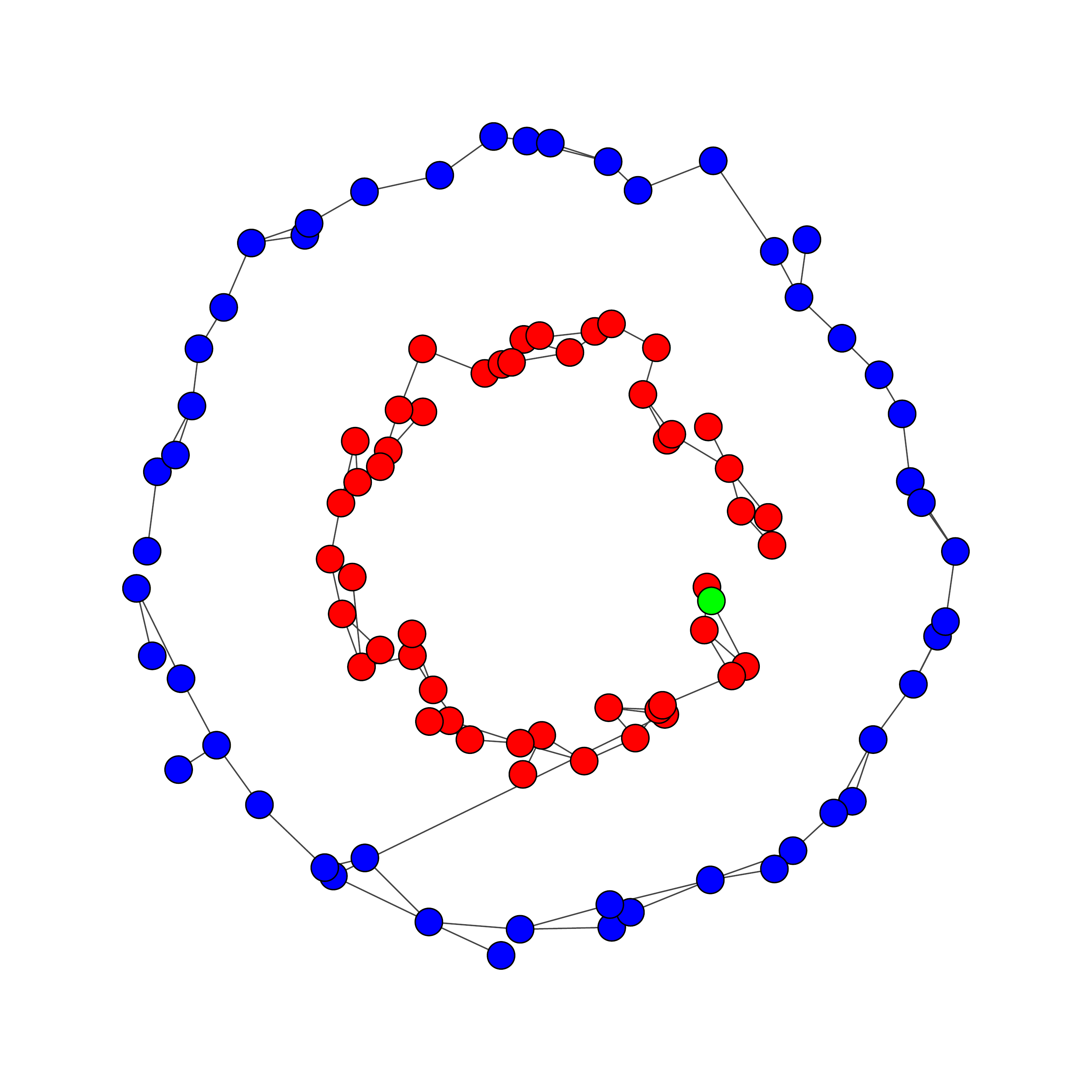}
\caption{\textsc{PTClust}, $\epsilon = 1.0$}
\label{subfig:circles_PTClu1}
\end{subfigure} \hspace*{-0.15in}
\begin{subfigure}{0.21\textwidth} 
\includegraphics[width=0.9\columnwidth]{./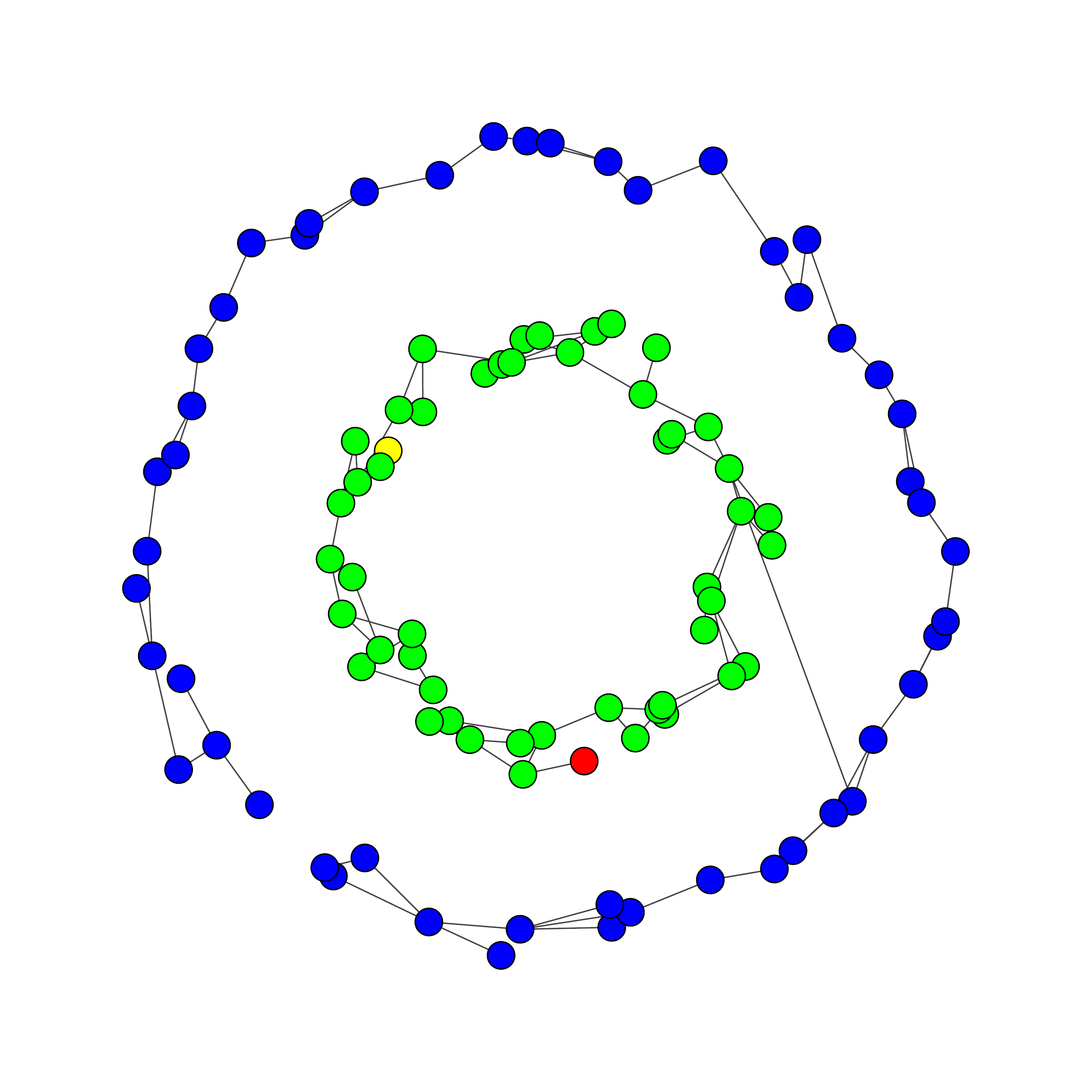}
\caption{\textsc{PTClust}, $\epsilon = 0.7$ \label{subfig:circles_PTClu07}}
\end{subfigure} \hspace*{-0.15in}
\begin{subfigure}{0.21\textwidth} 
\includegraphics[width=0.9\columnwidth]{./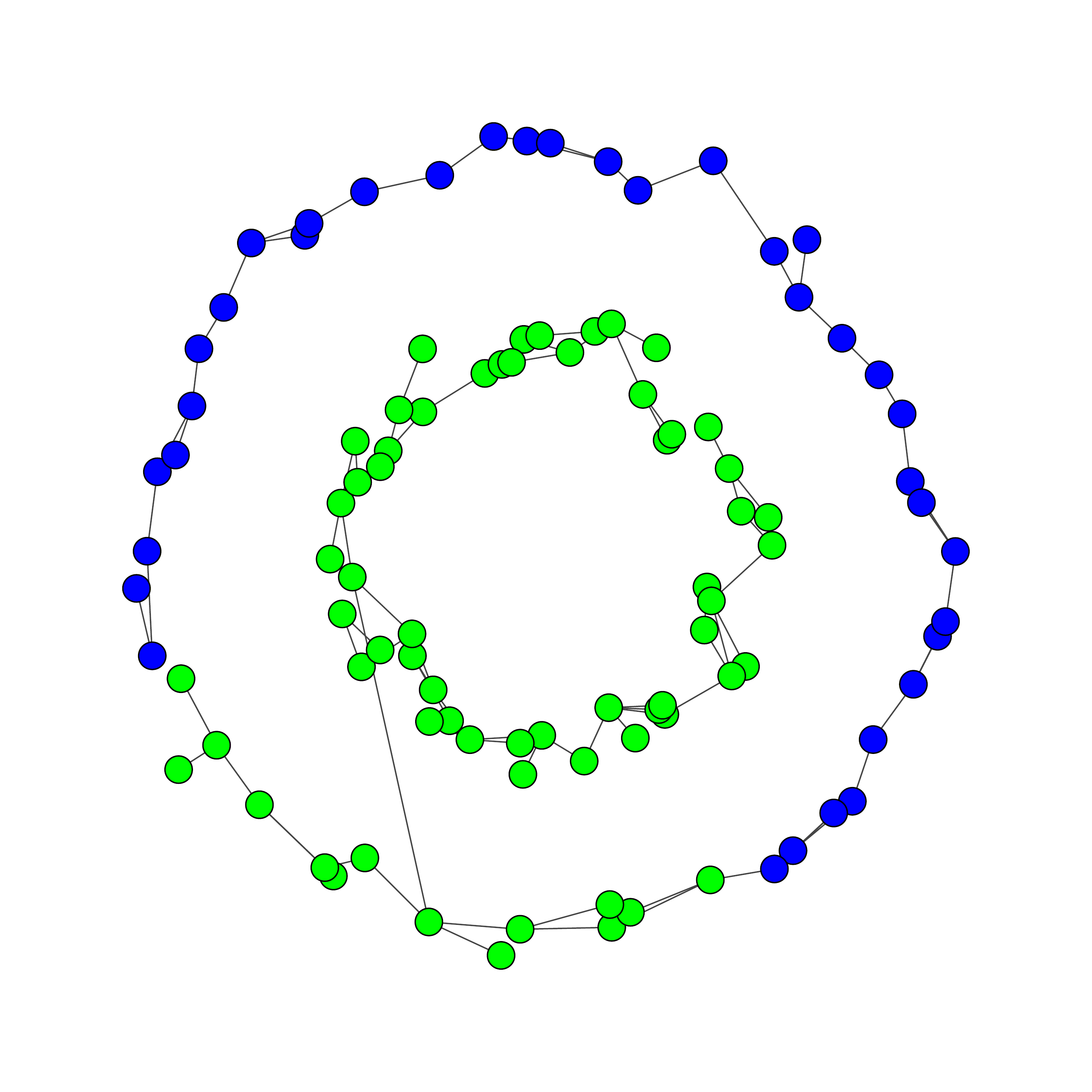}
\caption{\textsc{PTClust}, $\epsilon = 0.5$}
\label{subfig:circles_PTClu05}
\end{subfigure} \hspace*{-0.15in}
\caption{Circles experiments for $n = 100$. \textsc{PTClust} parameters: $w_{min} = 0.1$, $w_{max} = 0.3$, $\mu = 0.1$.}
\label{fig:circles}
\end{figure*}

\begin{figure*}[!t]
\vspace{1in}
\begin{subfigure}{0.21\textwidth} 
\includegraphics[width=0.9\columnwidth]{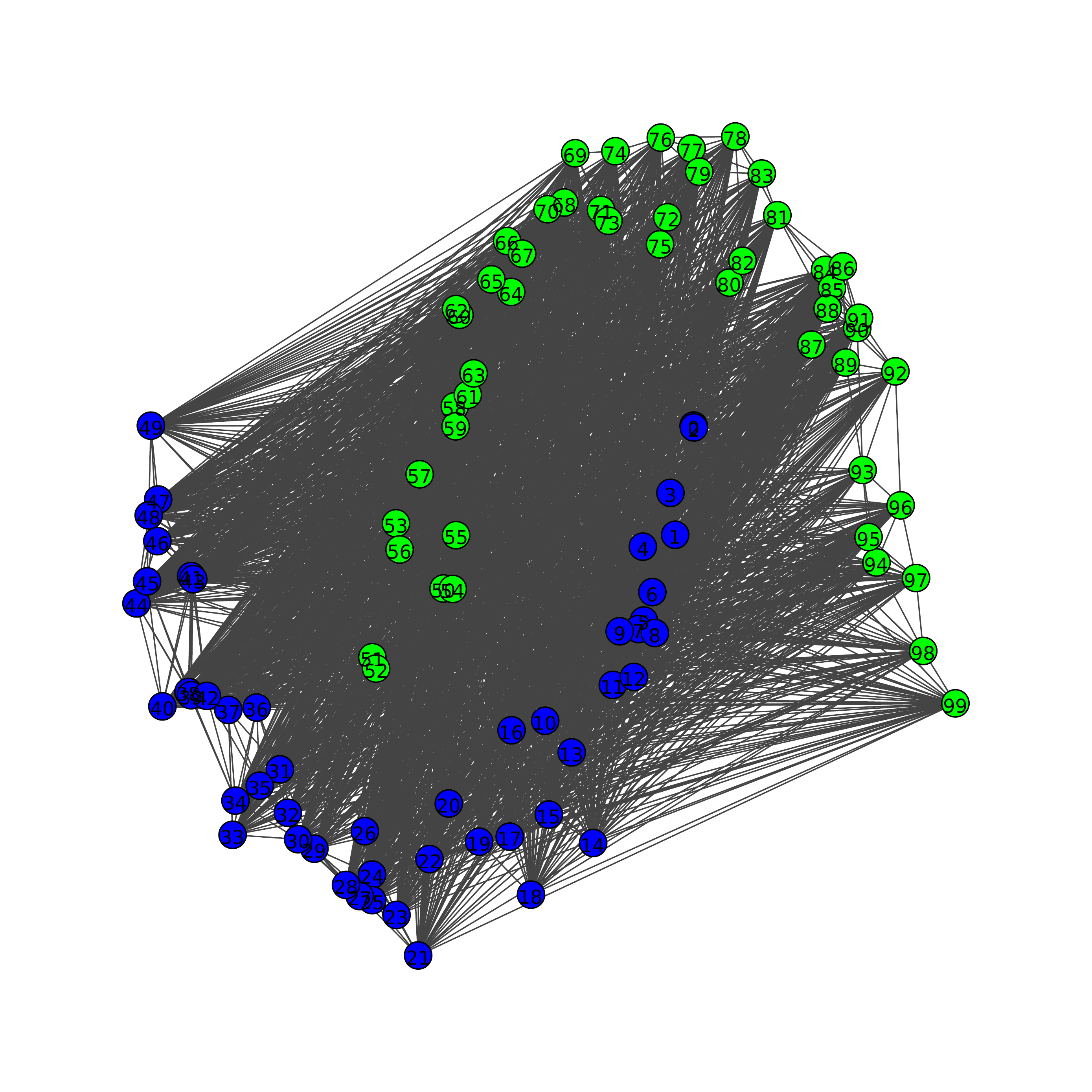}
\caption{Homogeneous graph}
\label{subfig:moons_homo}
\end{subfigure} \hspace*{-0.15in}
\begin{subfigure}{0.21\textwidth} 
\includegraphics[width=0.9\columnwidth]{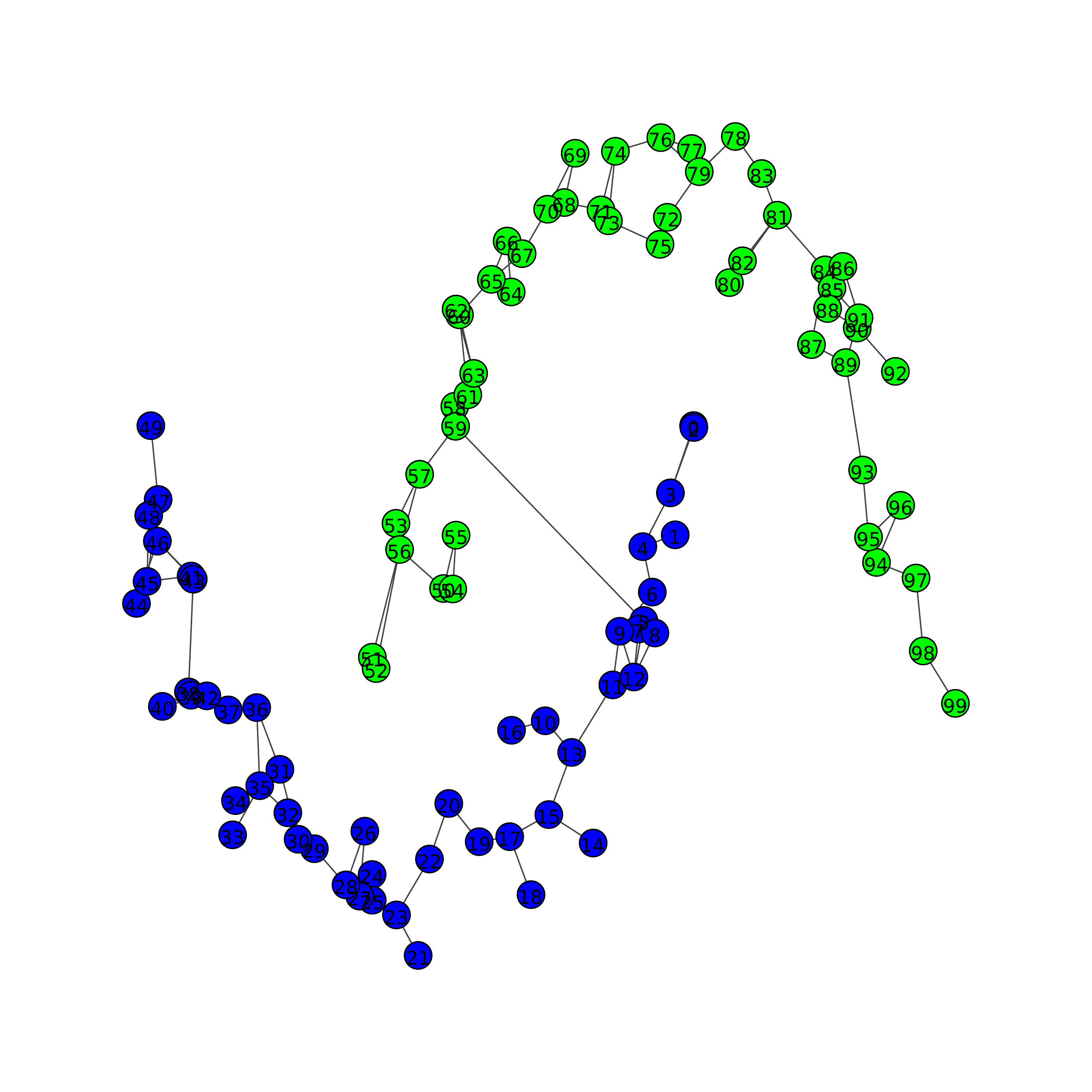}
\caption{\textsc{DBMSTClu}}
\label{subfig:moons_DBMSTClu}
\end{subfigure} \hspace*{-0.15in}
\begin{subfigure}{0.21\textwidth} 
\includegraphics[width=0.9\columnwidth]{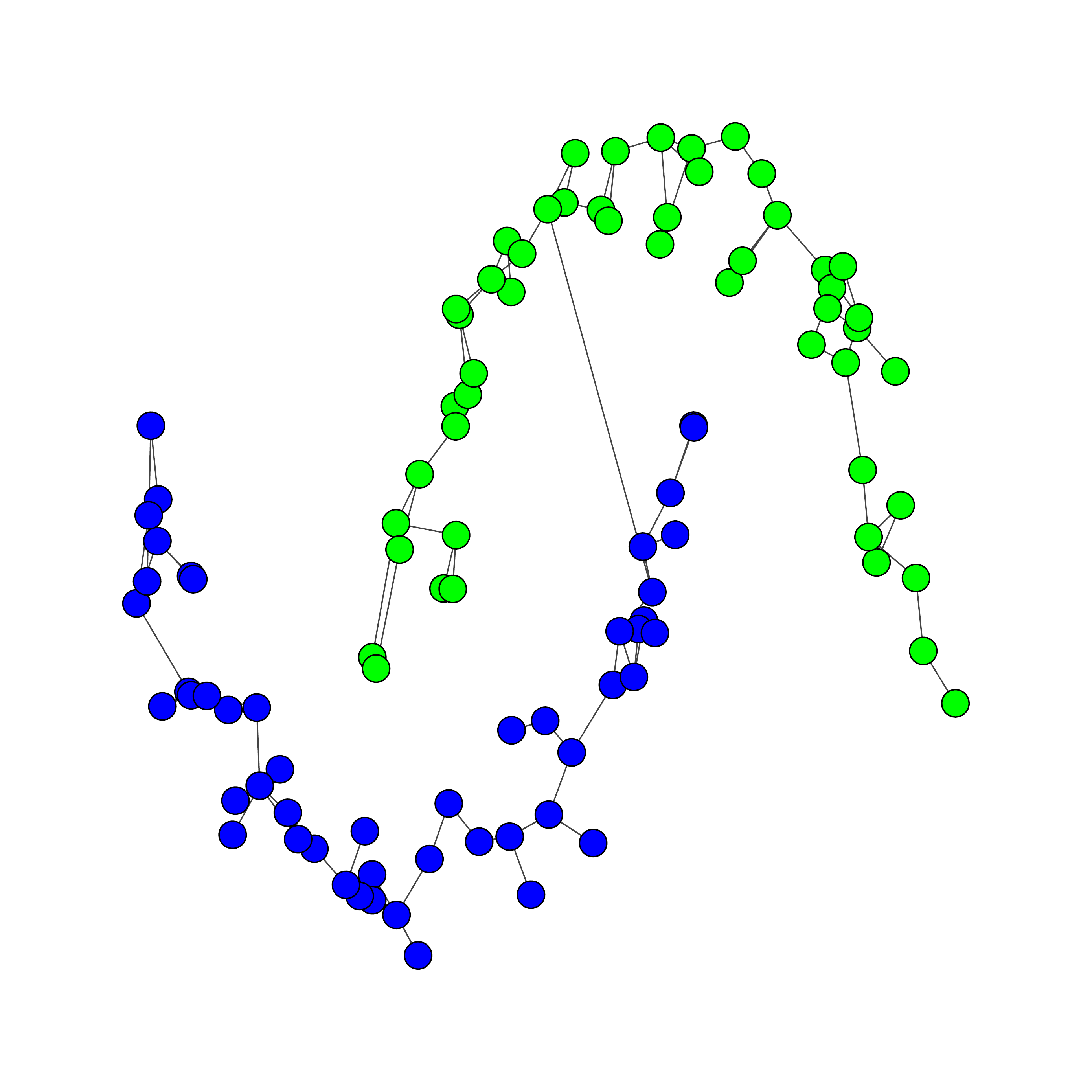}
\caption{\textsc{PTClust}, $\epsilon = 1.0$}
\label{subfig:moons_PTClu1}
\end{subfigure} \hspace*{-0.15in}
\begin{subfigure}{0.21\textwidth} 
\includegraphics[width=0.9\columnwidth]{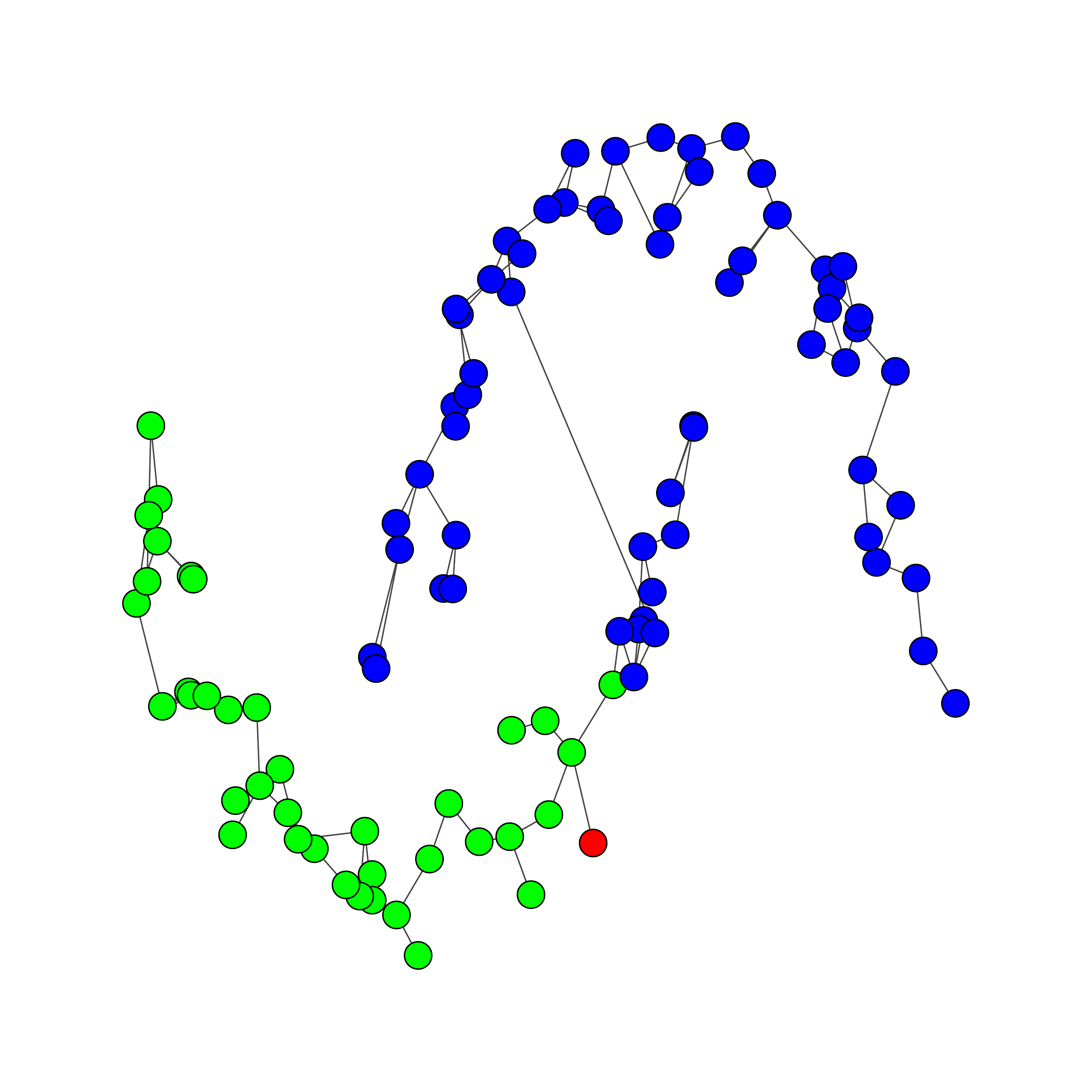}
\caption{\textsc{PTClust}, $\epsilon = 0.7$}
\label{subfig:moons_PTClu07}
\end{subfigure} \hspace*{-0.15in}
\begin{subfigure}{0.21\textwidth} 
\includegraphics[width=0.9\columnwidth]{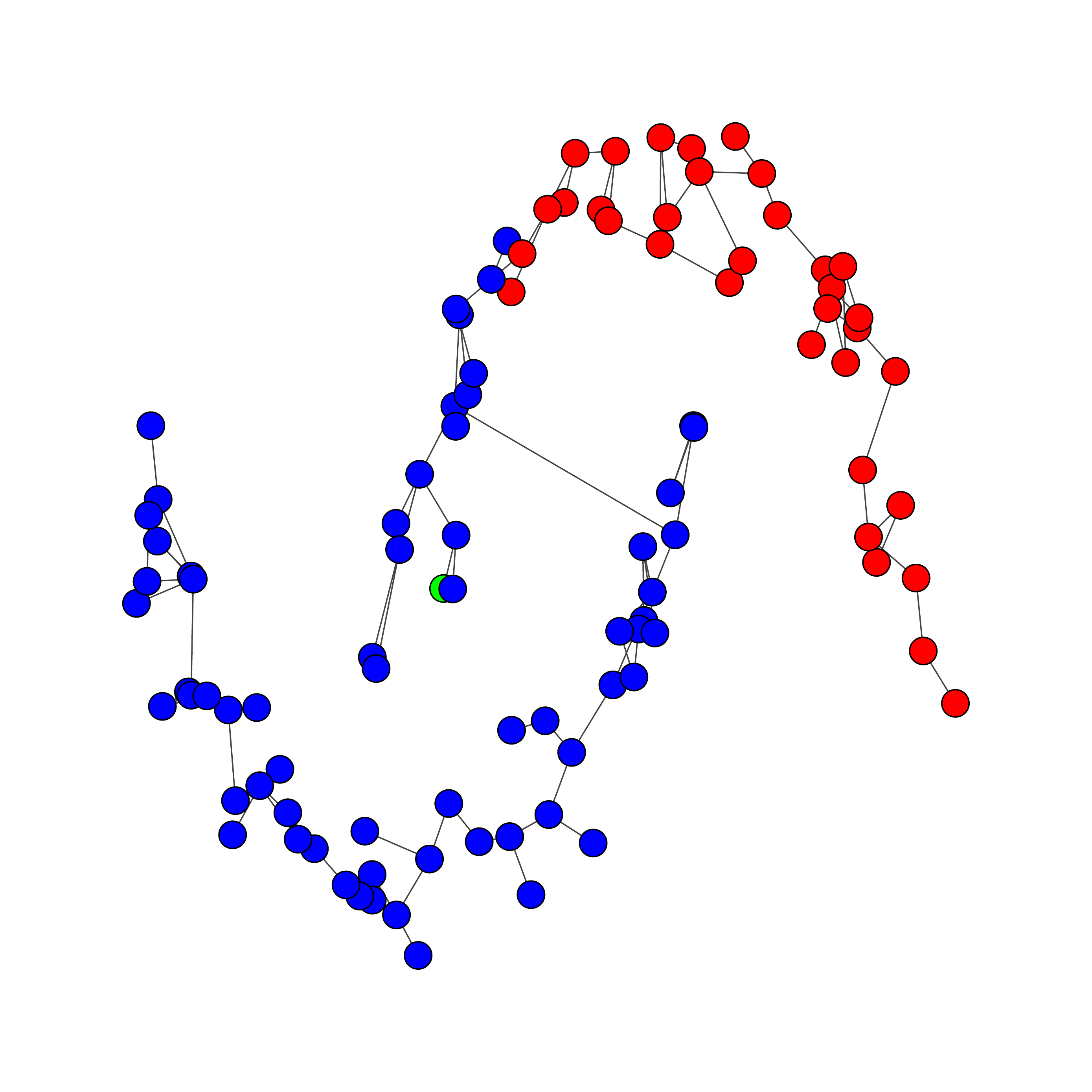}
\caption{\textsc{PTClust}, $\epsilon = 0.5$}
\label{subfig:moons_PTClu05}
\end{subfigure}
\caption{Moons experiments for $n = 100$. \textsc{PTClust} parameters: $w_{min} = 0.1$, $w_{max} = 0.3$, $\mu = 0.1$.}
\label{fig:moons}
\end{figure*}

 So far we have exhibited the trade-off between clustering accuracy and privacy and we experimentally illustrate it with some qualitative results. Let us discuss hereafter the quantitative performances of our algorithm. 
We have performed experiments on two classical synthetic graph datasets for clustering with nonconvex shapes: two concentric circles and two moons, both in their noisy versions. For the sake of readability and for visualization purposes, both graph datasets are embedded into a two dimensional Euclidean space. 
Each dataset contains $100$ data nodes that are represented by a point of two coordinates. Both graphs have been built with respect to the strong homogeneity condition: edge weights within clusters are between $w_{min} = 0.1$ and $w_{max} = 0.3$ while edges between clusters have a weight strictly above $w_{max}^2 / w_{min} = 0.9$.  In practice, the complete graph has trimmed from its irrelevant edges ({\em i.e.} not respecting the strong homogeneity condition). Hence, those graphs are not necessarily Euclidean since close nodes in the visual representation may not be connected in the graph. Finally, weights are normalized between $0$ and $1$.

Figures~\ref{fig:circles} and~\ref{fig:moons} (best viewed in color) show for each dataset (a) the original homogeneous graph $\mathcal{G}$ built by respecting the homogeneity condition, (b) the clustering partition\footnote{For the sake of clarity, the edges in those Figures are represented based on the original weights and not on the privately released weights.} of \textsc{DBMSTClu} with the used underlying MST, 
the clustering partitions for \textsc{PTClust} with $\mu = 0.1$ obtained respectively  with different privacy degrees\footnote{Note that, although the range of $\epsilon$ is in $\mathbb{R}^\star_+$, it is usually chosen in practice in $(0,1]$~\cite[Chap~1\&2]{Dwork_2013}.} : $\epsilon = 0.5$ (c), $\epsilon = 0.7$ (c) and $\epsilon = 1.0$ (e). The utility function $u_{\mathcal{G}}$ corresponds to the graph weight.
Each experiment is carried out independently and the tree topology obtained by \textsc{PAMST} will eventually be different. This explains why the edge between clusters may not be the same when the experiment is repeated with a different level of privacy. However, this will marginally affect the overall quality of the clustering.

As expected, \textsc{DBMSTClu} recovers automatically the right partition and the results are shown here for comparison with \textsc{PTClust}. For \textsc{PTClust}, the true MST is replaced with a private approximate MST obtained for suitable $\tau$ and $p$ ensuring final weights between $0$ and $1$.

When the privacy degree is moderate ($\epsilon \in \{ 1.0, 0.7\}$), it appears that the clustering result is slightly affected. More precisely, in Figures~\ref{subfig:circles_PTClu1} and \ref{subfig:circles_PTClu07} the two main clusters are recovered while one point is isolated as a singleton. This is due to the randomization involved in determining the edge weights for the topology returned by \textsc{PAMST}.
In Figure~\ref{subfig:moons_PTClu1}, the clustering is identical to the one from \textsc{DBMSTClu} in Figure~\ref{subfig:moons_DBMSTClu}. In Figure~\ref{subfig:circles_PTClu07}, the clustering is very similar to the DBMSTClu one, with the exception of an isolated singleton.
However, as expected from our theoretical results, when $\epsilon$ is decreasing, the clustering quality deteriorates, as \textsc{DBMSTClu} is sensitive to severe changes in the MST (cf. Figure~\ref{subfig:circles_PTClu05}, \ref{subfig:moons_PTClu05}).




\section{Conclusion}

In this paper, we introduced \textsc{PTClust}, a novel graph clustering algorithm able to recover arbitrarily-shaped clusters while preserving differential privacy on the weights of the graph. It is based on the release of a private approximate minimum spanning tree of the graph of the dataset, by performing suitable cuts to reveal the clusters. To the best of our knowledge, this is the first differential private graph-based clustering algorithm adapted to nonconvex clusters.
The theoretical analysis exhibited a trade-off between the degree of privacy and the accuracy of the clustering result. This work suits to applications where privacy is a critical issue and it could pave the way to metagenomics and genes classification using individual gene maps while protecting patient privacy. Future work will be devoted to deeply investigate these applications.

\bibliography{biblio}

\begin{thebibliography}{21}
\providecommand{\natexlab}[1]{#1}
\providecommand{\url}[1]{\texttt{#1}}
\expandafter\ifx\csname urlstyle\endcsname\relax
  \providecommand{\doi}[1]{doi: #1}\else
  \providecommand{\doi}{doi: \begingroup \urlstyle{rm}\Url}\fi

\bibitem[Asano et~al.(1988)Asano, Bhattacharya, Keil, and Yao]{Asano1988}
T.~Asano, B.~Bhattacharya, M.~Keil, and F.~Yao.
\newblock Clustering algorithms based on minimum and maximum spanning trees.
\newblock In \emph{Proceedings of the Fourth Annual Symposium on Computational
  Geometry}, SCG '88, pages 252--257, New York, NY, USA, 1988. ACM.

\bibitem[Blum et~al.(2008)Blum, Ligett, and Roth]{Blum:2008}
A.~Blum, K.~Ligett, and A.~Roth.
\newblock A learning theory approach to non-interactive database privacy.
\newblock In \emph{Proceedings of the Fortieth Annual ACM Symposium on Theory
  of Computing}, STOC '08, pages 609--618, New York, NY, USA, 2008. ACM.

\bibitem[Chen et~al.(2015)Chen, Yu, and Chirkova]{Chen:2015}
L.~Chen, T.~Yu, and R.~Chirkova.
\newblock Wavecluster with differential privacy.
\newblock In \emph{Proceedings of the 24th ACM International on Conference on
  Information and Knowledge Management}, CIKM '15, pages 1011--1020, New York,
  NY, USA, 2015. ACM.

\bibitem[Dwork(2011)]{Dwork:2011}
C.~Dwork.
\newblock A firm foundation for private data analysis.
\newblock \emph{Commun. ACM}, 54\penalty0 (1):\penalty0 86--95, Jan. 2011.

\bibitem[Dwork and Roth(2013)]{Dwork_2013}
C.~Dwork and A.~Roth.
\newblock The algorithmic foundations of differential privacy.
\newblock \emph{Foundations and Trends{\textregistered} in Theoretical Computer
  Science}, 9\penalty0 (3-4):\penalty0 211--407, 2013.

\bibitem[Dwork et~al.(2006{\natexlab{a}})Dwork, Kenthapadi, McSherry, Mironov,
  and Naor]{dwork2006our}
C.~Dwork, K.~Kenthapadi, F.~McSherry, I.~Mironov, and M.~Naor.
\newblock Our data, ourselves: Privacy via distributed noise generation.
\newblock In \emph{Eurocrypt}, volume 4004, pages 486--503. Springer,
  2006{\natexlab{a}}.

\bibitem[Dwork et~al.(2006{\natexlab{b}})Dwork, McSherry, Nissim, and
  A.Smith]{Dwork_2006}
C.~Dwork, F.~McSherry, K.~Nissim, and A.Smith.
\newblock Calibrating noise to sensitivity in private data analysis.
\newblock In \emph{Theory of Cryptography}, pages 265--284. Springer Berlin
  Heidelberg, 2006{\natexlab{b}}.

\bibitem[Grygorash et~al.(2006)Grygorash, Zhou, and
  Jorgensen]{MSTBasedClusteringAlgosICTAI2006}
O.~Grygorash, Y.~Zhou, and Z.~Jorgensen.
\newblock Minimum spanning tree based clustering algorithms.
\newblock In \emph{2006 18th IEEE International Conference on Tools with
  Artificial Intelligence (ICTAI'06)}, pages 73--81, Nov 2006.

\bibitem[Hay et~al.(2009)Hay, Li, Miklau, and Jensen]{Hay2009}
M.~Hay, C.~Li, G.~Miklau, and D.~Jensen.
\newblock Accurate estimation of the degree distribution of private networks.
\newblock In \emph{2009 Ninth IEEE International Conference on Data Mining},
  pages 169--178, Dec 2009.

\bibitem[Ho and Ruan(2013)]{Ho:2013}
S.-S. Ho and S.~Ruan.
\newblock Preserving privacy for interesting location pattern mining from
  trajectory data.
\newblock \emph{Trans. Data Privacy}, 6\penalty0 (1):\penalty0 87--106, Apr.
  2013.

\bibitem[Kasiviswanathan et~al.(2013)Kasiviswanathan, Nissim, Raskhodnikova,
  and Smith]{Kasiviswanathan2013}
S.~P. Kasiviswanathan, K.~Nissim, S.~Raskhodnikova, and A.~Smith.
\newblock Analyzing graphs with node differential privacy.
\newblock In \emph{Proceedings of the 10th Theory of Cryptography Conference on
  Theory of Cryptography}, TCC'13, pages 457--476, Berlin, Heidelberg, 2013.
  Springer-Verlag.

\bibitem[McSherry(2009)]{McSherry_2009}
F.~McSherry.
\newblock Privacy integrated queries.
\newblock In \emph{Proceedings of the 2009 ACM SIGMOD International Conference
  on Management of Data (SIGMOD)}. Association for Computing Machinery, Inc.,
  June 2009.

\bibitem[McSherry and Talwar(2007)]{mechanism-design-via-differential-privacy}
F.~McSherry and K.~Talwar.
\newblock Mechanism design via differential privacy.
\newblock In \emph{Annual IEEE Symposium on Foundations of Computer Science
  (FOCS)}, Providence, RI, October 2007. IEEE.

\bibitem[Morvan et~al.(2017)Morvan, Choromanski, Gouy{-}Pailler, and
  Atif]{MorvanCGA17}
A.~Morvan, K.~Choromanski, C.~Gouy{-}Pailler, and J.~Atif.
\newblock Graph sketching-based massive data clustering.
\newblock \emph{SIAM Data Mining 2018 (to appear)}, 2017.

\bibitem[M{\"u}lle et~al.(2015)M{\"u}lle, Clifton, and B{\"o}hm]{Mlle2015}
Y.~M{\"u}lle, C.~Clifton, and K.~B{\"o}hm.
\newblock Privacy-integrated graph clustering through differential privacy.
\newblock In \emph{EDBT/ICDT Workshops}, 2015.

\bibitem[Nguyen et~al.(2016)Nguyen, Imine, and Rusinowitch]{Nguyen:2016}
H.~H. Nguyen, A.~Imine, and M.~Rusinowitch.
\newblock Detecting communities under differential privacy.
\newblock In \emph{Proceedings of the 2016 ACM on Workshop on Privacy in the
  Electronic Society}, WPES '16, pages 83--93, New York, NY, USA, 2016. ACM.

\bibitem[Nissim et~al.(2007)Nissim, Raskhodnikova, and Smith]{Nissim_2007}
K.~Nissim, S.~Raskhodnikova, and A.~Smith.
\newblock Smooth sensitivity and sampling in private data analysis.
\newblock In \emph{Proceedings of the thirty-ninth annual {ACM} symposium on
  Theory of computing - {STOC}}. {ACM} Press, 2007.

\bibitem[{Pinot}(2018)]{Pinot2017}
R.~{Pinot}.
\newblock {Minimum spanning tree release under differential privacy
  constraints}.
\newblock \emph{ArXiv e-prints}, Jan. 2018.

\bibitem[Schaeffer(2007)]{SCHAEFFER200727}
S.~E. Schaeffer.
\newblock Graph clustering.
\newblock \emph{Computer Science Review}, 1\penalty0 (1):\penalty0 27 -- 64,
  2007.

\bibitem[Sealfon(2016)]{Sealfon_2016}
A.~Sealfon.
\newblock Shortest paths and distances with differential privacy.
\newblock In \emph{Proceedings of the 35th {ACM} {SIGMOD}-{SIGACT}-{SIGAI}
  Symposium on Principles of Database Systems - {PODS}}. {ACM} Press, 2016.

\bibitem[Zahn(1971)]{Zahn1971}
C.~T. Zahn.
\newblock Graph-theoretical methods for detecting and describing gestalt
  clusters.
\newblock \emph{IEEE Trans. Comput.}, 20\penalty0 (1):\penalty0 68--86, Jan.
  1971.

\end{thebibliography}
\bibliographystyle{abbrvnat}

\newpage

\section*{SUPPLEMENTARY MATERIAL}
\section{Proof regarding the accuracy of \textsc{DBMSTClu}}

\subsection{Proof of Theorem~\ref{thm:thmA}}

This theorem relies on the following lemma:

\begin{lemma} \label{lem:heaviest_edge_inCut}
Let us consider a graph $\mathcal{G} = (V, E, w)$ with $K$ clusters $C^*_1, \ldots, C^*_K$ and $\mathcal{T}$ an MST of $\mathcal{G}$. 
If for all $i \in [K]$, $C_i^*$ is  weakly homogeneous, then $\underset{e \in \mathcal{T}}{\argmax} \ w(e) \subset Cut_{\mathcal{G}}(\mathcal{T})$ i.e. the heaviest edges in $\mathcal{T}$ are in $Cut_{\mathcal{G}}(\mathcal{T})$.
\end{lemma}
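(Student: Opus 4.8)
The plan is to argue by contradiction, turning a hypothetical intra-cluster heaviest edge into a violation of the weak homogeneity condition. First I would fix an arbitrary maximizer $e^* \in \argmax_{e \in \mathcal{T}} w(e)$ and assume, towards a contradiction, that $e^* \notin Cut_{\mathcal{G}}(\mathcal{T})$. By Def.~\ref{def:cut} this means both endpoints of $e^*$ lie in a single cluster $C^*_i$, so in particular $e^* \in E(\mathcal{T}_{|C^*_i})$ and hence $w(e^*) \leq \max_{e \in E(\mathcal{T}_{|C^*_i})} w(e)$.

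The key structural ingredient I would use next is that $\mathcal{T}$ is a connected spanning tree on $V$ while $C^*_i \subsetneq V$ (the clustering is nontrivial, $K \geq 2$). Consequently at least one edge of $\mathcal{T}$ must leave $C^*_i$, i.e. there exists $j \neq i$ with $e^{(ij)} \in Cut_{\mathcal{G}}(\mathcal{T})$. Invoking the weak homogeneity of $C^*_i$ (Def.~\ref{def:cond}) for this $e^{(ij)}$ gives that $H_{\mathcal{T}_{|C^*_i}}(e^{(ij)})$ is verified, that is $\alpha_{\mathcal{T}_{|C^*_i}} \max_{e \in E(\mathcal{T}_{|C^*_i})} w(e) < w(e^{(ij)})$; and since $\alpha_{\mathcal{T}_{|C^*_i}} \geq 1$ by Def.~\ref{def:homo_separ_cond}, this already yields $\max_{e \in E(\mathcal{T}_{|C^*_i})} w(e) < w(e^{(ij)})$.

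Chaining the two inequalities, I would obtain $w(e^*) \leq \max_{e \in E(\mathcal{T}_{|C^*_i})} w(e) < w(e^{(ij)})$. But $e^{(ij)}$ is itself an edge of $\mathcal{T}$, so $w(e^{(ij)}) \leq \max_{e \in \mathcal{T}} w(e) = w(e^*)$, giving the strict inequality $w(e^*) < w(e^*)$, a contradiction. Hence $e^*$ must belong to $Cut_{\mathcal{G}}(\mathcal{T})$, and since $e^*$ was an arbitrary maximizer, the entire set $\argmax_{e \in \mathcal{T}} w(e)$ is contained in $Cut_{\mathcal{G}}(\mathcal{T})$.

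I expect the only delicate point to be the existence of a cut edge incident to $C^*_i$ used in the second paragraph, which rests on connectivity of $\mathcal{T}$ together with the implicit assumption $K \geq 2$ (so that $Cut_{\mathcal{G}}(\mathcal{T}) \neq \emptyset$); everything else is a direct unwinding of the homogeneous separability inequality and the bound $\alpha_{\mathcal{T}_{|C^*_i}} \geq 1$. I would also record, for well-posedness, that $\mathcal{T}_{|C^*_i}$ is genuinely a subtree, so $E(\mathcal{T}_{|C^*_i})$ is nonempty and the $\max$ and $\min$ defining $\alpha_{\mathcal{T}_{|C^*_i}}$ are meaningful: by Th.~\ref{thm:MSTuseful} the path in $\mathcal{T}$ between any two vertices of $C^*_i$ stays inside $C^*_i$, so the induced subgraph is connected.
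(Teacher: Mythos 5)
Your proof is correct and is essentially the paper's argument in contrapositive form: the paper likewise derives $\max_{e \in E(\mathcal{T}_{|C^*_i})} w(e) < w(e^{(ij)})$ from weak homogeneity (using $\alpha_{\mathcal{T}_{|C^*_i}} \geq 1$) and concludes directly that any maximizer must lie in $Cut_{\mathcal{G}}(\mathcal{T})$, whereas you run the same inequality chain as a contradiction. Your explicit justifications of the points the paper leaves implicit — that a non-cut edge lies inside a single cluster, that connectivity of $\mathcal{T}$ with $K \geq 2$ supplies an incident cut edge, and that Th.~\ref{thm:MSTuseful} makes $\mathcal{T}_{|C^*_i}$ a genuine subtree — are all sound.
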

\begin{proof}
Let us consider $C^*_i$ a cluster of $\mathcal{G}$. As $C^*_i$ is weakly homogeneous, 
$\forall j \in [K]$ s.t. $e^{(ij)} \in Cut_{\mathcal{G}}(\mathcal{T})$, $\underset{e \in \mathcal{T}_{|C^*_i}}{\max} \ w(e) < w(e^{(ij)})$.
Hence, $\underset{e \in E(\mathcal{T})}{\argmax} \ w(e) \subset Cut_{\mathcal{G}}(\mathcal{T})$.
\end{proof}

\begin{theorem*}~\textbf{\ref{thm:thmA}}
Let us consider a graph $\mathcal{G} = (V, E, w)$ with $K$ homogeneous clusters $C^*_1, \ldots, C^*_K$ and $\mathcal{T}$ an MST of $\mathcal{G}$.
Let now assume that at step $k < K-1$, \textsc{DBMSTClu} built $k+1$ subtrees $\mathcal{C}_1, \ldots, \mathcal{C}_{k+1}$ by cutting $e_1, \ e_2, \ \ldots, \ e_k \in E$.

Then, $Cut_k := Cut_{\mathcal{G}}(\mathcal{T}) \ \backslash \ \{ e_1, \ e_2, \ \ldots, \ e_k \} \neq \emptyset \implies \DBCVI_{k+1} \geq DBCVI_k$, i.e. if there are still edges in $Cut_k$, the algorithm will continue to perform some cut.
\end{theorem*}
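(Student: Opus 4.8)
The plan is to exploit the greedy nature of \textsc{DBMSTClu}: at step $k+1$ the algorithm evaluates every admissible cut and performs the one maximizing the resulting index, provided this value is at least the current $\DBCVI_k$. Hence, to prove $\DBCVI_{k+1} \geq \DBCVI_k$ it suffices to exhibit a single edge $e \in E(\mathcal{T}) \setminus \{e_1, \ldots, e_k\}$ whose removal does not decrease the index. Since $Cut_k \neq \emptyset$, at least one subtree $\mathcal{C}$ among $\mathcal{C}_1, \ldots, \mathcal{C}_{k+1}$ contains an edge of $Cut_k$, i.e. $\mathcal{C}$ spans parts of at least two ground-truth clusters joined by a still-uncut true-cut edge. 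I would first argue, in the spirit of Lemma~\ref{lem:heaviest_edge_inCut}, that the heaviest edge $e^* = \argmax_{e \in \mathcal{C}} w(e)$ of such a subtree lies in $Cut_k$: through the homogeneous separability condition $H$, every intra-cluster edge of $\mathcal{C}$ is dominated by the inter-cluster edge linking its cluster to the remainder of $\mathcal{C}$, so the maximum is attained on a true-cut edge. In particular $e^*$ realizes $\DISP(\mathcal{C})$, and $e^*$ is my candidate cut.

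Next I would isolate the effect of removing $e^*$. Cutting $e^*$ splits $\mathcal{C}$ into $\mathcal{C}_a$ and $\mathcal{C}_b$ and turns $e^*$ into a new cut edge incident only to these two pieces, so the separation of every other current cluster is unchanged and
\begin{equation*}
\DBCVI_{k+1} - \DBCVI_k = \frac{|\mathcal{C}_a|}{N} V_C(\mathcal{C}_a) + \frac{|\mathcal{C}_b|}{N} V_C(\mathcal{C}_b) - \frac{|\mathcal{C}|}{N} V_C(\mathcal{C}),
\end{equation*}
with $|\mathcal{C}| = |\mathcal{C}_a| + |\mathcal{C}_b|$. I would then record the two monotonicity facts that govern the sign of this quantity: for fixed dispersion $V_C$ is nondecreasing in the separation, and for fixed separation it is nonincreasing in the dispersion. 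Removing the heaviest edge immediately yields $\DISP(\mathcal{C}_a), \DISP(\mathcal{C}_b) \leq \DISP(\mathcal{C}) = w(e^*)$, so the dispersion contribution can only help, and the goal reduces to showing that the size-weighted average of $V_C(\mathcal{C}_a)$ and $V_C(\mathcal{C}_b)$ is at least $V_C(\mathcal{C})$.

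The crux, and the step I expect to be hardest, is the behaviour of the separations of the two children. Because the old cut edges incident to $\mathcal{C}$ are distributed between $\mathcal{C}_a$ and $\mathcal{C}_b$ while the new boundary $e^*$ is added to both, $\SEP(\mathcal{C}_a)$ equals the minimum of $w(e^*)$ and the lightest old boundary edge inherited by $\mathcal{C}_a$, a quantity that can fall strictly below $\SEP(\mathcal{C})$ whenever $w(e^*)$ is lighter than $\mathcal{C}$'s current separation. When that occurs a single child's validity index may decrease, so one cannot argue termwise and must instead control the full size-weighted combination. This is exactly where the homogeneity hypothesis is essential: the gap it forces between inter-cluster and intra-cluster weights makes each child's dispersion small relative to the separation supplied by $e^*$, so that the gain on one piece compensates any loss on the other. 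I would therefore expand the weighted difference explicitly as a function of $w(e^*)$, the children's dispersions, and the inherited separations, and bound it using the inequalities $\alpha_{\mathcal{T}_{|C^*_i}} \max_{e} w(e) < w(e^{(ij)})$ available for every true-cut edge inside $\mathcal{C}$; obtaining a clean nonnegativity estimate from this expansion is the main technical obstacle.
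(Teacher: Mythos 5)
Your setup is exactly the paper's: you pick the heaviest edge $e^*$ of a subtree $\mathcal{C}$ meeting $Cut_k$, argue via the homogeneity condition that $e^*\in Cut_k$ (this is Lemma~\ref{lem:heaviest_edge_inCut}), note that $e^*$ realizes $\DISP(\mathcal{C})$, and reduce everything to the sign of the local difference $\Delta = \frac{|\mathcal{C}_a|}{N}V_C(\mathcal{C}_a)+\frac{|\mathcal{C}_b|}{N}V_C(\mathcal{C}_b)-\frac{|\mathcal{C}|}{N}V_C(\mathcal{C})$. But at that point you stop: you state that expanding this difference and ``obtaining a clean nonnegativity estimate from this expansion is the main technical obstacle.'' That estimate \emph{is} the theorem; announcing that homogeneity ``makes each child's dispersion small relative to the separation supplied by $e^*$'' without deriving a single inequality from $H_{\mathcal{T}_{|C_i^*}}(\cdot)$ leaves the proof unfinished. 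This is a genuine gap, not a stylistic omission.

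For comparison, the paper closes the argument not by the size-weighted compensation you propose but by a case split on the sign of $V_C(\mathcal{C})$, claiming \emph{termwise} domination in both cases. If $V_C(\mathcal{C})\le 0$, then $\SEP(\mathcal{C})\le\DISP(\mathcal{C})=w(e^*)$, so the new boundary edge cannot push either child's separation below $\SEP(\mathcal{C})$ --- the failure mode you worry about ($w(e^*)$ lighter than the parent's separation) is impossible there, and each child's index dominates the parent's directly. If $V_C(\mathcal{C})\ge 0$, then $\SEP(\mathcal{C}_a)=\SEP(\mathcal{C}_b)=w(e^*)$ exactly, each child satisfies $\SEP\ge\DISP$, and the paper asserts $V_C(\mathcal{C}_i^l)\ge V_C(\mathcal{C})$, i.e.\ $\DISP(\mathcal{C}_i^l)\,\SEP(\mathcal{C})\le w(e^*)^2$, again termwise. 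Your instinct that this second case is the delicate one is sound: that inequality is not automatic when the child still contains edges of $Cut_{\mathcal{G}}(\mathcal{T})$ of weight comparable to $w(e^*)$ while $\SEP(\mathcal{C})$ is strictly larger, and the paper asserts it without showing how homogeneity delivers it --- so your observation that a single child's index can drop identifies a real soft spot in the published argument. But diagnosing where the difficulty lies is not resolving it: you neither prove the termwise bound in either sign case nor carry out the compensated weighted estimate you propose in its place, so your attempt establishes strictly less than the paper's proof does.
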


\begin{proof}
Let note DBCVI at step $k$, $DBCVI_k = \sum_{i = 1}^{k+1} \frac{ |\mathcal{C}_i|}{N} V_C(\mathcal{C}_i)$. Let assume that $Cut_k \neq \emptyset$. Therefore, there is $e^* \in Cut_k$ and $i \in \{1, \ \ldots, \ k+1 \}$ s.t. $e^* \in E(\mathcal{C}_i)$.
Since $e^* \in Cut_{\mathcal{G}}(\mathcal{T})$, using Lem.~\ref{lem:heaviest_edge_inCut}, one can always take $e^* \in \underset{e \in E(\mathcal{C}_i)}{\argmax} \ w(e)$. Then, if we denote $\mathcal{C}_i^1$, $\mathcal{C}_i^2$ the two subtrees of $\mathcal{C}_i$ induced by the cut of $e^*$ (see Fig.~\ref{fig:proofThA} for an illustration) and $DBCVI_{k+1}(e^*)$ the associated DBCVI value,  
\begin{align}
\Delta &= DBCVI_{k+1}(e^*) - DBCVI_k \notag \\
&= \frac{|\mathcal{C}_i^1|}{N} \underbrace{\left( \frac{\SEP(\mathcal{C}_i^1) - \DISP(\mathcal{C}_i^1)}{ \max(\SEP(\mathcal{C}_i^1),\DISP(\mathcal{C}_i^1))} \right)}_{V_C(\mathcal{C}_i^1)} + \frac{|\mathcal{C}_i^2|}{N} \underbrace{\left( \frac{\SEP(\mathcal{C}_i^2) - \DISP(\mathcal{C}_i^2)}{ \max(\SEP(\mathcal{C}_i^2),\DISP(\mathcal{C}_i^2))} \right)}_{V_C(\mathcal{C}_i^2)} - \frac{|\mathcal{C}_i|}{N} \underbrace{\left( \frac{\SEP(\mathcal{C}_i) - \DISP(\mathcal{C}_i)}{ \max(\SEP(\mathcal{C}_i),\DISP(\mathcal{C}_i))} \right)}_{V_C(\mathcal{C}_i)}. \notag
\end{align}
There are two possible cases:
\begin{enumerate}
\item $V_C(\mathcal{C}_i) \leq 0$, then $\SEP(\mathcal{C}_i) \leq \DISP(\mathcal{C}_i) = w(e^*)$. 
As for $l \in \{1, 2 \}$, $\SEP(\mathcal{C}_i^l) \geq \SEP(\mathcal{C}_i)$ and $\DISP(\mathcal{C}_i^l) \leq \DISP(\mathcal{C_i})$ because $e^* \in \underset{e \in E(\mathcal{C_i})}{\argmax} \ w(e)$, then, for $l \in \{1, 2 \}$, 
$$\frac{ \SEP(\mathcal{C}_i^l) - \DISP(\mathcal{C}_i^l) }{ \max(\SEP(\mathcal{C}_i^l),  \DISP(\mathcal{C}_i^l))} \geq \frac{ \SEP(\mathcal{C}_l) - \DISP(\mathcal{C}_i) }{ \max(\SEP(\mathcal{C}_i),  \DISP(\mathcal{C}_i))} = \frac{\SEP(\mathcal{C}_i)}{ w(e)} - 1$$ and $\Delta \geq 0$.
\item $V_C(\mathcal{C}_i) \geq 0$, then $\SEP(\mathcal{C}_i) \geq \DISP(\mathcal{C}_i) = w(e^*)$ i.e. $\max( \SEP(\mathcal{C}_i), \DISP(\mathcal{C}_i)) = \SEP(\mathcal{C}_i)$, for $l \in \{1, 2 \}$, $\DISP(\mathcal{C}_i^l) \leq \DISP(\mathcal{C}_i)$ i.e. $\DISP(\mathcal{C}_i^l) \leq w(e^*)$, $\SEP(\mathcal{C}_i^l) = w(e^*)$ hence $\SEP(\mathcal{C}_i^l) \geq \DISP(\mathcal{C}_i^l)$.
Thus, $V_C(\mathcal{C}_i) = 1 - \frac{\DISP(\mathcal{C}_i)}{\SEP(\mathcal{C}_i)}$ and for $l \in \{1, 2 \}$, $V_C(\mathcal{C}_i^l) = 1 - \frac{ \DISP(\mathcal{C}_i^l)}{\SEP(\mathcal{C}_i^l)}$. Then, for $l \in \{1, 2 \}$, $V_C(\mathcal{C}_i^l) \geq V_C(\mathcal{C}_i)$ and $\Delta \geq 0$.

\end{enumerate}
For both cases, $\Delta = DBCVI_{k+1}(e^*) - DBCVI_k \geq 0$. Hence, at least the cut of $e^*$ improves the current DBCVI, so the algorithm will perform a cut at this stage.

\end{proof}

\begin{figure}[ht]
\vskip 0.2in
\begin{center}
\centerline{\includegraphics[width=0.6\columnwidth]{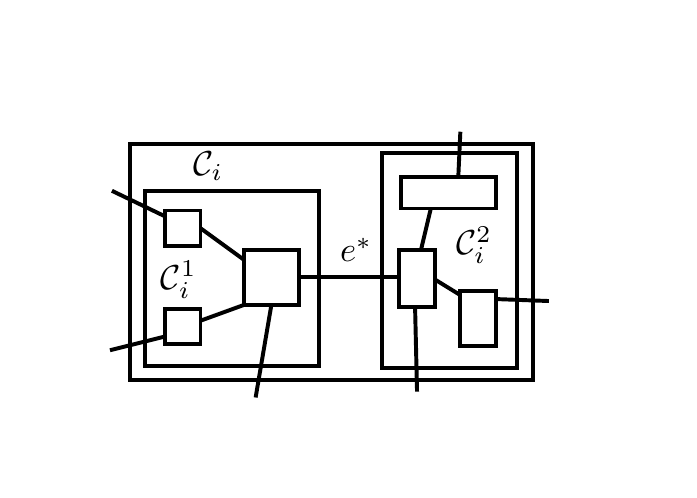}}
\vskip -0.4in
\caption{Illustration for Th.~\ref{thm:thmA}'s proof.}
\label{fig:proofThA}
\end{center}
\vskip -0.2in
\end{figure}

\subsection{Proof of Theorem~\ref{thm:thmB}}

\begin{theorem*}~\textbf{\ref{thm:thmB}}
Let us consider a graph $\mathcal{G} = (V, E, w)$ with $K$ homogeneous clusters $C^*_1, \ldots, C^*_K$ and $\mathcal{T}$ an MST of $\mathcal{G}$.

Let now assume that at step $k < K - 1$, \textsc{DBMSTClu} built $k+1$ subtrees $\mathcal{C}_1, \ldots, \mathcal{C}_{k+1}$ by cutting $e_1, \ e_2, \ \ldots, \ e_k \in E$. We still denote $Cut_k := Cut_{\mathcal{G}}(\mathcal{T}) \backslash \{ e_1, \ e_2, \ \ldots, \ e_k \}$.

Then, $Cut_k \neq \emptyset \implies \underset{e \in \mathcal{T} \backslash \{ e_1, \ e_2, \ \ldots, \ e_k \}}{\argmax} DBCVI_{k+1}(e) \subset Cut_k$ i.e. the edge that the algorithm cuts at step $k+1$ is in $Cut_k$. 
\end{theorem*}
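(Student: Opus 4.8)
The plan is to recast the problem as maximizing the gain $\Delta(e) := \DBCVI_{k+1}(e) - \DBCVI_k$, which is equivalent to maximizing $\DBCVI_{k+1}(e)$ since $\DBCVI_k$ does not depend on the candidate edge $e$. I would then classify each candidate edge $e \in \mathcal{T} \setminus \{e_1, \ldots, e_k\}$ by the current subtree $\mathcal{C}_m$ that contains it and by whether $e \in Cut_k$, and argue that no within-cluster edge can attain the maximum gain.

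First, I would record a local version of Lemma~\ref{lem:heaviest_edge_inCut}: in any subtree $\mathcal{C}_m$ that still contains at least one edge of $Cut_k$, the heaviest edge of $\mathcal{C}_m$ belongs to $Cut_k$. This is exactly the weak-homogeneity bound used in that lemma, now applied inside $\mathcal{C}_m$, since every within-cluster edge is strictly lighter than the correct edges incident to its cluster and those correct edges lie in $\mathcal{C}_m$. Combined with Theorem~\ref{thm:thmA}, cutting such a heaviest (correct) edge yields $\Delta \geq 0$, so the global maximum gain is nonnegative. Next I would dispose of within-cluster edges lying in a pure component (a $\mathcal{C}_m$ with no edge of $Cut_k$, i.e.\ a subset of a single cluster): there the separation of $\mathcal{C}_m$ is realized by an incident correct edge and is large, while cutting a within-cluster edge $e'$ turns $w(e')$ into the active separation of both new pieces; since $w(e')$ is a within-cluster weight it is strictly smaller than every incident correct edge by homogeneity, which strictly lowers the separation governing the affected validity indices without improving their dispersions, giving $\Delta(e') < 0$ by the same mechanism as in Theorem~\ref{thm:thmC}. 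Such edges therefore cannot be maximizers.

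Finally, for a within-cluster edge $e'$ sitting in a mixed component $\mathcal{C}_m$ (one containing a correct edge), I would compare it against cutting the heaviest edge $e^*$ of $\mathcal{C}_m$, which is in $Cut_k$ by the local lemma. Because both cuts happen inside $\mathcal{C}_m$, the term $-\frac{|\mathcal{C}_m|}{N} V_C(\mathcal{C}_m)$ is common and cancels, so it suffices to compare the two split contributions. Cutting $e'$ leaves $e^*$ inside one of the two pieces, giving that piece dispersion $w(e^*)$ but separation only $w(e')$ with $w(e') < w(e^*)$, hence a validity index below zero and close to $-1$; cutting $e^*$ instead gives both pieces a separation at the correct-edge scale. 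A short case analysis on the signs of the validity indices (mirroring the two cases of Theorem~\ref{thm:thmA}) then yields $\Delta(e^*) > \Delta(e')$, so $e'$ is again not a maximizer.

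Combining the three cases, every maximizer of $\DBCVI_{k+1}(\cdot)$ lies in $Cut_k$, which is the claim. I expect the main obstacle to be the strict domination in the mixed-component case: making $\Delta(e^*) > \Delta(e')$ rigorous requires tracking how the newly created separation $w(e')$ propagates to both pieces and verifying strictness across all sign configurations of the validity indices, using homogeneity to guarantee that $w(e')$ stays strictly below the relevant correct-edge weights.
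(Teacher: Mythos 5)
Your plan is correct and follows essentially the same route as the paper's own proof: the paper likewise reduces to comparing a within-cluster cut $\tilde{e}$ against the correct edge $e^*$ of the same region (organized by couples of clusters rather than your current components), relies on Lemma~\ref{lem:heaviest_edge_inCut} to place the heaviest edge in $Cut_k$, and cancels the common terms before applying the homogeneity bounds. The strict-domination step you defer is exactly the computation the paper carries out: weak homogeneity makes the correct cut's contribution $A>0$; the piece containing $e^*$ after cutting $\tilde{e}$ has $\SEP \leq w(e^*) = \DISP$, hence contribution $B_2 \leq 0$; and the remaining pure fragment satisfies $B_1 < A$ via the ratio inequality $\frac{\DISP(\mathcal{C}^*_1)}{\SEP(\mathcal{C}^*_1)} < \frac{\min_{e} w(e)}{\max_{e} w(e)}$ over $E(\mathcal{C}^*_1)$.
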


\begin{proof}
It is sufficient to show that, at step $k$, if there exists an edge $e^*$ whose cut builds two clusters, then $e^*$ maximizes DBCVI among all possible cuts in the union of itself and both resulting clusters. 
Indeed, showing this for two clusters, one can easily generalize to the whole graph as a combination
of couples of clusters (see Fig.~\ref{fig:proofThBdouble} for an illustration): if for each couple, the best local solution is in $Cut_k$, then the best general solution is necessary in $Cut_k$.

Let us consider at step $k$ of the algorithm two clusters $C^*_1$ and $C^*_2$ such that $e^*$ the edge separating them in $\mathcal{T}$ is in $Cut_k$ (see Fig.~\ref{fig:proofThB} for an illustration). For readability we denote $\mathcal{T}_{|C^*_1}=\mathcal{C}^*_1$ and $\mathcal{T}_{|C^*_2}=\mathcal{C}^*_2$
Let us proof that for all $\tilde{e} \in \mathcal{T}_{|C^*_1 \cup C^*_2}$, one has:
$DBCVI_{k+1}(e^*) > DBCVI_{k+1}(\tilde{e})$.
W.l.o.g. let assume $\tilde{e} \in \mathcal{C}^*_1$ and let denote $\mathcal{C}^*_{1,1}$ and $\mathcal{C}^*_{1,2}$ the resulting subtrees from the cut of $\tilde{e}$. We still denote $DBCVI_{k+1}(e)$ the value of the DBCVI at step $k+1$ for the cut of $e$.
\begin{align*}
\Delta &:= DBCVI_{k+1}(e^*) - DBCVI_{k+1}(\tilde{e}) \\ 
&= \underbrace{\frac{|\mathcal{C}^*_1|}{N} \left( \frac{\SEP(\mathcal{C}^*_1) - \DISP(\mathcal{C}^*_1)}{ \max( \SEP(\mathcal{C}^*_1), \DISP(\mathcal{C}^*_1))} \right) + \frac{|\mathcal{C}^*_2|}{N} \left( \frac{\SEP(\mathcal{C}^*_2) - \DISP(\mathcal{C}^*_2)}{ \max( \SEP(\mathcal{C}^*_2), \DISP(\mathcal{C}^*_2))} \right)}_{A} \\
&- \underbrace{\left( \frac{|\mathcal{C}^*_{1,1}|}{N} \left( \frac{\SEP(\mathcal{C}^*_{1,1}) - \DISP(\mathcal{C}^*_{1,1})}{ \max( \SEP(\mathcal{C}^*_{1,1}), \DISP(\mathcal{C}^*_{1,1}))} \right) + \frac{|\mathcal{C}^*_{1,2}|}{N} \left( \frac{\SEP(\mathcal{C}^*_{1,2}) - \DISP(\mathcal{C}^*_{1,2})}{ \max( \SEP(\mathcal{C}^*_{1,2}), \DISP(\mathcal{C}^*_{1,2}))} \right) \right)}_{B}
\end{align*}
By weak homogeneity of $C_1^*$ and $C_2^*$, $A = \frac{|\mathcal{C}^*_1|}{N} \left( 1 - \frac{\DISP(\mathcal{C}^*_1)}{ \SEP(\mathcal{C}^*_1)} \right) + \frac{|\mathcal{C}^*_2|}{N} \left( 1 - \frac{\DISP(\mathcal{C}^*_2)}{ \SEP(\mathcal{C}^*_2)} \right) > 0$ 

$B = \underbrace{ \frac{|\mathcal{C}^*_{1,1}|}{N} \left( \frac{\SEP(\mathcal{C}^*_{1,1}) - \DISP(\mathcal{C}^*_{1,1})}{ \max( \SEP(\mathcal{C}^*_{1,1}), \DISP(\mathcal{C}^*_{1,1}))} \right)}_{B_1} +  \underbrace{\frac{|\mathcal{C}^*_{1,2}|}{N} \left( \frac{\SEP(\mathcal{C}^*_{1,2}) - \DISP(\mathcal{C}^*_{1,2})}{ \max( \SEP(\mathcal{C}^*_{1,2}), \DISP(\mathcal{C}^*_{1,2}))} \right)}_{B_2}$

By Lem.~\ref{lem:heaviest_edge_inCut}, $e^* \in \underset{e \in E(\mathcal{T}_{|C^*_1 \cup C^*_2})}{\argmax} w(e)$ so $\DISP(\mathcal{C}^*_{1,2}) = w(e^*)$. 

Since $e^* \in Cut_{\mathcal{G}}(\mathcal{T})$, one has $w(e^*) \geq \max( \SEP(\mathcal{C}^*_1), \SEP(\mathcal{C}^*_2))$. Moreover, as $\mathcal{C}^*_2$ is a subtree of $\mathcal{C}^*_{1,2}$, then $\SEP(\mathcal{C}^*_{1,2}) \leq \SEP(\mathcal{C}^*_2)$. Thus, $w(e^*) \geq \SEP(\mathcal{C}^*_{1,2})$. 
Finally, $B_2 = \frac{|\mathcal{C}^*_{1,2}|}{N} \left( \frac{\SEP(\mathcal{C}^*_{1,2})}{ \DISP(\mathcal{C}^*_{1,2})}-1 \right) \leq 0$.

Besides, $w(\tilde{e}) \leq \SEP(\mathcal{C}_1^*) \implies \SEP(\mathcal{C}^*_{1,1}) = w(\tilde{e}) \leq \underset{e \in E(\mathcal{C}^*_1)}{\max} \ w(e)$ and $\DISP(\mathcal{C}^*_{1,1}) = \underset{e \in E(\mathcal{C}^*_{1,1})}{\max} \ w(e) \geq \underset{e \in E(\mathcal{C}^*_1)}{\min} w(e)$. 
Then, two possibilities hold:
\begin{enumerate}
\item $B_1 < 0 \implies B < 0 < A$.  
\item $B_1 \geq 0$, thus one has $B_1 = \frac{|\mathcal{C}^*_{1,1}|}{N} \left( 1 - \frac{\DISP(\mathcal{C}^*_{1,1})}{\SEP(\mathcal{C}^*_{1,1})} \right) \leq \frac{|\mathcal{C}^*_{1,1}|}{N} \left( 1 - \frac{\underset{e \in \mathcal{C}^*_1}{\min} w(e)}{\underset{e \in \mathcal{C}^*_1}{\max} w(e)} \right)$. 
Under weak homogeneity condition, there is:
$\frac{\DISP(C^*_1)}{\SEP(C^*_1)} < \frac{\underset{e \in \mathcal{C}^*_1}{\min} w(e)}{\underset{e \in \mathcal{C}^*_1}{\max} w(e)}$. Thus, 
\begin{align*}
B_1 &< \frac{|C^*_{1,1}|}{N} \left( 1 - \frac{ \DISP(\mathcal{C}^*_1)}{ \SEP(\mathcal{C}^*_1)}\right) \\
&< \frac{|\mathcal{C}^*_{1}|}{N} \left( 1 - \frac{ \DISP(\mathcal{C}^*_1)}{ \SEP(\mathcal{C}^*_1)}\right) \mbox{ because } \mathcal{C}^*_{1,1} \mbox{ is a subtree of} \mathcal{C}^*_1 \\
&< A 
\end{align*}
So, $B_1 + B_2 = B < A = DBCVI_{k+1}(e^*)$. 
\end{enumerate}
Since $B < A$, $\Delta > 0$ and $e^*$ maximizes DBCVI among all possible cuts in the union of itself and both resulting clusters. Q.E.D.
\end{proof}

\begin{figure}[ht]
\vskip 0.2in
\begin{center}
\centerline{\includegraphics[width=0.6\columnwidth]{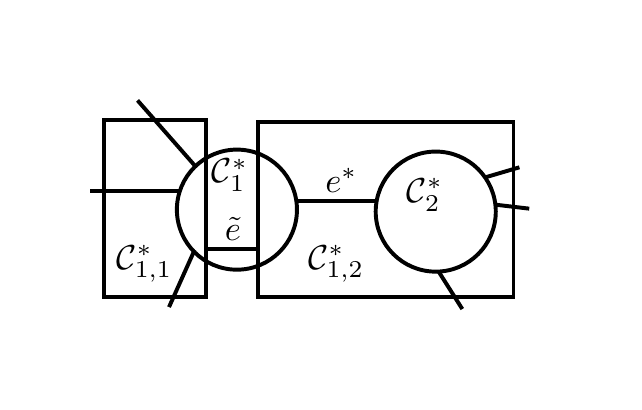}}
\vskip -0.4in
\caption{Illustration for Th.~\ref{thm:thmB}'s proof.}
\label{fig:proofThB}
\end{center}
\vskip -0.2in
\end{figure}

\begin{figure}[ht]
\vskip 0.2in
\begin{center}
\centerline{\includegraphics[width=0.6\columnwidth]{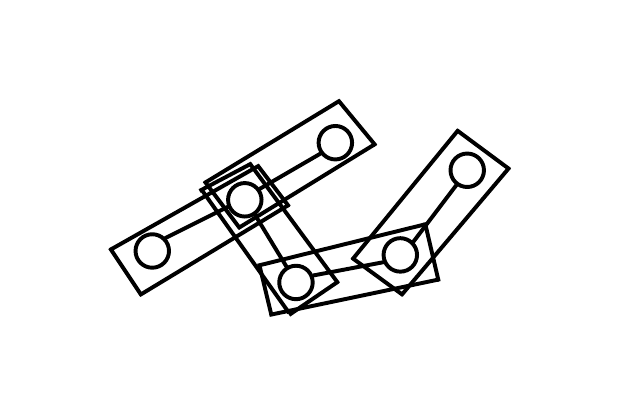}}
\vskip -0.4in
\caption{Illustration for Th.~\ref{thm:thmB}'s proof. Each circle corresponds to a cluster. The six clusters are handled within five couples of clusters.}
\label{fig:proofThBdouble}
\end{center}
\vskip -0.2in
\end{figure}

\subsection{Proof of Theorem~\ref{thm:thmC}}

\begin{theorem*}~\textbf{\ref{thm:thmC}}
Let us consider a graph $\mathcal{G} = (V, E, w)$ with $K$ weakly homogeneous clusters $C^*_1, \ldots, C^*_K$ and $\mathcal{T}$ an MST of $\mathcal{G}$.
Let now assume that at step $K-1$, \textsc{DBMSTClu} built $K$ subtrees $\mathcal{C}_1, \ldots, \mathcal{C}_{K}$ by cutting $e_1, \ e_2, \ \ldots, \ e_{K-1} \in E$. We still denote $Cut_{K-1} := Cut_{\mathcal{G}}(\mathcal{T}) \backslash \{ e_1, \ e_2, \ \ldots, \ e_{K-1} \}$.
 
Then, for all $e \in \mathcal{T} \backslash \{ e_1, \ e_2, \ \ldots, \ e_{K-1} \}$, $DBCVI_{K}(e) < DBCVI_{K-1}$ i.e. the algorithm stops: no edge gets cut during step $K$.

\begin{proof}
According to Th.~\ref{thm:thmA} and Th.~\ref{thm:thmB}, for all $k < K$, if $Cut_{k} \neq \emptyset$, the algorithm performs some cut from $Cut_{\mathcal{G}}(\mathcal{T})$. We still denote for all $j\in [K]$ $\mathcal{C}^*_j=\mathcal{T}_{|C^*_j}$.Since $|Cut_{\mathcal{G}}(\mathcal{T})| = K-1$, the $K-1$ first steps produce $K-1$ cuts from $Cut_{\mathcal{G}}(\mathcal{T})$. Therefore, $\DBCVI_{K-1} = \underset{j \in [K-1]}{\sum} \frac{|\mathcal{C}^*_j|}{N} V_C(\mathcal{C}^*_j)$.

Let be $e$ the (expected) edge cut at step $K$, splitting the tree $\mathcal{C}^*_i$ into $\mathcal{C}^*_{i,1}$ and $\mathcal{C}^*_{i,2}$. 
\begin{align*}
& \Delta = \DBCVI_{K-1} - \DBCVI_K \\
&= \frac{|\mathcal{C}^*_i|}{N} V_C(\mathcal{C}^*_i) - \frac{|\mathcal{C}^*_{i,1}|}{N} V_C(\mathcal{C}^*_{i,1}) - \frac{|\mathcal{C}^*_{i,2}|}{N} V_C(\mathcal{C}^*_{i,2}) \\
&= \frac{|\mathcal{C}^*_i|}{N} \frac{ \SEP(\mathcal{C}^*_i) - \DISP(\mathcal{C}^*_i)}{\max( \SEP(\mathcal{C}^*_i), \DISP(\mathcal{C}^*_i))} - \frac{|\mathcal{C}^*_{i,1}|}{N} \frac{ \SEP(\mathcal{C}^*_{i,1}) - \DISP(\mathcal{C}^*_{i,1})}{\max( \SEP(\mathcal{C}^*_{i,1}), \DISP(\mathcal{C}^*_{i,1}))} - \frac{|\mathcal{C}^*_{i,2}|}{N} \frac{ \SEP(\mathcal{C}^*_{i,2}) - \DISP(\mathcal{C}^*_{i,2})}{\max( \SEP(\mathcal{C}^*_{i,2}), \DISP(\mathcal{C}^*_{i,2}))}
\end{align*}
Since $C^*_i$ is a weakly homogeneous cluster, therefore $\SEP(\mathcal{C}^*_i) \geq \DISP(\mathcal{C}^*_i)$. 
Then, minimal value of $\Delta$, $\Delta_{min}$ is reached when $\SEP(\mathcal{C}^*_{i,1}) \geq \DISP(\mathcal{C}^*_{i,1})$, $\SEP(\mathcal{C}^*_{i,2}) \geq \DISP(\mathcal{C}^*_{i,2})$, $\SEP(\mathcal{C}^*_{i,1}) = \SEP(\mathcal{C}^*_{i,2}) = \underset{e' \in E(\mathcal{C}^*_i)}{\min} w(e')$, $\DISP(\mathcal{C}^*_{i,1}) = \DISP(\mathcal{C}^*_{i,2}) = \underset{e' \in E(\mathcal{C}^*_i)}{\max} w(e')$. Then,
\begin{align*}
N \times \Delta_{min} &= |\mathcal{C}^*_i| \left(1 - \frac{\DISP(\mathcal{C}^*_i)}{\SEP(\mathcal{C}^*_i)} \right) - |\mathcal{C}^*_{i,1}| \left(1 - \frac{\DISP(\mathcal{C}^*_{i,1})}{\SEP(\mathcal{C}^*_{i,1})} \right) - |\mathcal{C}^*_{i,2}| \left( 1 - \frac{\DISP(\mathcal{C}^*_{i,2})}{\SEP(\mathcal{C}^*_{i,2})} \right) \\
&= |\mathcal{C}^*_i| \left(1 - \frac{\DISP(\mathcal{C}^*_i)}{\SEP(\mathcal{C}^*_i)} \right) - |\mathcal{C}^*_{i,1}| \left(1 - \frac{\underset{e' \in E(\mathcal{C}^*_i)}{\max} w(e')}{\underset{e' \in E(\mathcal{C}^*_i)}{\min} w(e')} \right) - |\mathcal{C}^*_{i,2}| \left( 1 - \frac{\underset{e' \in E(\mathcal{C}^*_i)}{\max} w(e')}{\underset{e' \in E(\mathcal{C}^*_i)}{\min} w(e')} \right) \\
&= |\mathcal{C}^*_i| \left( - \frac{\DISP(\mathcal{C}^*_i)}{\SEP(\mathcal{C}^*_i)}   + \frac{\underset{e' \in E(\mathcal{C}^*_i)}{\max} w(e')}{\underset{e' \in E(\mathcal{C}^*_i)}{\min} w(e')} \right)
\end{align*}
By weak homogeneity condition on $C^*_i$,
$\frac{\DISP(\mathcal{C}^*_i)}{\SEP(\mathcal{C}^*_i)} < \frac{\underset{e' \in E(\mathcal{C}^*_i)}{\min} w(e'))}{\underset{e' \in E(\mathcal{C}^*_i)}{\max} w(e')} \leq \frac{\underset{e' \in E(\mathcal{C}^*_i)}{\max} w(e'))}{\underset{e' \in E(\mathcal{C}^*_i}){\min} w(e')}$. Therefore, $\Delta_{min} > 0$ and $\Delta > 0$. 
\end{proof}
\end{theorem*}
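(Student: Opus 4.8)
The plan is to combine the two preceding theorems to fix the state of the algorithm after $K-1$ cuts, and then show that any further cut can only strictly decrease the criterion. By Theorem~\ref{thm:thmB} every cut the algorithm performs lies in $Cut_{\mathcal{G}}(\mathcal{T})$, and since each cluster $\mathcal{C}^*_j := \mathcal{T}_{|C^*_j}$ is a connected subtree of $\mathcal{T}$ (Theorem~\ref{thm:MSTuseful}), contracting the clusters turns $\mathcal{T}$ into a tree on $K$ supernodes, so $|Cut_{\mathcal{G}}(\mathcal{T})| = K-1$. Hence the $K-1$ cuts $e_1,\ldots,e_{K-1}$ exhaust $Cut_{\mathcal{G}}(\mathcal{T})$, $Cut_{K-1} = \emptyset$, and the produced subtrees $\mathcal{C}_j$ coincide with the true clusters. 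Consequently any candidate edge $e \in \mathcal{T}\backslash\{e_1,\ldots,e_{K-1}\}$ is an internal edge of a single cluster $\mathcal{C}^*_i$, and cutting it splits $\mathcal{C}^*_i$ into $\mathcal{C}^*_{i,1},\mathcal{C}^*_{i,2}$ while leaving all other clusters untouched.

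Because only cluster $i$ is affected, I would write $\Delta := \DBCVI_{K-1} - \DBCVI_K(e) = A - B$, where $A = \frac{|\mathcal{C}^*_i|}{N}V_C(\mathcal{C}^*_i)$ and $B = \frac{|\mathcal{C}^*_{i,1}|}{N}V_C(\mathcal{C}^*_{i,1}) + \frac{|\mathcal{C}^*_{i,2}|}{N}V_C(\mathcal{C}^*_{i,2})$, and it suffices to show $A > B$ for every such $e$. Weak homogeneity of $C^*_i$ enters twice. First, every boundary edge of $\mathcal{C}^*_i$ (an element of $Cut_{\mathcal{G}}(\mathcal{T})$) is heavier than all internal edges, so $\SEP(\mathcal{C}^*_i) \geq \DISP(\mathcal{C}^*_i) = \max_{e'\in E(\mathcal{C}^*_i)} w(e')$ and therefore $A = \frac{|\mathcal{C}^*_i|}{N}\bigl(1 - \frac{\DISP(\mathcal{C}^*_i)}{\SEP(\mathcal{C}^*_i)}\bigr)$. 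Second, after the cut each piece inherits $e$ as an incident cut edge; since $e$ is lighter than every boundary edge, $\SEP(\mathcal{C}^*_{i,l}) = w(e) \leq \max_{e'\in E(\mathcal{C}^*_i)} w(e')$, while $\DISP(\mathcal{C}^*_{i,l}) \geq \min_{e'\in E(\mathcal{C}^*_i)} w(e')$.

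The heart of the argument is a uniform upper bound on $B$. Writing $m = \min_{e'\in E(\mathcal{C}^*_i)} w(e')$ and $M = \max_{e'\in E(\mathcal{C}^*_i)} w(e')$, I would verify by a short case split on the sign of $\SEP(\mathcal{C}^*_{i,l}) - \DISP(\mathcal{C}^*_{i,l})$ that each piece satisfies $V_C(\mathcal{C}^*_{i,l}) \leq 1 - \frac{m}{M}$: when the index is nonnegative it equals $1 - \frac{\DISP}{\SEP} \leq 1 - \frac{m}{M}$ because $\SEP = w(e) \leq M$ and $\DISP \geq m$, and when it is negative it is already below $1 - \frac{m}{M}$. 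Summing over the two pieces and using $|\mathcal{C}^*_{i,1}| + |\mathcal{C}^*_{i,2}| = |\mathcal{C}^*_i|$ gives $B \leq \frac{|\mathcal{C}^*_i|}{N}\bigl(1 - \frac{m}{M}\bigr)$, whence $N\Delta \geq |\mathcal{C}^*_i|\bigl(\frac{m}{M} - \frac{\DISP(\mathcal{C}^*_i)}{\SEP(\mathcal{C}^*_i)}\bigr)$. Finally the condition $H_{\mathcal{T}_{|C^*_i}}(e^{(ij)})$ unwinds to $\SEP(\mathcal{C}^*_i) > \frac{M^2}{m}$, so $\frac{\DISP(\mathcal{C}^*_i)}{\SEP(\mathcal{C}^*_i)} = \frac{M}{\SEP(\mathcal{C}^*_i)} < \frac{m}{M}$, giving $\Delta > 0$ and hence $\DBCVI_K(e) < \DBCVI_{K-1}$ for all $e$.

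The step I expect to be the main obstacle is precisely this uniform bound on $B$: one must argue that over all admissible splits the pieces' validity indices are jointly maximized at the extremal configuration $\SEP = M$, $\DISP = m$, which is what the per-piece monotonicity and case analysis deliver, rather than a single algebraic inequality. A secondary subtlety is a piece reduced to a singleton, where $\DISP = 0$ forces $V_C = 1$ and breaks the bound $V_C \leq 1 - \frac{m}{M}$; consistently with the paper's stated convention of neglecting noise, I would exclude such degenerate splits or absorb them into the generalized-cluster reading of the statement.
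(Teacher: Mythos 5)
Your proof is correct and follows the same overall route as the paper's: use Theorems~\ref{thm:thmA} and~\ref{thm:thmB} to conclude that the first $K-1$ cuts exhaust $Cut_{\mathcal{G}}(\mathcal{T})$, localize $\Delta = \DBCVI_{K-1} - \DBCVI_{K}(e)$ to the single affected cluster, bound the two pieces' validity indices by an extremal configuration, and close with the homogeneous-separability inequality. The substantive difference is in your favor. Writing $m = \min_{e' \in E(\mathcal{C}^*_i)} w(e')$ and $M = \max_{e' \in E(\mathcal{C}^*_i)} w(e')$, the paper asserts that $\Delta_{min}$ is attained at $\SEP(\mathcal{C}^*_{i,l}) = m$, $\DISP(\mathcal{C}^*_{i,l}) = M$; this has the extremes swapped, since that configuration drives the pieces' indices \emph{down} and hence maximizes rather than minimizes $\Delta$, so the paper's displayed $\Delta_{min}$ is not actually a lower bound on $\Delta$ (in concrete examples it exceeds the trivial cap $2|\mathcal{C}^*_i|/N$). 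Your per-piece bound $V_C(\mathcal{C}^*_{i,l}) \leq 1 - \frac{m}{M}$, obtained at $\SEP = M$, $\DISP = m$ via the case split on the sign of $\SEP - \DISP$, identifies the correct worst case, and your closing step $\DISP(\mathcal{C}^*_i)/\SEP(\mathcal{C}^*_i) < \frac{m}{M}$, which follows from $\SEP(\mathcal{C}^*_i) > \frac{M^2}{m}$, is exactly the strict inequality the paper ultimately invokes in its final chain; your argument is thus the corrected execution of the paper's intended proof. Your singleton caveat is also genuine rather than cosmetic: a singleton piece has $\DISP = 0$ and $V_C = 1$, and for small clusters one can exhibit weakly homogeneous configurations where cutting a leaf edge then \emph{increases} the DBCVI, so the theorem really does rely on the paper's stated convention that singletons are treated as noise or generalized clusters --- making that exclusion explicit, as you do, is the right move, whereas the paper's proof silently assumes both pieces contain an edge.
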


\section{Proofs regarding the accuracy of \textsc{PTClust}}
\subsection{Proof of Theorem~\ref{thm:thmD}}

\begin{theorem*}~\textbf{\ref{thm:thmD}} 
Let us consider a graph $\mathcal{G} = (V, E, w)$ with $K$ strongly homogeneous clusters $C^{*}_1, \ldots, C^{*}_K$ and $\mathcal{T} = \textsc{PAMST}( \mathcal{G}, u_{\mathcal{G}}, w, \epsilon)$, $\epsilon > 0$.
$\mathcal{T}$ has a partitioning topology with probability at least
\begin{equation*}
1 - \sum^{K}_{i = 1} ( |C^*_i| - 1) e^{ - \frac{ \epsilon}{2 \Delta u_{\mathcal{G}} (|V| - 1)} ( \bar{\alpha}_i \underset{e \in  E(\mathcal{G}_{|C^*_i}) }{\max( w(e))} - \underset{e \in  E(\mathcal{G}_{|C^*_i}) }{\min{(w(e))}} ) + \ln( |E|) }
\end{equation*}
\end{theorem*}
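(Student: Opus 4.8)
The plan is to bound the probability of the complementary event, namely that \textsc{PAMST} returns a spanning tree $T$ that is \emph{not} a partitioning topology. I would first reduce this to a statement about the per-iteration choices of the Exponential mechanism inside \textsc{PAMST}, then union-bound over the relevant iterations, and finally control a single bad iteration using the accuracy guarantee of the Exponential mechanism (Th.~\ref{tradeoffexponential}) together with the strong homogeneity of the clusters.

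\textbf{Reduction.} Since \textsc{PAMST} grows the tree in Prim-like fashion, one node at a time, its output restricts to a connected subtree on each cluster $C^*_i$ exactly when, at every iteration whose added node stays inside a not-yet-completed cluster, the mechanism picks an intra-cluster edge rather than crossing prematurely to another cluster. A spanning tree with a partitioning topology contains precisely $\sum_{i=1}^K (|C^*_i|-1)$ intra-cluster edges (spanning each $C^*_i$) together with $K-1$ inter-cluster edges; hence there are exactly $\sum_{i=1}^K(|C^*_i|-1)$ iterations at which a correct intra-cluster choice is required, and a failure of the partitioning topology forces at least one of these iterations to select an inter-cluster edge while an intra-cluster edge was still available in the candidate set $\mathcal{R}_{S_V}$. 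This is exactly the index set appearing in the union bound of the statement.

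\textbf{Single-iteration bound.} Fix such an iteration, with the frontier inside cluster $C^*_i$. The utility function $u_G(w,r) = -|w(r) - \min_{r'\in\mathcal{R}}w(r')|$ gives $OPT_{u_G}(w)=0$, attained by the lightest candidate edge; under strong homogeneity that lightest candidate is intra-cluster, so the optimum is itself a correct choice. Any competing inter-cluster candidate $e^{(ij)}$ satisfies, via the condition $H_{\mathcal{T}_{|C^*_i}}(e^{(ij)})$ applied to a spanning tree whose restriction to $C^*_i$ realizes the ratio $\bar{\alpha}_i$ along with the heaviest intra-cluster edge, the lower bound $w(e^{(ij)}) \geq \bar{\alpha}_i \max_{e\in E(\mathcal{G}_{|C^*_i})}w(e)$, so its utility is at most $-(\bar{\alpha}_i\max_{e\in E(\mathcal{G}_{|C^*_i})}w(e) - \min_{e\in E(\mathcal{G}_{|C^*_i})}w(e))$. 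Applying Th.~\ref{tradeoffexponential} with the per-step privacy parameter $\epsilon/(|V|-1)$ and range size $|\mathcal{R}|\le|E|$, and taking the accuracy threshold equal to this utility gap, the probability that the selected edge falls below the gap — in particular the probability of selecting any inter-cluster edge — is at most
$$|E|\exp\left(-\frac{\epsilon\left(\bar{\alpha}_i \max_{e\in E(\mathcal{G}_{|C^*_i})}w(e) - \min_{e\in E(\mathcal{G}_{|C^*_i})}w(e)\right)}{2\Delta u_{\mathcal{G}}(|V|-1)}\right).$$

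\textbf{Combination and main obstacle.} Summing this single-iteration failure probability over the $|C^*_i|-1$ critical iterations of each cluster $C^*_i$ and over all $K$ clusters, then subtracting from $1$ and writing $|E|=e^{\ln|E|}$, yields the claimed lower bound. I expect the main obstacle to be the single-iteration weight-gap inequality: one must transfer the strong homogeneity condition, stated per spanning tree, to the partial random tree that \textsc{PAMST} is actually building, so as to guarantee $w(e^{(ij)})\geq \bar{\alpha}_i\max_{e\in E(\mathcal{G}_{|C^*_i})}w(e)$ for every competing inter-cluster edge, and \emph{simultaneously} control the candidate minimum $\min_{r'\in\mathcal{R}}w(r')$ from above so that the resulting gap is at least $\bar{\alpha}_i\max - \min$; balancing these two bounds is the delicate quantitative step. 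A secondary point requiring care is making the reduction fully rigorous — ensuring the union-bound index set is exactly $\sum_i(|C^*_i|-1)$ and that potential re-entries into an already-visited cluster are correctly accounted for.
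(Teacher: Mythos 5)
Your proposal follows essentially the same route as the paper's proof: you reduce failure of the partitioning topology to per-iteration events inside \textsc{PAMST} (the paper's event ``once an intra-cluster edge of $C^*_i$ is chosen, the next choice touching $C^*_i$ stays intra-cluster''), union-bound over the $\sum_{i=1}^{K}(|C^*_i|-1)$ critical steps, and control each step via the Exponential-mechanism accuracy bound (Th.~\ref{tradeoffexponential}) with $OPT_{u_{G}}(w)=0$, per-step privacy parameter $\epsilon/(|V|-1)$, range size $|\mathcal{R}_k|\leq|E|$, and the strong-homogeneity utility gap $\bar{\alpha}_i\max_{e\in E(\mathcal{G}_{|C^*_i})}w(e)-\min_{e\in E(\mathcal{G}_{|C^*_i})}w(e)$, exactly as in the supplementary proof. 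The ``main obstacle'' you flag---transferring the per-spanning-tree homogeneity condition to the partially built random tree and controlling $\min_{r'\in\mathcal{R}_k}w(r')$---is precisely the step the paper itself dispatches in one line (``using the strong homogeneity of the clusters''), so your argument matches the paper's at the same level of rigor.
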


\begin{proof}
Let $\mathcal{T} = \textsc{PAMST}( \mathcal{G}, u_{\mathcal{G}}, w, \epsilon)$, $\{\mathcal{R}_{1},..., \mathcal{R}_{|V|-1}\}$ denotes the ranges used in the successive calls of the Exponential mechanism in \textsc{PAMST}$(\mathcal{G},u_{\mathcal{G}},w,\epsilon)$, $r_k=\mathcal{M}_{Exp}(\mathcal{G}, w, u_{\mathcal{G}}, \mathcal{R}_k, \underbrace{\frac{\epsilon}{|V| - 1}}_{\epsilon'} )$,  and $\textnormal{Steps}(C^*_i)$ the set of steps $k$ of the algorithm were $\mathcal{R}_k$ contains at least one edges from $\mathcal{G}_{|C^*_i}$. Finally for readability we denote $u_k= u_{\mathcal{G}}(w,r_k)$
\begin{align*} 
& \mathbb{P}[\mathcal{T} \mbox{ has a partitioning topology}] \\
=& \mathbb{P}[ \underbrace{\forall i,j \in [K], \ i \neq j, \ | \{  (u, v) \in E(\mathcal{T}), \ u \in C^*_i, \ v \in C^*_j \} | = 1}_{A} ] = 1 - \mathbb{P}[\neg A ]
\end{align*}
If we denote $B = `` \forall i \in [K], \ \forall k >1 \in \textnormal{Steps}(C^*_i), \mbox{ if } r_{k-1} \in E(\mathcal{G}_{|C^*_i}) \mbox{ then } r_{k} \in E(\mathcal{G}_{|C^*_i})"$
One easily has: $ B \implies A $, therefore $\mathbb{P}[ \neg A] \leq \mathbb{P}(\neg B)$.
Moreover, by using the privacy/accuracy trade-off of the exponential mechanism, one has $$\forall t  \in \mathbb{R}, \forall i \in [K], \forall k \in \mbox{Steps}(C_i^*) \ \mathbb{P} \left[\underbrace{u_k \leq -\frac{2\Delta u_{\mathcal{G}}}{\epsilon'}(t + \ln|\mathcal{R}_k|)}_{A_{k}(t)} \right] \leq \exp(-t). $$
Moreover one can major $ \mathbb{P}[\neg B] $ as follows
\begin{align*} 
 &\mathbb{P}\left[\exists i \in [K], \exists k \in \textnormal{Steps}(C^*_i) \mbox{ s.t } r_{k-1} \in E(\mathcal{G}_{|C^*_i}) \mbox{ and } r_{k} \notin E(\mathcal{G}_{|C^*_i})\right] \\
\intertext{By using the union bound, one gets }
\leq & \underset{i \in [K]}{\sum} \mathbb{P}\left[ \exists k \in \textnormal{Steps}(C^*_i) \textnormal{ s.t } r_{k-1} \in E(\mathcal{G}_{|C^*_i}) \mbox{ and } r_{k} \notin E(\mathcal{G}_{|C^*_i}) \right] \\
\intertext{Using the strong homogeneity of the clusters, one has }
= &  \underset{i \in [K]}{\sum} \mathbb{P}\left[ \exists k \in \textnormal{Steps}(C^*_i) \textnormal{ s.t } u_k \leq -| \bar{\alpha}_i \underset{e \in  E(\mathcal{G}_{|C^*_i})  }{\max w(e)} -  \underset{ r\in \mathcal{R}_k}{\min w(r)}   | \right] \\
\leq &  \underset{i \in [K]}{\sum} \mathbb{P}\left[ \exists k \in \textnormal{Steps}(C^*_i) \textnormal{ s.t } u_k \leq -| \bar{\alpha}_i \underset{e \in  E(\mathcal{G}_{|C^*_i})  }{\max w(e)} -  \underset{e \in  E(\mathcal{G}_{|C^*_i})  }{\min w(e)}   | \right] \\ 
\intertext{ By setting $t_{k,i} = \frac{ \epsilon'}{2 \Delta u_{\mathcal{G}} } ( \bar{\alpha}_i \underset{e \in  E(\mathcal{G}_{|C^*_i}) }{\max( w(e))} - \underset{e \in  E(\mathcal{G}_{|C^*_i}) }{\min{(w(e))}} ) + \ln( |\mathcal{R}_k|)$ one gets } 
= &  \underset{i \in [K]}{\sum} \mathbb{P}\left[ \exists k \in \textnormal{Steps}(C^*_i) \textnormal{ s.t } A_k(t_{k,i}) \right]\\ 
\intertext{ Since for all $i \in [K]$, and $k \in \textnormal{Steps}(C^*_i)$, $|\mathcal{R}_k| \leq |E|$, and using a union bound, one gets }
\leq & \underset{i \in [K]}{\sum} \ \underset{k \in \textnormal{Steps}(C^*_i)}{\sum \mathbb{P} } \left[ A_k(t_{k,i}) \right]
\leq \underset{i \in [K]}{\sum} \ \underset{k \in \textnormal{Steps}(C^*_i)}{\sum \exp}(-t_{i,k})\\
\leq &  \sum^{K}_{i = 1} ( |C^*_i| - 1) \exp \left( - \frac{ \epsilon}{2 \Delta u_{\mathcal{G}} (|V| - 1)} \left( \bar{\alpha}_i \underset{e \in  E(\mathcal{G}_{|C^*_i}) }{\max w(e)} - \underset{e \in  E(\mathcal{G}_{|C^*_i} ) }{\min{w(e)}} \right) + \ln( |E|) \right) 
\end{align*}      
\end{proof} 

\subsection{Proof of Theorem~\ref{thm:tradeoff}}

Let recall the theorem from S. Kotz on the Laplace distribution and generalizations~(2001):
\begin{theorem} \label{thm:momentorder}
Let $ n \in \mathbb{N}$, $ (X_{i})_{i \in [n] } \underset{iid}{\sim} Lap(\theta,s)$, denoting $X_{r:n}$ the order statistic of rank $r$ one has for all $ k \in \mathbb{N}, $

$$\mathbb{E}\left[\left( X_{r:n} - \theta\right)^k\right] = s^k \frac{n! \Gamma(k+1)}{(r-1)!(n-r)!} \underset{\alpha(n,k)}{\underbrace{\left( (-1)^k \sum_{j=0}^{n-r} a_{j,r,k } + \sum_{j=0}^{r-1} b_{j,r,k} \right)}}$$
\end{theorem}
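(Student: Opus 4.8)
The plan is to compute the moment directly from the density of the order statistic $X_{r:n}$, exploiting the fact that the Laplace cumulative distribution function takes a clean exponential form on each side of its location parameter $\theta$. First I would recall the standard formula for the density of the $r$-th order statistic of $n$ i.i.d. variables with density $f$ and c.d.f. $F$, namely
$$f_{r:n}(x) = \frac{n!}{(r-1)!(n-r)!} F(x)^{r-1}\bigl(1-F(x)\bigr)^{n-r} f(x),$$
so that $\mathbb{E}\left[(X_{r:n}-\theta)^k\right]$ is just the integral of $(x-\theta)^k$ against $f_{r:n}$. Substituting $y=(x-\theta)/s$ reduces the problem to the standard Laplace law ($\theta=0$, $s=1$) and factors out the prefactor $s^k$, leaving the combinatorial constant $n!/\bigl((r-1)!(n-r)!\bigr)$ untouched.

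The key structural step is to split the remaining integral at $y=0$, since the Laplace density and c.d.f. are piecewise. For $y\le 0$ one has $F(y)=\tfrac12 e^{y}$ and $f(y)=\tfrac12 e^{y}$, while for $y>0$ one has $1-F(y)=\tfrac12 e^{-y}$ and $f(y)=\tfrac12 e^{-y}$; in each regime the awkward factor $\bigl(1-F\bigr)^{n-r}$ (resp. $F^{r-1}$) is the only piece that is not already a pure exponential. I would then expand that single factor with the binomial theorem, $\bigl(1-\tfrac12 e^{\pm y}\bigr)^{m}=\sum_{j}\binom{m}{j}(-\tfrac12)^{j}e^{\pm jy}$, turning each half-line integral into a finite sum of elementary integrals of the form $\int_0^\infty u^k e^{-mu}\,du = \Gamma(k+1)/m^{k+1}$. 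On the lower half the change of variable $u=-y$ produces the factor $(-1)^k$ that decorates the first sum in the statement.

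Collecting the two contributions then yields exactly the claimed expression, with the coefficients identified as $a_{j,r,k}=\binom{n-r}{j}(-1)^{j}/\bigl(2^{r+j}(r+j)^{k+1}\bigr)$ from the lower-half integral and $b_{j,r,k}=\binom{r-1}{j}(-1)^{j}/\bigl(2^{n-r+1+j}(n-r+1+j)^{k+1}\bigr)$ from the upper-half integral, once the common factor $\Gamma(k+1)$ is pulled out of both sums. I do not expect a genuine obstacle here: the computation is elementary once the split at $\theta$ is made. The only points requiring care are bookkeeping ones — tracking the sign $(-1)^k$ that arises solely from the negative half-line, correctly collapsing products such as $(\tfrac12 e^{y})^{r-1}\cdot\tfrac12 e^{y}=(\tfrac12)^r e^{ry}$ so that the exponential rates $r+j$ and $n-r+1+j$ emerge cleanly, and keeping the powers of $\tfrac12$ consistent between the two halves.
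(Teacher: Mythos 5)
The paper offers no proof of this statement: it is quoted verbatim as a known result from Kotz, Kozubowski and Podg\'orski, \emph{The Laplace Distribution and Generalizations} (2001), with the coefficients $a_{j,r,k}$ and $b_{j,r,k}$ left undefined. Your derivation is correct and is essentially the standard one found in that reference: the order-statistic density $f_{r:n}(x)=\frac{n!}{(r-1)!(n-r)!}F(x)^{r-1}(1-F(x))^{n-r}f(x)$, rescaling by $y=(x-\theta)/s$ to extract $s^{k}$, splitting at $\theta$ where the Laplace c.d.f.\ changes form, a finite binomial expansion of the single non-exponential factor on each half-line, and gamma integrals $\int_{0}^{\infty}u^{k}e^{-mu}\,du=\Gamma(k+1)/m^{k+1}$, with $(-1)^{k}$ arising from the negative half-line; your identified coefficients $a_{j,r,k}=\binom{n-r}{j}(-1)^{j}/\bigl(2^{r+j}(r+j)^{k+1}\bigr)$ and $b_{j,r,k}=\binom{r-1}{j}(-1)^{j}/\bigl(2^{n-r+1+j}(n-r+1+j)^{k+1}\bigr)$ agree with the cited source, so your write-up in fact supplies the definitions the paper's statement omits.
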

 
\begin{theorem*}~\textbf{\ref{thm:tradeoff}}
Let us consider a graph $\mathcal{G} = (V, E, w)$ with $K$ strongly homogeneous clusters $C^{*}_1, \ldots, C^{*}_K$ and $T = \textsc{PAMST}( \mathcal{G}, u_{\mathcal{G}}, w, \epsilon)$, and $\mathcal{T'}=\mathcal{M}_{w.r}(T,w_{|T},s,\tau,p)$ with $s<<p,\tau$. Given some cluster $C^{*}_i$, and $j \neq i$ s.t $e^{(ij)} \in Cut_{\mathcal{G}}(\mathcal{T})$, if $H_{\mathcal{T}_{|C_i^*}}(e^{(ij)})$ is verified, then $H_{\mathcal{T'}_{|C_i^*}}(e^{(ij)})$ is verified with probability at least $$ 1 - \frac{\Lambda_1 + (\theta_{(ij)}^2 + \delta) \Lambda_2 - (\Lambda_3^2 +\theta_{(ij)}^2 \Lambda_4^2 )}{\Lambda_1 + (\theta_{(ij)}^2 + \delta) \Lambda_2 + 2\Lambda_3 \Lambda_4} $$ 

with the following notations:
\begin{itemize}
\item $\delta = \frac{s}{p}$, $\theta_{\min}= \frac{\underset{e\in E(\mathcal{T})}{\min} w(e) + \tau}{p}$ 
\item $\theta_{\max}= \frac{\underset{e\in E(\mathcal{T})}{\max} w(e) + \tau}{p}$,$\theta_{(ij)}= \frac{w(e^{(ij)} + \tau}{p}$
\item $ \Lambda_1 = 24 \delta^4 n \alpha(n,4) + 12\theta_{\max}\delta^3n\alpha(n,3)+12\theta_{\max}^2 \delta^2n\alpha(n,2) + 4 \theta_{\max}^3\delta n\alpha(n,1) +\theta_{\max}^4$
\item $ \Lambda_2 =2\delta^2n\alpha(1,2) + 2\theta_{\min}\delta n\alpha(1,1) +\theta_{\min}^2$
\item $ \Lambda_3 =2\delta^2n\alpha(n,2) + 2\theta_{\max}\delta n\alpha(n,1) +\theta_{\max}^2$
\item $ \Lambda_4 = \delta n\alpha(1,1) +\theta_{\min}$
\end{itemize}
\end{theorem*}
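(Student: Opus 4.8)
The plan is to rewrite the separability condition $H_{\mathcal{T'}_{|C_i^*}}(e^{(ij)})$ as a single scalar inequality between the privately released weights, reduce that inequality via stochastic domination to a statement about extreme order statistics of independent Laplace variables, and finally control the corresponding probability with a one-sided Chebyshev (Cantelli) inequality whose two moments are made explicit through Theorem~\ref{thm:momentorder}.

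First I would unfold Definition~\ref{def:homo_separ_cond}. Writing $w'(e) = (w(e) + Y_e + \tau)/p$ with $Y_e \sim Lap(0,s)$ i.i.d. for the output of the Weight-Release mechanism, and using that $s \ll \tau$ guarantees all released weights to be positive with overwhelming probability, $H_{\mathcal{T'}_{|C_i^*}}(e^{(ij)})$ is equivalent to
$$\Big( \max_{e \in E(\mathcal{T'}_{|C_i^*})} w'(e) \Big)^2 < \Big( \min_{e \in E(\mathcal{T'}_{|C_i^*})} w'(e) \Big)\, w'(e^{(ij)}).$$
Since every intra-cluster location satisfies $\theta_{\min} \le (w(e)+\tau)/p \le \theta_{\max}$, the maximum released intra-cluster weight is stochastically dominated by $\max_j Y_j$ and the minimum dominates $\min_j Z_j$, where the $n = |C_i^*|-1$ variables $Y_j \sim Lap(\theta_{\max},\delta)$, $Z_j \sim Lap(\theta_{\min},\delta)$ and $X^{out} = w'(e^{(ij)}) \sim Lap(\theta_{(ij)},\delta)$ are exactly the quantities of the statement. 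Because the indicator of the above event is decreasing in the maximum and increasing in the minimum and in $w'(e^{(ij)})$, these dominations let me bound the probability of $H_{\mathcal{T'}_{|C_i^*}}(e^{(ij)})$ from below by $\mathbb{P}[\varphi < 0]$ with $\varphi = (\max_j Y_j)^2 - (\min_j Z_j)\,X^{out}$.

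Next I would apply Cantelli's inequality. The hypothesis that $H_{\mathcal{T}_{|C_i^*}}(e^{(ij)})$ already holds on the unperturbed tree is the qualitative reason that $\mathbb{E}(\varphi) < 0$; under this sign condition,
$$\mathbb{P}[\varphi \ge 0] = \mathbb{P}[\varphi - \mathbb{E}(\varphi) \ge -\mathbb{E}(\varphi)] \le \frac{\mathbb{V}(\varphi)}{\mathbb{V}(\varphi) + \mathbb{E}(\varphi)^2},$$
which yields the simplified bound $1 - \mathbb{V}(\varphi)/(\mathbb{V}(\varphi)+\mathbb{E}(\varphi)^2)$. To recover the complete statement it remains to evaluate the two moments. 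Treating the three families as independent and expanding $\max_j Y_j$ and $\min_j Z_j$ around their location parameters, I would invoke Theorem~\ref{thm:momentorder} for the maximum (rank $r=n$) and the minimum (rank $r=1$) of $n$ i.i.d. Laplace variables: the moments up to order four of the maximum give $\Lambda_1 = \mathbb{E}[(\max_j Y_j)^4]$ and $\Lambda_3 = \mathbb{E}[(\max_j Y_j)^2]$, and those up to order two of the minimum give $\Lambda_2 = \mathbb{E}[(\min_j Z_j)^2]$ and $\Lambda_4 = \mathbb{E}[\min_j Z_j]$. Independence then yields $\mathbb{E}(\varphi) = \Lambda_3 - \theta_{(ij)}\Lambda_4$ and $\mathbb{V}(\varphi) = (\Lambda_1 - \Lambda_3^2) + \big((\theta_{(ij)}^2+\delta)\Lambda_2 - \theta_{(ij)}^2\Lambda_4^2\big)$, which is the numerator of the claimed bound, while $\mathbb{V}(\varphi)+\mathbb{E}(\varphi)^2$ supplies its denominator.

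The main obstacle is this last step, the order-statistic moment bookkeeping. In particular $\mathbb{V}\big((\max_j Y_j)^2\big)$ requires the fourth moment of the maximum, so four distinct evaluations of the constant $\alpha(\cdot,\cdot)$ of Theorem~\ref{thm:momentorder} must be assembled with the binomial factors coming from expanding $(\theta_{\max} + (\max_j Y_j - \theta_{\max}))^4$, and the product term $(\min_j Z_j)X^{out}$ must be treated through $\mathbb{E}[(X^{out})^2]$. A secondary delicate point is to justify that the stochastic-domination reduction preserves the inequality after squaring, which is precisely where the positivity of the released weights, guaranteed by the regime $s \ll p,\tau$, is needed.
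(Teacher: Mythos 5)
Your proposal follows essentially the same route as the paper's own proof: reformulating $H_{\mathcal{T'}_{|C_i^*}}(e^{(ij)})$ as the scalar event $(\max w')^2 < (\min w')\,w'(e^{(ij)})$, lower-bounding its probability via stochastic domination by the variables $Y_j \sim Lap(\theta_{\max},\delta)$, $Z_j \sim Lap(\theta_{\min},\delta)$, $X^{out} \sim Lap(\theta_{(ij)},\delta)$, applying the one-sided Chebyshev (Cantelli) inequality to $\varphi = (\max_j Y_j)^2 - (\min_j Z_j)X^{out}$ with $\mathbb{E}(\varphi)\le 0$, and instantiating the moments through Theorem~\ref{thm:momentorder}. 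The proposal is correct, and in fact makes explicit the order-statistic moment bookkeeping (identifying $\Lambda_1,\Lambda_3$ with the fourth and second moments of the maximum and $\Lambda_2,\Lambda_4$ with those of the minimum) that the paper's proof leaves implicit in its final sentence.
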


\begin{proof}
Let $ \tau >0$ and $p>1$, according to the weight-release mechanism, all the randomized edge weights $w'(e) \mbox{ with }e \in E(\mathcal{T'}) $ are sampled from independents Laplace distributions $Lap( \frac{w(e) + \tau}{p}, \frac{s}{p}).$  Given some cluster $C^{*}_i$, and $j \neq i$ s.t $e^{(ij)} \in Cut_{\mathcal{G}}(\mathcal{T})$, $H_{\mathcal{T}_{|C_i^*}}(e^{(ij)})$ is verified. Finding the probability that $H_{\mathcal{T'}_{|C_i^*}}(e^{(ij)})$ is verified is equivalent to find the probability $\mathbb{P}\left[ \frac{\underset{e\in E\left(\mathcal{C}_i^*\right)}{(\max X_e)^2}}{\underset{e \in E\left(\mathcal{C}_i^*	\right)}{\min X_e}} < X^{out}  \right]$ with $X_e \underset{indep}{\sim} Lap( \frac{w(e) + \tau}{p}, \frac{s}{p}) \mbox{ and } X^{out}\sim Lap( \frac{w(e^{(ij)}) + \tau}{p}, \frac{s}{p}).$ Denoting with $Y_i \underset{iid}{\sim} Lap\left(\theta_{\max}, \delta\right), Z_i \underset{iid}{\sim} Lap\left( \theta_{\min}, \delta\right) \mbox{ and } X^{out}\sim Lap(\theta_{(ij)}, \delta),$ one can lower bounded this probability  by $\mathbb{P}\left[ \frac{\underset{i \in [|C_i^*|-1] }{(\max Y_i)^2}}{\underset{i \in [|C_i^*|-1] }{\min Z_i}} < X^{out}  \right].$ 
Choosing $\tau$ big enough s.t $\underset{i \in [|C_i^*|-1] }{\min Z_i} < 0$ is negligible, one has 
\begin{align*} &\mathbb{P}\left[ \frac{\underset{i \in [|C_i^*|-1] }{(\max Y_i)^2}}{\underset{i \in [|C_i^*|-1] }{\min Z_i}} < X^{out}  \right] \\
= &\mathbb{P}\left[\underset{\varphi}{\underbrace{\underset{i \in [|C_i^*|-1] }{(\max Y_i)^2} - \underset{i \in [|C_i^*|-1] }{\min Z_i} \times X^{out}}} <0  \right]. 
\end{align*}
Moreover since $\tau$, $p$ $> >$  $s$, one has $\mathbb{E}(\varphi) \leq 0.$ Therefore, 
\begin{align*}
\mathbb{P}\left[ \varphi < 0 \right] =& \mathbb{P}\left[ \varphi - \mathbb{E}(\varphi) < \underset{\geq 0}{\underbrace{- \mathbb{E}(\varphi) }}\right] \\ 
=& 1 - \mathbb{P}\left[ \varphi - \mathbb{E}(\varphi) > - \mathbb{E}(\varphi) \right]
\intertext{Using the one-sided Chebytchev inequality, one gets}
\geq & 1 - \frac{\mathbb{V}(\varphi)}{\mathbb{V}(\varphi) + \mathbb{E}(\varphi)^2} =  1 - \frac{\mathbb{V}(\varphi)}{ \mathbb{E}(\varphi^2)}
\end{align*}
By giving an analytic form to $\mathbb{E}(\varphi) \mbox{ and }  \mathbb{V}(\varphi)$ by using Theorem~\ref{thm:momentorder} one gets the expected result.
\end{proof}

\end{document}